\documentclass[10pt,a4paper]{amsart}
\usepackage{amssymb,amsmath}
\usepackage{graphicx}
\usepackage{bm,bbm}
\usepackage{color}
\usepackage{tikz}
\usepackage{ifthen}
\usetikzlibrary{snakes}
\usetikzlibrary{patterns}
\usepackage{pgfplots}
\usepackage{enumitem}
\usepackage{array,blkarray}
\usepackage{bigdelim}
\usepackage{listings} 
\usepackage[hidelinks]{hyperref}
\usepackage{amssymb}
\usepackage{isotope}
\usepackage[]{placeins}
\usepackage{comment}
\newtheorem{theorem}{Theorem}
\newtheorem{proposition}[theorem]{Proposition}
\newtheorem{corollary}[theorem]{Corollary}
\newtheorem{conjecture}[theorem]{Conjecture}
\newtheorem{conclusion}[theorem]{Conclusion}
\theoremstyle{definition}

\newtheorem{example}[theorem]{Example}
\theoremstyle{lemma}
\newtheorem{lemma}[theorem]{Lemma}

\theoremstyle{remark}
\newtheorem{remark}[theorem]{Remark}

\numberwithin{theorem}{section}
\numberwithin{equation}{section}
\numberwithin{table}{section}
\numberwithin{figure}{section}
\newcommand{\quotes}[1]{``#1''}
\newcommand{\ugs}{u_{\mbox{\tiny\rm gs}}}
\newcommand{\ues}{u_{\mbox{\tiny\rm es}}}
\newcommand{\ulin}{u_{\mbox{\tiny\rm lin}}}
\newcommand{\lambdags}{\lambda_{\mbox{\tiny\rm gs}}}
\newcommand{\lambdalin}{\lambda_{\mbox{\tiny\rm lin}}}
%
%
\definecolor{myBlue}{RGB}{113,104,238} 
\definecolor{myGreen}{RGB}{154,205,50} 
\definecolor{myGreen2}{RGB}{114,175,30} 
\definecolor{myRed}{RGB}{180,50,50}  
\definecolor{myOrange}{RGB}{225,92,22} 
\definecolor{lgray}{RGB}{200,200,200} 
\definecolor{llgray}{RGB}{155,155,155} 
\definecolor{alphaCol}{RGB}{230,190,50} 
\definecolor{betaCol}{RGB}{90,80,180} 
\definecolor{mycolor1}{rgb}{0.00000,0.44700,0.74100}%
\definecolor{mycolor2}{rgb}{0.85000,0.32500,0.09800}%
\definecolor{mycolor3}{rgb}{0.92900,0.69400,0.12500}%
\definecolor{mycolor4}{rgb}{0.49400,0.18400,0.55600}%
\definecolor{mycolor5}{rgb}{0.46600,0.67400,0.18800}%
\definecolor{mycolor6}{rgb}{0.30100,0.74500,0.93300}%
\definecolor{mycolor7}{rgb}{0.63500,0.07800,0.18400}%
%

\newcommand\N{\mathbb N}

\newcommand\R{\mathbb R}


\DeclareMathOperator{\supp}{supp}
\DeclareMathOperator{\id}{id}

\DeclareMathOperator{\diam}{diam}

\DeclareMathOperator{\tol}{tol}

\DeclareMathOperator{\plog}{polylog}

\newcommand\calA{\mathcal A}

\newcommand\calG{\mathcal G}

\newcommand\calI{\mathcal I}

\newcommand\calN{\mathcal N}
\newcommand\calO{\mathcal O}
\newcommand\calP{\mathcal P}

\newcommand\calT{\mathcal T}
\newcommand\calV{\mathcal V}

\newcommand{\calTeps}{\mathcal{T}^{\eps}}
\newcommand{\Veps}{V_{\eps}}

\newcommand\eps{\varepsilon}
\newcommand{\x}{\mathbf{x}}
\newcommand{\ci}{\mathrm{i}} 


\def\dx{\,\text{d}x}
\def\dxx{\,\text{d}\x}


\newcommand\V{\calV}   
   
\newcommand\Vvert{{|\hskip-0.9pt|\hskip-0.9pt|}}


\newcommand\Oaeps{{D_{\alpha}^{\eps}}}  
\newcommand\Obeps{{D_{\beta}^{\eps}}}  
\newcommand\calTbeps{{\calT_\beta^{\eps}}}

\newcommand\Pz{\calP_{\eps,z}}


\newcommand\Lb{L_{\neg\beta}}
\newcommand\Laeps{L_\alpha^{\eps}}
\newcommand\Lbeps{L_{\neg\beta}^{\eps}}
\newcommand\K{K_\eps}
\textheight=225mm
\textwidth=150mm
\evensidemargin=30.0mm
\oddsidemargin=30.0mm
\hoffset=-25.4mm
\begin{document}
\title[Localization and delocalization of ground states of BECs under disorder]{Localization and delocalization of ground states of Bose-Einstein condensates under disorder}
\author[]{R.~Altmann$^*$, P.~Henning$^{\dagger}$, D.~Peterseim$^*$}
\address{${}^{*}$ Department of Mathematics, University of Augsburg, Universit\"atsstr.~14, 86159 Augsburg, Germany}
\address{${}^{\dagger}$ Department of Mathematics, KTH Royal Institute of Technology, SE-100 44 Stockholm, Sweden}
\email{\{robert.altmann, daniel.peterseim\}@math.uni-augsburg.de, pathe@kth.se}
\thanks{D.~Peterseim acknowledges the support of the European Research Council through the project 865751 -- RandomMultiScales. P.~Henning acknowledges the support by the Swedish Research Council (grant 2016-03339) and the G\"oran Gustafsson foundation. \\
\indent	
This paper will be published in SIAM J. Appl. Math.}
\date{\today}
\keywords{}
%
%
\begin{abstract}
This paper studies the localization behaviour of Bose-Einstein condensates in disorder potentials, modeled by a Gross-Pitaevskii eigenvalue problem on a bounded interval. In the regime of weak particle interaction, we are able to quantify exponential localization of the ground state, depending on statistical parameters and the strength of the potential. 
Numerical studies further show delocalization if we leave the identified parameter range, which is in agreement with experimental data. These mathematical and numerical findings allow the prediction of physically relevant regimes where localization of ground states may be observed experimentally. 
\end{abstract}
%
%
\maketitle
\setcounter{tocdepth}{2}
{\tiny {\bf Key words.} Disorder, localization, delocalization, Gross-Pitaevskii eigenvalue problem}\\
\indent
{\tiny {\bf AMS subject classifications.} {\bf 47H40}, {\bf 81Q10}, {\bf 65N25}} 
%
%
%
\section{Introduction}
When a dilute bosonic gas is cooled down to ultra-low temperatures close to $0K$, an extreme state of matter is formed: a Bose-Einstein condensate (BEC). A characteristic feature of such condensates is that a large fraction of the particles occupies the same quantum state and hence behaves like a single \quotes{super atom}. This allows to study certain quantum phenomena on a macroscopic observation scale, where the phenomenon of superfluidity is perhaps the most prominent one. The existence of BECs was first predicted by Bose and Einstein one century ago \cite{Bos24,Ein24}, but it took until 1995 before it could be finally experimentally confirmed~\cite{AEM95,DMA95}. Since then, there is still an increasing interest in the topic and the study of BECs and its properties became a highly active field of modern quantum physics.

In this paper, we shall investigate the behavior of BECs in a disorder potential from a mathematical perspective. The interest in studying wave phenomena in disordered media goes back to the seminal work by Anderson~\cite{And58} who discovered that electronic waves in a disorder crystal are strongly (exponentially) localized. For example, this can lead to an insulating effect for otherwise conducting materials. This so-called {\it Anderson localization} was later discovered to be a general phenomenon that can be encountered for acoustic waves, elastic waves, electromagnetic waves, and even quantum matter waves. The latter has been studied experimentally through almost non-interacting BECs~\cite{physics-paper-localization-BEC,RDF08}, where exponential localization in the sense of Anderson was observed together with a suppression of transport in an expanding BEC. For this type of \quotes{dynamical} localization, theoretical and numerical studies can be found in~\cite{CVH05,CVR06,FFG05,SPC08}. In contrast to that, the focus of this paper is to study the effect of disorder on the ground state, which is experimentally still a challenge. 

Numerical experiments that address the localization of ground states of BECs are presented, for example, in~\cite{SDK05,SDK06,AltPV18,AltP19,AltHP19ppt}. It is found, in accordance with analogous findings for the dynamical localization of BECs, that there is a sensitive interplay between the strength of disorder and the strength of particle interactions. In fact, to observe Anderson-type localization for ground states it is expected that the particle interactions need to be sufficiently weak and that the degree of disorder, both in terms of amplitude and oscillation/correlation length, needs to be sufficiently large.  Despite plenty of numerical evidence (cf.~\cite{GCP19} and the references therein for a recent overview), to the best of our knowledge there has not yet been any direct experimental observation of the localization of ground states under disorder. 

We consider the Gross-Pitaevskii eigenvalue problem (GPEVP) as a mathematical model for the ground states of BECs with weak repulsive interactions at ultra-low temperatures, where we refer to~\cite{DGP99,LSY01,PiS03,Aft06,BaC13b} and the references therein for derivations and analytical justifications. The Gross-Pitaevskii equation (GPE) is based on mean field theory and one might wonder if it is hence still applicable for BECs under strong disorder. In fact, Seiringer et al.~\cite{SYZ14} proved that Bose-Einstein condensation can take place in highly disordered potentials (with large amplitude and short oscillation length) and that the GPE is still a valid model in this regime. Based on this justification, we study localization phenomena through the (dimensionless) GPEVP in one space dimension as a model problem. The equation seeks the quantum state $u$ of the BEC under the mass normalization constraint $\int_{\R} |u|^2 \dx =1 $ such that
\begin{align}
\label{eq:GPEVP}
  - \tfrac 12\, u^{\prime\prime} + \Veps u + \kappa\, |u|^2 u = \lambda\, u.
\end{align}
Here, $\Veps$ is a disorder potential, $\kappa \ge 0$ a parameter that characterizes the strength of repulsive particle interactions, and~$\lambda$ the eigenvalue that equals a rescaled chemical potential. As mentioned above, it is well established in the literature (cf.~\cite{SDK05,SDK06,SYZ14}) that if $u$ is the ground state (i.e.~the eigenfunction to the smallest eigenvalue $\lambda$), then $u$ is expected to be exponentially localized, provided that $\Veps$ is sufficiently strong in amplitude and disorder and that $\kappa$, i.e., the particle interaction, is sufficiently small. In this paper, we are concerned with quantifying this localization in terms of statistical parameters of the potential and the size of the interaction constant $\kappa$. Previous qualitative results in this direction were derived in~\cite{SYZ14}. Our new results are consistent with these earlier findings, but they are different in the sense that we treat other types of disorder potentials and we will be able to make quantitative predictions concerning the strength of the potential relative to the oscillation length and interaction parameter.

The proof of localization is based on the corresponding linear eigenvalue problem and the observation that the nonlinearity can be interpreted as a perturbation of the given potential. Thus, results from the linear case can be transferred and extended to our more general setting of the GPEVP. 
The goal of this paper is to provide a mathematical explanation for the exponential localization of ground states under sufficiently large disorder and for the delocalization of the condensate for an increasing strength of particle interactions.

Using the abstract theory of preconditioned iterative solvers~\cite{KorY16,KorPY18}  together with exponential decay properties of the Green's function associated with the (linear) Schr\"odinger operator, it is possible to prove Anderson-type localization for the ground state of the linear problem with a sufficiently strong disorder potential~\cite{AltHP20}. 
In this paper, we generalize these results in a first step by proving the aforementioned localization for a large fraction of the lower part of the spectrum and by dealing with a broader class of potentials. This will be subject of Section~\ref{sect:linear}. In a second step, we interpret in Section~\ref{sect:GPEVP} the nonlinear GPEVP as a linear eigenvalue problem with effective potential \quotes{$\Veps + \kappa\, |\ugs|^2$}. Here, $\ugs$ denotes the ground state, for which we prove that it introduces only a sufficiently small perturbation of the potential $\Veps$, provided that $\Veps$ is sufficiently large and that $\kappa$ is sufficiently small. To be precise, we prove that if $\Veps$ is oscillating with a wave length of size $0<\eps\ll 1$, then the maximum amplitude of the disorder potential needs to be at least of order $\eps^{-2}$ and the strength of the nonlinearity must not exceed the order $\eps^{-1}$ to observe localization. We shall also formulate a conjecture saying that this scaling will change for higher space dimensions. Our findings will be support by numerical experiments in Sections \ref{subsect:numerics:higherdim}. 
Finally, we consider realistic physical values in Section~\ref{section:application-to-physical-values} and translate them into our scaling regime. With this we are able to make predictions about the localization and delocalization of ground states in different experimental configurations. 

Throughout this paper, we use the notion $a \lesssim b$ (and accordingly~$\gtrsim$) for the existence of a generic constant $c>0$, independent of the parameter~$\eps$, such that $a \le cb$. Moreover, we write $a \simeq b$ if we have $a \lesssim b$ and $a \gtrsim b$. 
%
%
\section{Physical Setting and Scaling Regime}
We shall present a motivation for the scaling regime (for $\Veps$ and $\kappa$) that we will consider in this paper. For that, we start from the GPEVP in physical units and derive a nondimensional form that depends on the oscillation length of the disorder potential. Like that, we will see how such a scaling influences the effective potential amplitude and the effective interaction constant in the corresponding nondimensional equation. The resulting scaling will be the basis for our analysis and we shall later relate it to explicit physical values in Section~\ref{section:application-to-physical-values}.

We consider a repulsive, weakly interacting Bose gas at ultra low temperatures in a disorder potential. The stationary states of such BECs are modeled by the GPEVP in physical units, where we seek the condensate's quantum state $\psi \colon\R^3 \rightarrow \R$ with corresponding eigenvalue $\mu \in \R$ such that
\begin{eqnarray*}
 -\frac{\hbar^2}{2m} \triangle \psi(\x) + V_{\mbox{\tiny\rm tr}}(\x) \hspace{2pt} \psi(\x) + \frac{g}{2}\, |\psi(\x)|^2 \psi(\x)
 &=& \mu \hspace{2pt} \psi(\x)
\end{eqnarray*}
and with the mass constraint
$$
\int_{\R^3} |\psi(\x)|^2 \hspace{2pt} \dxx = N.
$$
Here, $\hbar$ is the reduced Planck constant (in \texttt{[J$\cdot$s]}); $N$ is the number of bosons (dimensionless unit); $m$ the mass of a single boson (in \texttt{[kg]}); $a$ the scattering length (in \texttt{[m]}); $g=\frac{4 \pi \hbar^2 a}{m}$ is an atomic interaction constant (in \texttt{[J\hspace{2pt}$\cdot$\hspace{2pt}m$^3$]}); $V_{\mbox{\tiny\rm tr}}$ is an external trapping potential (in \texttt{[J]}); and the eigenvalue $\mu$ is the chemical potential of the condensate (in \texttt{[J]}). The physical unit of the wave function $\psi(\x \mbox{\rm \texttt{[m]}})$ is \texttt{[m$^{-3/2}$]}, which yields a particle density $|\psi(\x \mbox{\rm \texttt{[m]}})|^2$ measured in \texttt{[m$^{-3}$]} (particles per cubic meter). The time-dependent standing wave that describes the condensate is given by $\psi(\x) e^{-\mu \ci t / \hbar}$.

An analytical justification that the Gross-Pitaevskii model is still applicable even for strong disorder potentials was shown in~\cite{SYZ14}.
%
\subsection{Dimensionless form in $3D$}
\label{subsec-dimless-3D}
In order to introduce a non-dimensional form of the equation we can select a scaling parameter $\eps>0$ that is adjusted to the characteristic length of the condensate and which determines a targeted scaling regime. The scaling parameter is such that $\eps^2$ is measured 
in the unit~\texttt{[Hz]}.  Furthermore, we introduce a dimensionless parameter $\rho>0$ that will help us to tune the oscillation length of the nondimensional potential to $\eps$. 
With this we set
\begin{align}
\label{def-sx}
s_\x := \sqrt{ \frac{\hbar}{m } }\ \frac{\rho}{\eps}  
\qquad \mbox{in } \texttt{[m]}
\end{align}
and define the non-dimensional quantum state as
$$
u_{\mbox{\tiny\rm 3D}}(\x )
:= \sqrt{ \frac{s^3_\x}{N} }\ \psi( s_\x \x ).
$$
It is easy to verify that $u_{\mbox{\tiny\rm 3D}}$ is normalized in mass, i.e., 
$
\int_{\R^3} |u_{\mbox{\tiny\rm 3D}}(\x)|^2 \hspace{2pt} \dxx 
=1,
$
and that it solves 
\begin{eqnarray}
\label{GPE-3D}
 -\frac{1}{2} \triangle u_{\mbox{\tiny\rm 3D}}(\x ) + \frac{1}{\eps^2} \hspace{2pt} V_{\mbox{\tiny\rm 3D}}\hspace{-2pt}\left( \frac{\x}{\eps}\right) \hspace{0pt} u_{\mbox{\tiny\rm 3D}}(\x) 
 +  \eps \hspace{2pt}\kappa_{\mbox{\tiny\rm 3D}} \hspace{2pt} |u_{\mbox{\tiny\rm 3D}}(\x)|^2 u_{\mbox{\tiny\rm 3D}}(\x)
 &=& \lambda_{\mbox{\tiny\rm 3D}} \hspace{2pt}u_{\mbox{\tiny\rm 3D}}(\x),
\end{eqnarray}
where
$$
V_{\mbox{\tiny\rm 3D}}(\x) :=
 \frac{\rho^2}{\hbar} \hspace{2pt} V_{\mbox{\tiny\rm tr}}\left( 
 \x \hspace{2pt}\rho\hspace{2pt}\sqrt{\frac{\hbar}{m}} \right),
 \qquad 
 \kappa_{\mbox{\tiny\rm 3D}}  := \frac{2 \pi a N}{\rho} \sqrt{\frac{m}{\hbar}},
\qquad \mbox{and}
\qquad
\lambda_{\mbox{\tiny\rm 3D}} :=\frac{\rho^2}{\eps^2} \frac{\mu}{\hbar}.
$$
We observe that, in this scaling regime, the strength of the potential scales with $1/\eps^2$. Furthermore, if $V_{\mbox{\tiny\rm 3D}}$ is oscillating on a scale of order $\calO(1)$, then the potential in \eqref{GPE-3D}, i.e., $\frac{1}{\eps^2} \hspace{2pt} V_{\mbox{\tiny\rm 3D}}\hspace{-2pt}\left( \frac{\cdot}{\eps}\right)$, is oscillating on the $\eps$-scale. Finally, the atomic interaction constant scales with $\eps$. As we will see next, this last scaling of the interaction constant is in fact depending on the spatial dimension, whereas the regime for the potential remains unchanged.
\begin{remark}[size of $\eps$]
Formally, $\eps>0$ is not a physical parameter and can be chosen arbitrarily. Different values for $\eps$ only affect the scaling regime of the non-dimensional equation. However, in this paper we will select $\eps \ll 1$ so that the characteristic length of the rescaled wave function $u_{\mbox{\tiny\rm 3D}}$ is smaller than $1$. This will allow us to consider \eqref{GPE-3D} on the unit interval and measure the (exponential) decay of $u_{\mbox{\tiny\rm 3D}}$ in units of $\eps$.
\end{remark}

\subsection{Dimensionless form in $2D$}
\label{subsec-dimless-2D}
If the trapping potential $V_{\mbox{\tiny\rm 3D}}$ is strongly anisotropic, then the condensate can be confined into a plane or a certain space direction. In these cases, the $3D$ GPEVP~\eqref{GPE-3D} can be formally reduced to an equation in one or two space dimensions.

A typical physical experiment in $2D$ creates a disk-shaped condensate with small height. Practically this is achieved through a harmonic confinement potential with a large trapping frequency in the strong confinement direction. Without loss of generality assume that the condensate is essentially spreading in the $xy$-plane, and hence, that it is strongly confined in $z$-direction. For some large trapping frequency $\omega_z \gg 1$ (in  \texttt{[rad/s]$\hat{=}$[(2$\pi$)$^{-1}$Hz]}), a suitable potential is of the form
\begin{align*}
V_{\mbox{\tiny\rm tr}}(\x) = W_{\mbox{\tiny\rm 2D}}(x,y) + \frac{m}{2} \omega_z^2 z^2, \qquad 
\mbox{where } \x= (x,y,z)
\end{align*}
and where $W_{\mbox{\tiny\rm 2D}}$ characterizes the potential in the $xy$-plane. For the rescaled potential we consequently obtain 
\begin{align*}
V_{\mbox{\tiny\rm 3D}}(\x) = V_{\mbox{\tiny\rm 2D}}(x,y) + \frac{1}{2} \left(\frac{\omega_z}{2\pi}\right)^2 \hspace{-2pt}z^2
\qquad\mbox{with }
V_{\mbox{\tiny\rm 2D}}(\x) := \frac{\rho^2}{\hbar} \hspace{2pt} W_{\mbox{\tiny\rm 2D}}\left( 
 \x \hspace{2pt}\rho \hspace{2pt}\sqrt{\frac{\hbar}{m}} \right).
\end{align*}
In this case it can be shown that we have a separation of variables for $u_{\mbox{\tiny\rm 3D}}(\x )$ which allows to project all terms of the equation into a suitable subspace of functions that only live in the $xy$-plane and which consequently reduces the $3D$ GPEVP to a $2D$ equation (cf.~\cite{Bao14,BaC13b,BJM03} for details and analytical proofs). In our case (i.e.~for repulsive, weakly interacting BECs), the $2D$ GPEVP reads
\begin{eqnarray}
\label{GPE-2D}
 -\frac{1}{2} \triangle u_{\mbox{\tiny\rm 2D}}(\x ) + \frac{1}{\eps^2} \hspace{2pt} V_{\mbox{\tiny\rm 2D}}\hspace{-2pt}\left( \frac{\x}{\eps}\right) \hspace{0pt} u_{\mbox{\tiny\rm 2D}}(\x) 
 + \kappa_{\mbox{\tiny\rm 2D}} \hspace{2pt} |u_{\mbox{\tiny\rm 2D}}(\x)|^2 u_{\mbox{\tiny\rm 2D}}(\x)
 &=& \lambda_{\mbox{\tiny\rm 2D}} \hspace{2pt}u_{\mbox{\tiny\rm 2D}}(\x)
\end{eqnarray}
with $\x=(x,y)$, $V_{\mbox{\tiny\rm 2D}}$ as defined above, and
$$
 \kappa_{\mbox{\tiny\rm 2D}}  := a N \sqrt{\frac{m \omega_z }{\hbar}}.
$$
Note that in our scaling regime, the effective (nondimensional) trapping frequency in $z$-direction is $\frac{\omega_z \rho^2}{2 \pi \eps^2}$. This $\eps$-dependency of the trapping frequency causes the change of the scaling for the interaction constant. More precisely, we used that the full interaction constant is given by
\begin{align*}
\eps\, \kappa_{\mbox{\tiny\rm 3D}} \frac{1}{\sqrt{2\pi}}\, \sqrt{\frac{\omega_z  \rho^2}{2\pi \eps^2}}
= \kappa_{\mbox{\tiny\rm 3D}} \rho \frac{\sqrt{\omega_z}}{2\pi} =:\kappa_{\mbox{\tiny\rm 2D}}.
\end{align*}

%
\subsection{Dimensionless form in $1D$}
\label{subsec-dimless-1D}
If the strong confinement is both in $y$- and $z$-direction, a cigar-shaped condensate can be obtained. In this case, the strongly anisotropic potential trap is of the form 
\begin{align*}
V_{\mbox{\tiny\rm tr}}(\x) = W_{\mbox{\tiny\rm 1D}}(x) + \frac{m}{2} \left( \omega_y^2 y^2 + \omega_z^2 z^2 \right), 
\end{align*}
where $\omega_y, \omega_z \gg 1$ are strong trapping frequencies and $W_{\mbox{\tiny\rm 1D}}$ a potential that only acts in $x$-direction. The rescaled potential becomes
\begin{align*}
V_{\mbox{\tiny\rm 3D}}(\x) = V_{\mbox{\tiny\rm 1D}}(x) + \frac{1}{2} \left( \left(\frac{\omega_y}{2\pi}\right)^2 \hspace{-3pt} y^2 + \left(\frac{\omega_z}{2\pi}\right)^2 \hspace{-3pt}z^2 \right)
\qquad\mbox{with }
V_{\mbox{\tiny\rm 1D}}(x) := \frac{\rho^2}{\hbar} \hspace{2pt} W_{\mbox{\tiny\rm 1D}}\left( 
  x \hspace{2pt}\rho \hspace{2pt}\sqrt{\frac{\hbar}{m}} \right).
\end{align*}
Using again the aforementioned projection method as elaborated in \cite{BJM03}, it is possible to reduce the $3D$ GPEVP to a $1D$ GPEVP. In our scaling regime we obtain 
\begin{eqnarray}
\label{GPE-1D}
 -\frac{1}{2}\,  u_{\mbox{\tiny\rm 1D}}^{\prime\prime}(x ) + \frac{1}{\eps^2} \hspace{2pt} V_{\mbox{\tiny\rm 1D}}\hspace{-2pt}\left( \frac{x}{\eps}\right) \hspace{0pt} u_{\mbox{\tiny\rm 1D}}(x) 
 + \frac{\kappa_{\mbox{\tiny\rm 1D}}}{\eps} \hspace{2pt} |u_{\mbox{\tiny\rm 1D}}(x)|^2 u_{\mbox{\tiny\rm 1D}}(x)
 &=& \lambda_{\mbox{\tiny\rm 1D}} \hspace{2pt}u_{\mbox{\tiny\rm 1D}}(x),
\end{eqnarray}
where 
$$
\kappa_{\mbox{\tiny\rm 1D}} := \rho \frac{a N}{2 \pi} \sqrt{\frac{m \omega_y \omega_z}{ \hbar}}.
$$
We stress that the scaling in front of the interaction constant changed again, from $\calO(\eps)$ in $3D$ and $\calO(1)$ in $2D$ to $\calO(\eps^{-1})$ in $1D$. As before, this corresponds to the influence of the effective trapping frequencies in $y$- and $z$-direction, where we have the relation
\begin{align*}
 \eps\, \kappa_{\mbox{\tiny\rm 3D}}\, \frac{1}{2\pi}\, \sqrt{\frac{\omega_y \omega_z \rho^4}{4 \pi^2 \eps^4}}
  =  \eps^{-1} \frac{\kappa_{\mbox{\tiny\rm 3D}}}{4\pi^2} \rho^2 \sqrt{\omega_y \omega_z}
  =: \eps^{-1} \kappa_{\mbox{\tiny\rm 1D}}.
\end{align*}
In this paper, we mainly consider the $1D$ GPEVP in the scaling regime as given in \eqref{GPE-1D} and for a disorder potential $V_{\mbox{\tiny\rm 1D}}$. In this regime, we prove that the ground state is exponentially localized to a small region, where $\eps$ is selected small enough so that the condensate is confined to the unit interval. The ground state solution $u_{\mbox{\tiny\rm 1D}}$ to equation \eqref{GPE-1D} is defined as ($L^2$-normalized) eigenfunction to the smallest eigenvalue $\lambda_{\mbox{\tiny\rm 1D}}$. Equivalently, we can characterize $u_{\mbox{\tiny\rm 1D}}$ as the global minimizer of the total energy. We will later make these characterizations explicit in Section \ref{sect:GPEVP}. In Section \ref{section:application-to-physical-values} we apply our theoretical findings to realistic physical values in order to make predictions about a practical localization regime.

From now on we shall consider the one-dimensional GPEVP truncated to a bounded interval of appropriate size with homogeneous Dirichlet boundary conditions. This truncation can be physically justified by computing the Thomas--Fermi radius of the condensate, cf.~\cite{Bao14}. 
%
%
\section{Disorder Potentials and Localization Results for the Linear Case}\label{sect:linear}
In order to prove localization results for the eigenstates of the GPEVP in Section~\ref{sect:GPEVP} we need to have a closer look at the linear case first.
For~$\kappa=0$, equation~\eqref{eq:GPEVP} is known as the linear Schr\"odinger eigenvalue problem, 
\begin{align}
\label{eq:linSchroedingerEVP}
  - \tfrac 12\, u^{\prime\prime}(x) + V(x) u(x) = \lambda\, u(x)
  \qquad\text{in }D\subseteq\R 
\end{align}
with homogeneous Dirichlet boundary conditions, i.e., $u \in \V := H^1_0(D)$. 
The literature concerning Anderson-type localization of ground and excited states from a mathematical perspective is much more extensive for this linear case. In particular, it is well-known that the first eigenstates localize (in an exponential manner) under disorder. Exemplary, we mention that these effects have been analyzed in the early works~\cite{FroS83,FroMSS85,AizM93,Aiz94} and, more recently, with a landscape function  approach~\cite{FilM12,ArnDJMF16,Ste17} as well as from a multiscale point of view~\cite{AltHP20,AltP19}. In the following, we generalize the results of~\cite{AltHP20} and show that the first~$\calO(\eps^{-1})$ eigenfunctions localize in the sense of an exponential decay as a preparation for the nonlinear case that is treated in Section~\ref{sect:GPEVP}. 

Throughout this section, we will consider the weak form of the eigenvalue problem~\eqref{eq:linSchroedingerEVP}. For this, we introduce the operators~$\calA\colon \V\to\V^*$ and~$\calI\colon \V\to\V^*$ by
\begin{align}
\label{eq:calA}
  \langle \calA u, v\rangle 
  := \int_D \frac{1}{2}\, u'(x)v'(x) + V(x)\, u(x)v(x) \dx, \qquad
  \langle \calI u, v\rangle 
  := \int_D u(x)v(x) \dx.
\end{align}
The weak form of the Schr\"odinger eigenvalue problem then reads~$\calA u = \lambda \calI u$ in $\V^*$. 
%
%
\subsection{Disorder potentials}\label{sect:linear:potential}
In this section we shall specify the characteristics of the considered family of disorder potentials~$(\Veps)_{\eps>0} \subset L^{\infty}(D)$. The family consists of potentials that are rapidly oscillating on a grid with mesh width~$\eps$. To make this precise, we consider the unit interval $D:=(0,1)$ and we let $\eps>0$ denote the aforementioned (small) mesh size parameter with~$\eps\ll 1$ and, for simplicity, ~$\eps^{-1} \in \N$. With this, we begin with introducing a family of equidistant meshes that consist of small subintervals of length $\eps$, i.e.,  
\[
  \calTeps := \big\{\, \overline{T_k^{\eps}} \ |\ 1 \le k \le \eps^{-1} \big\}, \qquad
  T_k^{\eps} := ((k-1)\eps,\, k\eps).
\]
The corresponding family of potentials is assumed to be nonnegative and piecewise continuous w.r.t.~the mesh, i.e., we assume that for each member of the family we have
\[
  \Veps \in L^{\infty}(D), \qquad
  \Veps\ge 0, \qquad
  \Veps|_{T_k^{\eps}} \in C^0(T_k^{\eps}).
\]
For the subsequent analysis, we shall introduce two characteristic values for the families of potentials, namely $0<\alpha<\beta<\infty$, both independent of~$\eps$. In order to study decay properties, we will divide the interval $D$ into three relevant (disconnected) subregions. Loosely speaking, in one region the potential takes values smaller or equal to $\alpha\,\eps^{-2} |\log\eps|^{-2}$, then there is an second intermediate region, and in a third region the potential takes values larger or equal to $\beta\eps^{-2}$. To make this precise, we define two disjoint submeshes of~$\calTeps$, namely~$\calTeps_\alpha$ and~$\calTeps_\beta$, by 
\begin{align*}
  \calTeps_\alpha 
  &:= \big\{\, \overline{T^{\eps}} \in \calTeps \ |\ \sup\nolimits_{x\in T^{\eps}} \Veps(x) \le 
  \alpha\, \eps^{-2}\, |\log\eps|^{-2} 
  \big\}, \\ 
  \calTeps_\beta 
  &:= \big\{\, \overline{T^{\eps}} \in \calTeps \ |\ \inf\nolimits_{x\in T^{\eps}} \Veps(x) \ge \beta\, \eps^{-2} \big\}.
\end{align*}
Note that $\alpha \eps^{-2} |\log\eps|^{-2} < \beta\eps^{-2}$ for $\eps \ll 1$ and that the union of~$\calTeps_\alpha$ and $\calTeps_\beta$ will, in general, not add up to the entire mesh~$\calTeps$ due to the possibility of intermediate values. 
Let us define the corresponding (disconnected) sets by
\[
  \Oaeps := \bigcup\, \big\{ \overline{T^{\eps}} \in \calTeps_\alpha \big\} \qquad
  \mbox{and} \qquad
  \Obeps := \bigcup\, \big\{ \overline{T^{\eps}} \in \calTeps_\beta \big\}. 
\]
\begin{example}
We mention three representative examples of potentials, which are also illustrated in Figure~\ref{fig:potentials}. 
A special case is a periodic two-valued potential, i.e., $\Veps$ equals alternating $1$ or~$\eps^{-2}$. A reasonable choice for the characteristic parameters would be any $\alpha\in (0,1)$ and any $\beta \in (\alpha,1]$. We emphasize that eigenfunctions do not localize in the periodic setting~\cite{AltHP20}. 
Second, we may consider a two-valued Bernoulli potential, which is piecewise constant w.r.t.~$\calTeps$. For a prescribed probability~$p\in(0,1)$ we set $\Veps$ on a subinterval to $1$ and $\eps^{-2}$ otherwise. 
Reasonable values for $\alpha$ and $\beta$ can be selected as before. 
Last but not least, we may choose the value of the potential on a subinterval randomly between $0$ and some~$\eps^{-2}V_\text{max} > 0$. In this case, there are various possibilities to choose~$\alpha$ and~$\beta$. 
\end{example}
%
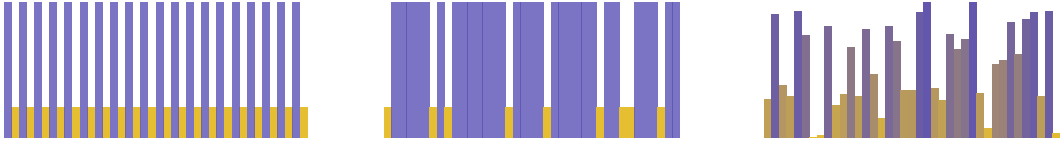
\begin{figure}
\begin{center}
\begin{tikzpicture}
  \foreach \x in {0, 0.2, ..., 3.8} {
    \fill[alphaCol] (\x+0.1, 0) -- (\x+0.2, 0) -- (\x+0.2, 0.4) -- (\x+0.1, 0.4) -- cycle;	
    \fill[betaCol, opacity=0.8] (\x, 0) -- (\x+0.1, 0) -- (\x+0.1, 1.8) -- (\x, 1.8) -- cycle;	
  }
  \foreach \x in {5, 5.1, ..., 8.9} {		
	\pgfmathparse{rnd}
	\pgfmathsetmacro{\r}{\pgfmathresult}		
	\ifthenelse{\lengthtest{\r pt < 0.6 pt}}{
	  \fill[betaCol, opacity=0.8] (\x, 0) -- (\x+0.1, 0) -- (\x+0.1, 1.8) -- (\x, 1.8) -- cycle;
	}{
	  \fill[alphaCol] (\x, 0) -- (\x+0.1, 0) -- (\x+0.1, 0.4) -- (\x, 0.4) -- cycle;
	}	
  }
  \foreach \x in {10, 10.1, ..., 13.9} {		
	\pgfmathparse{rnd}
	\pgfmathsetmacro{\y}{\pgfmathresult}
	\pgfmathparse{95*\y}
	\edef\tmp{\pgfmathresult}
	\fill[betaCol!\tmp!alphaCol] (\x, 0) -- (\x+0.1, 0) -- (\x+0.1, \y*1.8) -- (\x, \y*1.8) -- cycle;
  }
\end{tikzpicture}
\end{center}
\caption{Illustration of one-dimensional potentials. Periodic potential (left), Bernoulli potential with $p=0.4$ (middle), and fully disorder potential (right). All three potentials are discontinuous and piecewise constant.}
\label{fig:potentials}
\end{figure}
The (maximal) connectivity components of~$\Oaeps$ and $\Obeps$ will be denoted as {\em valleys} and {\em peaks}, respectively. Of particular interest are the diameter of the largest valley, which is of length~$\Laeps\eps$, i.e., 
\begin{align}
\label{def:La}
  \Laeps 
  = \sup \{\, \mbox{diam}(Q)\,\eps^{-1} \ |\ Q\subseteq \Oaeps \text{ is a closed interval} \}
\end{align}
as well as the largest interval without a peak, 
\[
  \Lbeps 
  = \sup \{\, \mbox{diam}(Q)\,\eps^{-1} \ |\ Q\subseteq \overline D\setminus\Obeps \text{ is a closed interval} \}.
\]
In general, we have $0 \le \Laeps \le \Lbeps \le \eps^{-1}$. However, for random potentials $\Veps$ that admit a certain statistical distribution of values, the size of $\Laeps$ and $\Lbeps$ is closely connected to the value of $\eps$. For disordered random potentials we can say that the smaller the value of $\eps$, the higher is the probability that $\Laeps$ and $\Lbeps$ become large and tend to infinity for $\eps \rightarrow 0$. 
\begin{example}
\label{exp:Bernoulli}
Assume that the potential $\Veps$ corresponds to a Bernoulli distribution of $\alpha$- and $\beta$-cells with probability~$p \in (0,1)$, cf.~Figure~\ref{fig:potentials} (middle). In this case, we have $\Laeps = \Lbeps$ and with high probability $\Laeps \simeq \log_{1/p}(\eps^{-1}) = \log_{p}(\eps) = \tfrac{\log \eps}{\log p}$, cf.~\cite{Mal15}. 
\end{example}
Note that, so far, the assumptions fit to any (deterministic) realization of a random potential~$\Veps$. This particularly includes the periodic case. The actual size of $\Laeps$ and $\Lbeps$ will only play a role when we start to investigate localization properties. 
%
\subsection{Norm estimates}
In this section, we derive a Friedrichs-type estimate that will be the key to finding lower bounds for the eigenvalues of the linear elliptic differential operator~$\calA_\eps$, which is the operator defined in~\eqref{eq:calA} with potential $V=\Veps$. For proving the estimate, we define a cut-off function $\eta_{\eps} \in H^1(D)$ with $0 \le \eta_{\eps}(x) \le 1$, which is based on the particular structure of the potential $\Veps$.  
More precisely, we set~$\eta_{\eps}$ as the globally continuous and piecewise linear function, which is constant~$1$ in $D\setminus\Obeps$ and vanishes in the middle of each element of~$\calTeps_\beta$, cf.~Figure~\ref{fig:cutoff}. With this, we obtain a cut-off function, which satisfies~$\Vert \nabla\eta_{\eps}\Vert_{L^\infty(D)} \le 2\,\eps^{-1}$ as well as the following Friedrichs inequality. 
%
\begin{figure}
\begin{center}
\begin{tikzpicture}[scale=0.7]
	\foreach \x in {0, 1, 4, 5, 6, 8, 12, 14} {		
		\pgfmathparse{rnd}
		\pgfmathsetmacro{\rndHeight}{\pgfmathresult}		
		\fill[alphaCol, opacity=0.5] (\x,0) -- (\x+1,0) -- (\x+1,2*\rndHeight) -- (\x,2*\rndHeight) -- cycle;
	}
	\foreach \x in {2, 3, 7, 9, 10, 11, 13} {		
		\pgfmathparse{rnd}
		\pgfmathsetmacro{\rndHeight}{\pgfmathresult}				
		\fill[betaCol, opacity=0.5] (\x,0) -- (\x+1,0) -- (\x+1,2.6+\rndHeight) -- (\x,2.6+\rndHeight) -- cycle;
	}
	\draw[thick] (0,0) -- (15,0);	
	\foreach \x in {0, ..., 15} {		
		\draw[thick] (\x, -0.2) -- (\x, 0.2);
	}
	\draw[very thick] (0, -0.2) -- (0, 0.3);
	\draw[very thick] (15, -0.2) -- (15, 0.3);	
	\draw[gray, thick, dashed] (0, 2.4) -- (15.5, 2.4);
	\node[gray] at (16.2, 2.4) {$\beta\eps^{-2}$};	
	\draw[myRed, thick] (0,1.2) -- (2,1.2) -- (2.5,0) -- (3,1.2) -- (3.5,0) -- (4,1.2) -- (7,1.2) -- (7.5,0) -- (8,1.2) -- (9,1.2) -- (9.5,0) -- (10,1.2) -- (10.5,0) -- (11,1.2) -- (11.5,0) -- (12,1.2) -- (13,1.2) -- (13.5,0) -- (14,1.2) -- (15,1.2);
	\node[myRed] at (-0.5, 1.2) {$\eta_{\eps}$};	
\end{tikzpicture}
\end{center}
\caption{Illustration of the cut-off function $\eta_{\eps}$, which is constant $1$ in~$D\setminus\Obeps$ (yellow) and vanishes in the interior of each element~$\overline{T^{\eps}}\in\calTbeps$ (purple).}
\label{fig:cutoff}
\end{figure}
\begin{lemma}
\label{lem:cutoff}
Consider a function $v\in\V= H^1_0(D)$ and the cut-off function $\eta_{\eps}$ introduced above. Then the product $\eta_{\eps} v$ satisfies the Friedrichs-type inequality 
\begin{align}
\label{eq:cutoff}
  \| \eta_{\eps} v \|_{L^2(D)} 
  \le \frac{(\Lbeps+1)\,\eps}{\pi }\ \| (\eta_{\eps} v)' \|_{L^2(D)}.
\end{align}
\end{lemma}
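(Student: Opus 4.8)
The plan is to reduce \eqref{eq:cutoff} to the elementary one-dimensional Friedrichs (Poincar\'e) inequality applied on a suitable partition of $D$. The key observation is that the product $w := \eta_{\eps}\,v$ is forced to vanish at a whole family of points: at the midpoint of every peak cell $\overline{T^{\eps}}\in\calTeps_\beta$, because $\eta_{\eps}$ vanishes there by construction, and at the two endpoints $0$ and $1$ of $D$, because $v\in\V=H^1_0(D)$. Since $\eta_{\eps}$ is globally Lipschitz and $v\in H^1(D)$, the product $w$ lies in $H^1(D)$ and is in particular continuous, so these are genuine pointwise zeros. Listing them in increasing order as $0 = z_0 < z_1 < \dots < z_M = 1$ produces a partition of $\overline D$ into closed subintervals $\overline{I_j}=[z_{j-1},z_j]$ such that $w|_{I_j}\in H^1_0(I_j)$ for each $j$.

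The crucial step is then the length bound $|I_j| = z_j - z_{j-1}\le(\Lbeps+1)\,\eps$ for every $j$. To see this, note that by maximality of the chosen family of zeros, no peak cell can lie strictly between $z_{j-1}$ and $z_j$; hence $I_j$ is covered by a (possibly empty) half of a peak cell at its left end, a run of non-peak cells, and a (possibly empty) half of a peak cell at its right end, the two half-cells being absent precisely when $z_{j-1}=0$, resp.~$z_j=1$. The run of non-peak cells spans an open peak-free subinterval of $D$, so its length is at most $\Lbeps\,\eps$ by the definition of $\Lbeps$, while each half-cell contributes at most $\eps/2$. Summing, $|I_j|\le \tfrac\eps2 + \Lbeps\,\eps + \tfrac\eps2 = (\Lbeps+1)\,\eps$. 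In the degenerate case $\Obeps=\emptyset$ there is a single interval $I_1 = D$, for which $|I_1| = 1 = \Lbeps\,\eps \le (\Lbeps+1)\,\eps$ since then $\Lbeps = \eps^{-1}$.

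Finally I would invoke the sharp scalar Friedrichs inequality: for $\phi\in H^1_0(0,\ell)$ one has $\|\phi\|_{L^2(0,\ell)}\le \tfrac{\ell}{\pi}\,\|\phi'\|_{L^2(0,\ell)}$, the constant $(\pi/\ell)^2$ being the lowest Dirichlet eigenvalue of $-\mathrm{d}^2/\mathrm{d}x^2$ on $(0,\ell)$. Applying this to $w|_{I_j}$ and using the length bound gives $\|w\|_{L^2(I_j)}^2\le \tfrac{(\Lbeps+1)^2\eps^2}{\pi^2}\,\|w'\|_{L^2(I_j)}^2$; summing over $j$ and using additivity of the $L^2$-norms over the partition $\{I_j\}$ yields $\|w\|_{L^2(D)}^2\le \tfrac{(\Lbeps+1)^2\eps^2}{\pi^2}\,\|w'\|_{L^2(D)}^2$, which is \eqref{eq:cutoff} after taking square roots.

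The only genuinely delicate point is the geometric bookkeeping in the middle step: one must check that every gap between consecutive prescribed zeros of $w$ is really bounded by $(\Lbeps+1)\,\eps$, carefully treating the boundary intervals touching $0$ and $1$ and the situation of adjacent peak cells (where the run of non-peak cells is empty). Everything else is the standard scalar Poincar\'e estimate together with additivity over a finite partition.
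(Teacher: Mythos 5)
Your proof is correct and follows essentially the same route as the paper: use the zeros of $\eta_{\eps}v$ at the peak-cell midpoints and at $\partial D$ to partition $D$, bound each subinterval by $(\Lbeps+1)\,\eps$, and apply the sharp one-dimensional Poincar\'e--Friedrichs inequality on each piece before summing. The only difference is that you spell out the geometric bookkeeping (half peak cell $+$ peak-free run $+$ half peak cell, boundary intervals, the case $\Obeps=\emptyset$) that the paper leaves implicit, and this bookkeeping is carried out correctly.
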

\begin{proof}
By the definition of $\eta_{\eps}$, the product $\eta_{\eps} v$ vanishes at the boundary of $D$ and in the center of each $\overline{T^{\eps}}\in\calTeps_\beta$. These zeros define a partition of $D$, on which we can apply the one-dimensional version of the Poincar\'e-Friedrichs inequality (or Wirtinger's inequality) of the form $\| v \|_{L^2(a,b)}\le \frac{b-a}{\pi}\,\| v' \|_{L^2(a,b)}$ for all $v\in H^1_0(a,b)$, cf.~\cite{DymMcKean72}. 
Since the diameter of each subinterval is bounded by~$(\Lbeps+1)\,\eps$, the assertion follows.  
%
\end{proof}
With the cut-off function $\eta_{\eps}$ and the previous lemma we can deduce general bounds of the $L^2$-norm in terms of the energy norm. To keep the notation short we introduce
\[
  \|v\|^2 := \|v\|^2_{L^2(D)}, \qquad
  \|v\|^2_{\Veps} := \|\sqrt{\Veps}v\|^2 = \int_D \Veps\, |v|^2 \dx, \qquad
  \|v\|^2_{\Veps,\omega} := \int_\omega \Veps\, |v|^2 \dx
\]
for any subset~$\omega\subseteq D$. The corresponding energy norm (for the linear problem) is defined as
\[
  \Vvert v \Vvert_\eps^2 
  := \langle \calA_\eps v, v\rangle
  = \tfrac 12\, \| v'\|^2 + \|v\|^2_{\Veps}, \qquad
  \Vvert v \Vvert^2_{\eps,\omega} 
  = \tfrac 12\, \| v'\|^2_{L^2(\omega)} + \|v\|^2_{\Veps,\omega}. 
\]
With this, we obtain for an arbitrary function $v\in \V$ the estimate 
\begin{eqnarray*}
  \Vert v \Vert^2
  &\le& \Vert \eta_{\eps} v \Vert^2 + \frac{\eps^2}{\beta}\, \Vert v \Vert_{\Veps,\Obeps}^2 \\
  &\overset{\eqref{eq:cutoff}}{\le}& \frac{(\Lbeps+1)^2\eps^2}{\pi^2}\, \Big( \frac{8}{\beta}\, \Vert v \Vert_{\Veps,\Obeps}^2  + 2\, \Vert v^{\prime} \Vert^2 \Big) + \frac{\eps^2}{\beta}\, \Vert v \Vert_{\Veps, \Obeps}^2 \\
  &=& \frac{4\,(\Lbeps+1)^2\eps^2}{\pi^2}\, \frac 12\, \Vert  v^{\prime} \Vert^2 
  + \frac{\eps^2}{\beta} \Big( \frac{8\,(\Lbeps+1)^2}{\pi^2} + 1 \Big) \Vert v \Vert_{\Veps, \Obeps}^2 \\ 
  &\le& \frac{4\, (\Lbeps+1)^2\eps^2}{\pi^2}\, \max\big\{ 1, \tfrac{5}{\beta} \big\}\, \Vvert v \Vvert_\eps^2, 
\end{eqnarray*}
where we have used $\frac{\pi^2}{4\,(\Lbeps+1)^2} \le 3$. 
Note that we have applied estimate~\eqref{eq:cutoff}, which introduced the parameter~$\Lbeps$. We can now formulate the following conclusion, which establishes a Friedrichs inequality for estimating the $L^2$-norm of a function by the ($\Veps$-dependent) energy norm.
\begin{conclusion}[inverse energy estimate]
With the constant~$C := \frac{4}{\pi^2} \max\{1, \frac{5}{\beta}\}$, which is independent of $\eps$, we have
\begin{align}
\label{estimate_L2norm}
  \Vert v \Vert^2
  \le C\, (\Lbeps+1)^2\, \eps^2\, \Vvert v \Vvert_\eps^2 
\end{align}
for all $v\in \V$.
\end{conclusion}
Note that a direct consequence of this estimate is that the eigenvalues of the linear Schr\"odinger operator~$\calA_\eps$ satisfy
\begin{align*}
\lambda_{\eps,j} \ge \lambda_{\eps,1} \gtrsim (\Lbeps+1)^{-2}\, \eps^{-2} \quad \mbox{for all~$j\ge 1$.}
\end{align*}
The estimate implies that if the potential takes values larger or equal to $\beta\eps^{-2}$ sufficiently often (i.e., if $\Lbeps$ is small enough), then the smallest eigenvalue is -- depending on $\Lbeps$ -- at least of order $\calO(\eps^{-p})$ for some $0<p\le 2$. 
%
%
\subsection{Abstract localization}\label{sect:linear:precond}
In this section, we take a closer look at the inverse $\calA^{-1}_{\eps}$ of the considered differential operator and prove that it almost maintains locality if $\Lbeps$ does not become too large, i.e., if $\Veps$ takes values of order $\beta\eps^{-2}$ in a significantly large subregion. In particular, we have that for any local function~$f \in L^2(D)$ the response~$\calA^{-1}_{\eps}f$ is quasi-local, i.e., exponentially decaying outside of~$\supp f$, provided that $\Lbeps$ is sufficiently small. 
This result is related to the decay properties of the Green's function associated with $\calA_{\eps}$. To prove this, we follow the arguments presented in \cite{AltHP20} and apply the theory of optimal local operator preconditioners~\cite{KorY16,KorPY18}. 
For that, we introduce an overlapping domain decomposition, which is related to the underlying $\eps$-mesh~$\calTeps$. This decomposition will be the basis for the definition of an operator preconditioner, which maintains locality. 

We denote the set of {\em interior} nodes by
\[
\calN^{\eps} := \{ z_j = j \eps \hspace{3pt} | \hspace{3pt} 0 < j < \eps^{-1}, \enspace j \in \mathbb{N} \}
\]
and define one subdomain for each node. For this, let~$\Lambda_z$ be the standard hat-function corresponding to a node $z\in\calN^{\eps}$, i.e., $\Lambda_z \in C^0(D)$ is the globally continuous and piecewise linear function with $\Lambda_z(z) = 1$ and $\Lambda_z(\tilde z) = 0$ for any other node $\tilde z\in\calN^{\eps}\setminus\{z\}$. The corresponding subdomain~$D_z^{\eps}$ is defined as the support of~$\Lambda_z$, which equals the union of the two subintervals in~$\calTeps$ which contain the node $z$. Hence we have~$\diam(D_z^{\eps}) = 2\eps$ for all $z\in\calN^{\eps}$. Note that this decomposition is independent of the particular potential~$\Veps$.

This decomposition motivates the definition of the local subspaces~$\V_z^{\eps} := H^1_0(D_z^{\eps})$ for all~$z\in\calN^{\eps}$. 
Note that all these spaces are naturally embedded in $\V=H^1_0(D)$ by the trivial continuation by zero. We also define local projections~$\Pz\colon \V\to \V_z^{\eps}$ w.r.t.~the bilinear form $a_\eps\colon \V\times\V\to \R$, 
\[
  a_\eps(u,v)
  := \langle \calA_\eps u, v \rangle
  = \int_D u'(x)v'(x) + \Veps(x)\, u(x)v(x) \dx, 
\]
by $a_\eps(\calP_{\eps,z} u, v_z) = a_\eps(u, v_z)$ for all $v_z\in\V_z^{\eps}$. Due to the mentioned embedding~$\V_z^{\eps}\hookrightarrow\V$, we can define the sum of all projections, namely $\calP_\eps\colon\V\to\V$, $\calP_\eps := \sum\nolimits_{z\in\calN^{\eps}} \Pz$. It is easily seen that the operator $\calP_\eps$ is local in the sense that it can only increase the support of a function by at most two~$\eps$-layers. Thus, \quotes{information} can only propagate in distances of order~$\eps$. 
\begin{proposition}[locality, cf.~{\cite{AltHP20}}]
\label{prop:local} 
The operator~$\calP_\eps$ maintains locality in the sense that 
\[
  \supp (\calP_\eps v) \subseteq B_{2\eps} \big( \supp v \big), \qquad
  \supp (\calP_\eps \calA_\eps^{-1}f) \subseteq B_{2\eps} \big( \supp f \big)
\]
for all $v\in\V$ and $f\in L^2(D)$. Here, $B_r(\omega) := [\omega - r, \omega + r] \cap D$ denotes the set of points, which have a distance to~$\omega$ smaller or equal to~$r$. 
\end{proposition}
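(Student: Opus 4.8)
The plan is to prove the two support inclusions separately, with the second following from the first together with the locality of $\calA_\eps^{-1}$ composed with the projections. The key structural observation is that each local projection $\Pz$ maps into $\V_z^\eps = H^1_0(D_z^\eps)$, and $D_z^\eps$ is the union of the (at most) two $\eps$-subintervals of $\calTeps$ meeting at the node $z$, so $\supp(\Pz v) \subseteq \overline{D_z^\eps}$ regardless of $v$. Summing over $z$, we get $\supp(\calP_\eps v) \subseteq \bigcup \{\overline{D_z^\eps} : z \in \calN^\eps,\ \Pz v \ne 0\}$, so the task reduces to controlling \emph{which} nodes $z$ can contribute, i.e. for which $z$ the projection $\Pz v$ is nonzero.

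First I would show that $\Pz v = 0$ whenever $\overline{D_z^\eps} \cap \supp v = \emptyset$. This is where the defining property of the projection enters: $a_\eps(\Pz v, w_z) = a_\eps(v, w_z)$ for all $w_z \in \V_z^\eps$. If $\overline{D_z^\eps}$ does not meet $\supp v$, then for any $w_z \in \V_z^\eps$ (extended by zero to $D$) the integrand $v' w_z' + \Veps\, v\, w_z$ vanishes identically, since $v \equiv 0$ on an open neighbourhood of $\overline{D_z^\eps} \supseteq \supp w_z$ — here one should note $v$ is $H^1$, hence $v$ and $v'$ both vanish a.e.\ on the open set $D \setminus \supp v$. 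Thus the right-hand side is $0$, so $a_\eps(\Pz v, w_z) = 0$ for all $w_z \in \V_z^\eps$; taking $w_z = \Pz v$ and using coercivity of $a_\eps$ on $\V_z^\eps$ (which holds since $\tfrac12\|w'\|^2 \le a_\eps(w,w)$ and the Poincar\'e inequality on the bounded interval $D_z^\eps$ controls $\|w\|$ by $\|w'\|$) forces $\Pz v = 0$. Consequently only nodes $z$ with $\overline{D_z^\eps} \cap \supp v \ne \emptyset$ contribute, and since $\diam(D_z^\eps) = 2\eps$, every point of such a $\overline{D_z^\eps}$ lies within distance $2\eps$ of $\supp v$. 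This gives $\supp(\calP_\eps v) \subseteq B_{2\eps}(\supp v)$, the first claim.

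For the second claim, I would apply the first inclusion with $v = \calA_\eps^{-1} f$, which gives $\supp(\calP_\eps \calA_\eps^{-1} f) \subseteq B_{2\eps}(\supp(\calA_\eps^{-1} f))$; but this is not yet what we want, since $\calA_\eps^{-1} f$ need not be localized near $\supp f$. Instead one should observe that the composed operator already satisfies a direct support estimate: writing $u = \calA_\eps^{-1} f$, for each $z$ with $\overline{D_z^\eps} \cap \supp f = \emptyset$ we have $a_\eps(\Pz u, w_z) = a_\eps(u, w_z) = \langle \calA_\eps u, w_z\rangle = \langle f, w_z\rangle = \int_D f\, w_z \dx = 0$ for all $w_z \in \V_z^\eps$, since $w_z$ is supported in $\overline{D_z^\eps}$ which misses $\supp f$; by the same coercivity argument $\Pz u = 0$. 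Hence only nodes $z$ with $\overline{D_z^\eps}$ meeting $\supp f$ contribute to $\calP_\eps \calA_\eps^{-1} f$, and each contributes within distance $2\eps$ of $\supp f$, yielding $\supp(\calP_\eps \calA_\eps^{-1} f) \subseteq B_{2\eps}(\supp f)$.

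The main (and only real) obstacle is the vanishing argument $\Pz v = 0$: one must be careful that the pairing $\langle f, w_z\rangle$ for $f \in L^2(D)$ is genuinely the $L^2$ inner product $\int_D f w_z$, so that it vanishes when $\supp f$ and $\supp w_z$ are disjoint — this is where the assumption $f \in L^2(D)$ (rather than a general element of $\V^*$) is used. Everything else is bookkeeping about supports and the fact that $\diam(D_z^\eps) = 2\eps$. Since this is essentially the argument of \cite{AltHP20}, I would keep the write-up brief and cite that reference for the details of the coercivity/Poincar\'e step.
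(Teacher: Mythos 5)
Your proposal is correct and follows essentially the same route as the paper and the cited reference \cite{AltHP20}: the locality rests on the observation that $\Pz v$ (resp.\ $\Pz\calA_\eps^{-1}f$) vanishes whenever $\overline{D_z^{\eps}}$ misses $\supp v$ (resp.\ $\supp f$), which the paper itself spells out in the discussion following Proposition~\ref{prop:preconditioner}, combined with $\diam(D_z^{\eps})=2\eps$. Nothing further is needed.
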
	
\begin{remark}
A multiple application of the operator~$\calP_\eps$ yields the locality estimate~$\supp (\calP_\eps^k v) \subseteq B_{(k+1)\eps} ( \supp v )$ for all $v\in\V$. 
\end{remark}
Based on the abstract theory for additive subspace correction methods for operator equations~\cite{KorY16} we show that~$\calP_\eps$ can be used to define an optimal preconditioner for the linear Schr\"odinger operator (if scaled accordingly). 
The precise statement reads as follows. 
\begin{proposition}[optimal preconditioner, cf.~{\cite[Th.~3.6]{AltHP20}}]
\label{prop:preconditioner}
There exists a scaling factor $\vartheta_\eps>0$ and a positive constant $\gamma_{\calP_\eps} < 1$ such that  
\[
  \Vvert \id - \vartheta_\eps {\calP_\eps}\Vvert_\eps
  := \sup_{v\in\V} \frac{\Vvert v - \vartheta_\eps\calP_\eps v\Vvert_\eps}{\Vvert v\Vvert_\eps}
	\le \gamma_{\calP_\eps} < 1.
\]
More precisely, we have $\vartheta_\eps = 1/(2+\K^{-1})$ and $\gamma_{\calP_\eps} \le 2/(2+\K^{-1})$ with constant~$\K := 4 + \tfrac{16}{\pi^{2}} \max\{1, \tfrac{5}{\beta}\} (\Lbeps+1)^2$, which which only depends on $\eps$ through~$\Lbeps$.
\end{proposition}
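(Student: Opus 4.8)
The plan is to read the claim as the classical \quotes{optimal additive Schwarz} bound for the overlapping subspace decomposition $\{\V_z^{\eps}\}_{z\in\calN^{\eps}}$ of $\V$ and to make the two governing spectral bounds of $\calP_\eps$ quantitative. Because each $\Pz$ is the $a_\eps$-orthogonal projection onto $\V_z^{\eps}$ and $a_\eps(\cdot,\cdot)=\langle\calA_\eps\cdot,\cdot\rangle$ is symmetric with $a_\eps(v,v)=\Vvert v\Vvert_\eps^2$, a short Galerkin-orthogonality argument shows that $\calP_\eps=\sum_{z\in\calN^{\eps}}\Pz$ is $a_\eps$-self-adjoint and positive semidefinite. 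Denoting by $m_\eps$ and $M_\eps$ the infimum and supremum of the Rayleigh quotient $a_\eps(\calP_\eps v,v)/\Vvert v\Vvert_\eps^2$ over $v\in\V\setminus\{0\}$, we get $\sigma(\calP_\eps)\subseteq[m_\eps,M_\eps]$, hence $\id-\vartheta_\eps\calP_\eps$ is $a_\eps$-self-adjoint and
\[
  \Vvert\id-\vartheta_\eps\calP_\eps\Vvert_\eps=\max_{\mu\in\sigma(\calP_\eps)}|1-\vartheta_\eps\mu|.
\]
Everything then comes down to proving $M_\eps\le 2$ and $m_\eps\ge\K^{-1}$ and plugging in the stated $\vartheta_\eps$.

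For the upper bound I would only use the finite overlap of the patches. From $a_\eps(\Pz v,v)=a_\eps(\Pz v,\Pz v)=\Vvert\Pz v\Vvert_\eps^2$ and the fact that $\Pz v$ is supported in the two-element patch $D_z^{\eps}$,
\[
  \Vvert\Pz v\Vvert_\eps^2=a_\eps(\Pz v,v)\le\Vvert\Pz v\Vvert_\eps\,\Vvert v\Vvert_{\eps,D_z^{\eps}},
\]
so $\Vvert\Pz v\Vvert_\eps\le\Vvert v\Vvert_{\eps,D_z^{\eps}}$. Since (by the construction of the $D_z^{\eps}$ before Proposition~\ref{prop:local}) the patches split into two families of pairwise disjoint intervals, every point of $D$ lies in at most two of them, and therefore
\[
  a_\eps(\calP_\eps v,v)=\sum_{z\in\calN^{\eps}}\Vvert\Pz v\Vvert_\eps^2\le\sum_{z\in\calN^{\eps}}\Vvert v\Vvert_{\eps,D_z^{\eps}}^2\le 2\,\Vvert v\Vvert_\eps^2,
\]
i.e.\ $M_\eps\le 2$.

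The lower bound is the heart of the matter, and this is where the parameter $\Lbeps$ (hence the assumption that $\Veps$ attains the level $\beta\eps^{-2}$ on a sufficiently fine subset) enters. I would exhibit an explicit stable splitting generated by the nodal hat functions $\{\Lambda_z\}_{z\in\calN^{\eps}}$: after a harmless adjustment of the two hats adjacent to $\partial D$ so that $\sum_{z\in\calN^{\eps}}\Lambda_z\equiv 1$ on $D$ while each product $\Lambda_z v$ still belongs to $H^1_0(D_z^{\eps})$, set $v_z:=\Lambda_z v\in\V_z^{\eps}$, so that $v=\sum_z v_z$. Expanding $\Vvert v_z\Vvert_\eps^2=\tfrac12\|\Lambda_z' v+\Lambda_z v'\|^2+\|\Lambda_z v\|_{\Veps}^2$ via the product rule, using $0\le\Lambda_z\le 1$ and $\|\Lambda_z'\|_{L^\infty(D)}\lesssim\eps^{-1}$, summing over $z$, and again invoking the overlap $\le 2$, one reaches a bound of the form
\[
  \sum_{z\in\calN^{\eps}}\Vvert v_z\Vvert_\eps^2\ \lesssim\ \Vvert v\Vvert_\eps^2+\eps^{-2}\,\|v\|^2 .
\]
The seemingly fatal term $\eps^{-2}\|v\|^2$ is then absorbed by the inverse energy estimate~\eqref{estimate_L2norm}, $\|v\|^2\le\tfrac{4}{\pi^{2}}\max\{1,\tfrac{5}{\beta}\}(\Lbeps+1)^2\eps^2\Vvert v\Vvert_\eps^2$: this is precisely what turns the $\eps^{-2}$ into the $(\Lbeps+1)^2$-dependent contribution, and after collecting constants one obtains $\sum_z\Vvert v_z\Vvert_\eps^2\le\K\,\Vvert v\Vvert_\eps^2$. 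A double Cauchy--Schwarz (the abstract stable-splitting lemma of~\cite{KorY16}), namely $\Vvert v\Vvert_\eps^2=\sum_z a_\eps(v_z,\Pz v)\le\big(\sum_z\Vvert v_z\Vvert_\eps^2\big)^{1/2}\big(\sum_z\Vvert\Pz v\Vvert_\eps^2\big)^{1/2}\le\big(\K\,\Vvert v\Vvert_\eps^2\big)^{1/2}\,a_\eps(\calP_\eps v,v)^{1/2}$, then yields $a_\eps(\calP_\eps v,v)\ge\K^{-1}\Vvert v\Vvert_\eps^2$, i.e.\ $m_\eps\ge\K^{-1}$.

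Finally I would combine the two bounds. With $\vartheta_\eps:=1/(2+\K^{-1})$ we have $2\vartheta_\eps<1$, so for every $\mu\in\sigma(\calP_\eps)\subseteq[\K^{-1},2]$ the value $1-\vartheta_\eps\mu$ lies in $[\,1-2\vartheta_\eps,\,1-\vartheta_\eps\K^{-1}\,]\subseteq[0,1)$; consequently $\Vvert\id-\vartheta_\eps\calP_\eps\Vvert_\eps\le 1-\vartheta_\eps\K^{-1}=2/(2+\K^{-1})=:\gamma_{\calP_\eps}<1$, which is the assertion. The genuinely delicate point is the construction of the stable splitting: one must arrange the bookkeeping so that the $\eps$-powers cancel against~\eqref{estimate_L2norm} and so that the resulting constant does not exceed $\K$. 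The boundary adjustment of the partition of unity and the bound $\|\Lambda_z'\|_{L^\infty}\lesssim\eps^{-1}$ are routine, and the remaining Hilbert-space arithmetic transfers essentially verbatim from~\cite{AltHP20} (building on~\cite{KorY16,KorPY18}).
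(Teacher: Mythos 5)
Your argument is correct and is essentially the proof behind the cited result: the paper delegates this proposition to \cite[Th.~3.6]{AltHP20}, whose proof is exactly the additive subspace-correction (Schwarz) argument you reconstruct — overlap~$\le 2$ for the upper spectral bound, a hat-function partition-of-unity splitting whose $\eps^{-2}\|v\|^2$ term is absorbed by the inverse energy estimate~\eqref{estimate_L2norm} for the lower bound $m_\eps\ge\K^{-1}$, and the elementary spectral calculation for $\vartheta_\eps=1/(2+\K^{-1})$. Your bookkeeping in fact yields a splitting constant of the form $2+\tfrac{8}{\pi^2}\max\{1,\tfrac5\beta\}(\Lbeps+1)^2\le\K$, so the stated $\gamma_{\calP_\eps}\le 2/(2+\K^{-1})$ follows a fortiori.
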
	
%
%
%

Before we proceed, let us briefly discuss how Proposition~\ref{prop:preconditioner} can be used to infer locality. Based on $\calP_\eps$, we can define the preconditioner $\tilde{\calP}_\eps:=\vartheta_\eps \calP_\eps \calA_\eps^{-1}$ and consider the preconditioned system
$$\tilde{\calP}_\eps \calA_\eps u_{\eps} = \tilde{\calP}_\eps f$$
for a given local function~$f$. This system can be solved with the simple fixed-point iteration
\begin{align}
\label{fixed-point-iteration}
  u^{(k)}_{\eps} := \tilde{\calP}_\eps f + (\id - \tilde{\calP}_\eps \calA_\eps )\,  u^{(k-1)}_{\eps} =  \tilde{\calP}_\eps f + (\id - \vartheta_\eps \calP_\eps )\,  u^{(k-1)}_{\eps}, 
\end{align}
starting with $u^{(0)}_{\eps} = 0$. Note that the first contribution $\tilde{\calP}_\eps f=\vartheta_\eps \calP_\eps u_{\eps}$ is a local function, because we have
\begin{align*}
\tilde{\calP}_\eps f 
= \vartheta_\eps \hspace{2pt}\sum_{z \in \calN^{\eps}} \Pz (\calA_\eps^{-1} f),
\end{align*}
where $\Pz (\calA^{-1}_\eps f) \in \V_z^{\eps}$ is defined as the solution to the local problem 
$$
a_\eps(\hspace{2pt} \Pz (\calA_\eps^{-1} f) \hspace{2pt} , v_z ) 
= (f , v_z ) \qquad \mbox{for all } v_z \in \V_z^{\eps}.
$$
Hence, if $f$ has no support in the domain of $\V_z^{\eps}$, then $\Pz (\calA^{-1}_\eps f)$ is equal to zero. This ensures locality of $\tilde{\calP}_\eps f$, provided that~$f$ is a local function in the first place. Together with the locality properties of $\calP_\eps$ (cf.~Proposition \ref{prop:local} and the following remark) we conclude that the fixed-point iteration~\eqref{fixed-point-iteration} increases the support in each iteration step by only at most one $\eps$-layer around the support of the previous iterate. To be precise, every connected component of the support grows by one~$\eps$-layer at each end point, hence by a distance of at most~$2\,\eps$. 
Consequently, the generated sequence is local with $\supp(u^{(k)}_{\eps}) \subseteq B_{(k+1)\eps}(\supp f)$. 

The essential question is now, how many iteration steps~$k$ are needed so that $u^{(k)}_{\eps}$ approximates~$u_{\eps}$ up to a given accuracy? Here we can use Proposition~\ref{prop:preconditioner} together with the iteration \eqref{fixed-point-iteration} to see that
\begin{align}
  \Vvert u_{\eps} \Vvert_{\eps, D\setminus B_{(k+1)\eps}(\supp f)} 
  \le \Vvert u_{\eps} - u^{(k)}_{\eps} \Vvert_\eps 
  &= \Vvert (\id - \vartheta_\eps \calP_\eps )\,  (u_{\eps} - u_{\eps}^{(k-1)}) \Vvert_\eps \notag \\
  &= \Vvert (\id - \vartheta_\eps \calP_\eps )^k\,  u_{\eps}  \Vvert_\eps \label{support-growth}\\
  &\le \gamma_{\calP_\eps}^k \hspace{2pt} \Vvert u_{\eps} \Vvert_\eps. \notag
\end{align}
This estimate reveals that the locality of $u_{\eps}$ essentially depends on how well the operator $\calP_\eps$ is suited as a preconditioner, which is expressed through the size of the contraction factor $\gamma_{\calP_\eps} <1$. Ideally, $\gamma_{\calP_\eps}$ should not depend on $\eps$ (or at most in a weak, i.e., logarithmic, way). 
To emphasize this aspect, we shall make two examples for how the structure and the strength of~$\Veps$ influence the size of $\gamma_{\calP_\eps}$.
\begin{example}[weak potential]
In the case where the large values of the potential are scaled with~$\eps^{-1}$ instead of $\eps^{-2}$ we would have $\beta=\eps$ rather than $\beta=\calO(1)$. As a result, we have~$\K \simeq \eps^{-1} (\Lbeps+1)^2$ and the inverse $\K^{-1}$ scales as~$\eps$ (or even higher orders of~$\eps$), which in turn implies for the rate in Proposition~\ref{prop:preconditioner} that
\[
  \gamma_{\calP_\eps} 
  \simeq \frac{2}{2+\K^{-1}}
  \simeq 1 - \eps.
\]
Hence, the operator~$\calP_\eps$ is not suited as preconditioner, since~$k=\calO(\eps^{-1})$ steps would be necessary to reach a reasonable error reduction. Consequently, estimate~\eqref{support-growth} only guarantees smallness of $\Vvert u\Vvert_{\eps, D\setminus B_{r}(\supp f)}$ for some radius $r \simeq 1$, which does not allow to conclude any locality, since the domain $D\setminus B_{r}(\supp f)$ is (almost) vanishing. This example illustrates the requirement of the parameter $\beta$ to be independent of $\eps$. 
\end{example}
\begin{example}[strong potential]
\label{example-strong-potential}
Let us now assume the situation of a potential with~$\beta=5$ and a sufficient amount of peaks in the sense that $\Lbeps = \calO(1)$. Let $\delta>0$ be a small tolerance. In this case, the constant~$\K = 4 + \tfrac{16}{\pi^{2}} (\Lbeps+1)^2$ is bounded independently of~$\eps$, which also leads to an $\eps$-independent rate~$\gamma_{\calP_\eps}$. Applying again estimate \eqref{support-growth}, we obtain that we have indeed locality (exponential decay) with
$$
  \Vvert u_{\eps} \Vvert_{\eps, D\setminus B_{(k+1)\eps}(\supp f)} 
  \le \delta \hspace{2pt} \Vvert u_{\eps} \Vvert_\eps
  \qquad \mbox{for } 
  k = |\log \delta\,| / |\log \gamma_\calP| = \calO(\log(\delta^{-1})).
$$
Note that $\beta=5$ already marks the optimum in the sense that $\K$ does not further improve for larger $\beta$. 
However, even for more realistic $\Lbeps = \calO(\log_{1/p}(\eps^{-1}))$ we obtain an acceptable rate~$\gamma_{\calP_\eps}$ and the above bound for $k = \calO(\log_{1/p}(\eps^{-1}) \log(\delta^{-1}))$.
\end{example}
Motivated by Example~\ref{example-strong-potential}, we shall formulate the following statistical assumption to guarantee an exponential decay of the Green's function: 
\begin{enumerate}[label={(A\arabic*)}]
\item\label{A1} The potential is sufficiently strong in a relevant subregion, i.e.,  
$$
\Lbeps 
\simeq \log_{1/(1-p_{\beta})}(\eps^{-1}) 
= \log_{1-p_{\beta}}\eps , 
$$
for some $\eps$-independent parameter $0 < p_{\beta} < 1$.  
Here we note that we intentionally do not write $\Lbeps 
\simeq \log(\eps^{-1})$ by hiding the factor $|\log(1-p_{\beta})|^{-1}$ in the $\simeq$-notation, because $p_{\beta}$ will later play an important role. 
\end{enumerate}
We would like to emphasize that assumption~\ref{A1} equals a growth condition on~$\Lbeps$ in the following sense. The size of regions without a peak should not grow faster than logarithmically in $\eps$. We shall revisit the assumption~\ref{A1} in more detail in Section \ref{subsection-loc-first-eigenstates} below to clarify the role of $p_{\beta}$.

In the above results, we can see that the randomness of the potential (or disorder) does not play a role to observe localization for elliptic problems with a local source term. Here it is only important that~$\Veps$ is sufficiently strong (i.e.~of order $\eps^{-2}$) in a sufficiently large region (see \ref{A1}). In particular, this includes the case of a constant potential~$\Veps\equiv \beta\eps^{-2}$. 

This changes, however, when we consider the corresponding eigenvalue problem~$\calA_\eps u_{\eps} = \lambda_{\eps} \calI u_{\eps}$. To stress the crucial difference, let us consider the inverse power method, i.e., the fixed-point iteration $v^{(k+1)}_{\eps} = \calA_\eps^{-1}\calI v^{(k)}_{\eps}$ with an additional normalization step for solving the eigenvalue problem. 
If the starting value $v^{(0)}_{\eps}$ is picked in a suitable way (it needs to contain a component from the eigenfunction $u_{\eps,1}$ to the first eigenvalue, i.e., $( v^{(0)}_{\eps} , u_{\eps,1}) \not= 0$), then it is known that $v^{(k)}_{\eps}$ converges linearly to the (normalized) first eigenfunction with rate~$\lambda_{\eps,1} / \lambda_{\eps,2}$, where $\lambda_{\eps,1}$ is the first eigenvalue of the problem and $\lambda_{\eps,2}$ the second. When extending the inverse iteration to a $\tilde{\calP}_\eps$-preconditioned inverse iteration (also called PINVIT, cf.~\cite{DyaO80,BraKP95}), then each iteration step maintains locality (up to an $\eps$-layer per step) and the effective final convergence rate is of order $\calO(\tfrac{\lambda_{\eps,1}}{\lambda_{\eps,2}} + \gamma_{\calP_\eps})$, cf.~\cite{AltHP20} for details. Here we can see that in order to guarantee the localization of the first eigenspace, we do not only require that~$\gamma_{\calP_\eps}$ is sufficiently small, but also a significant spectral gap~$\lambda_{\eps,1} / \lambda_{\eps,2}$. This is exactly the point where the disorder of~$\Veps$ becomes relevant, since smallness of spectral gaps can be typically only guaranteed if the behavior of~$\Veps$ is irregular.

However, often it is not possible to ensure a relevant spectral gap after the first eigenvalue, but only after the first $K$ eigenvalues (where $K$ should be ideally not too large). Therefore it is practically (and theoretically) necessary to generalize PINVIT to a block version, which also involves that the preconditioner $\tilde{\calP}_\eps$ might be applied several times in each step. For the precise construction we refer to \cite{AltHP20}. Using the strategy sketched above, it is possible to prove the following result for the linear case.
\begin{theorem}[abstract localization result for the linear case]
\label{abstract-localization-result-eigenfunctions}
Consider a family of rapidly oscillating potentials as introduced in Section~\ref{sect:linear:potential} and assume~\ref{A1}. 
We consider the first $M_{1,\eps} \in \mathbb{N}$ eigenvalues to the linear problem $\calA_\eps u_{\eps} = \lambda_{\eps} \calI u_{\eps}$, where the eigenfunctions $u_{\eps,j} \in \V$ (corresponding to an eigenvalue~$\lambda_{\eps,j}$) are normalized in $L^2(D)$. Let $M_{2,\eps}>M_{1,\eps}$ be sufficiently large so that we have a relevant spectral gap between the $M_{1,\eps}$\!'th and $M_{2,\eps}$\!'th eigenvalue, in the sense that the gap fulfills
$$
\frac{\lambda_{\eps, M_{1,\eps}}}{\lambda_{\eps,M_{2,\eps}}} \le \frac{1}{2}. 
$$
Then the modulii of the first $M_{1,\eps}$ eigenfunctions are exponentially decaying to zero, outside of an area of size $\calO(M_{2,\eps} \hspace{2pt}\eps \hspace{2pt} \plog(\eps^{-1}))$. To be more precise, there exists a constant~$C_{\mbox{\rm \tiny spec},\eps}$, which depends on the first $M_{1,\eps}$ eigenfunctions and grows at most polynomially with $\eps^{-1}$, such that for any given tolerance $\tol >0$ and $k \gtrsim \log(\tol^{-1}) + \plog(\eps^{-1})$ there exist functions $w_{\eps,j}^{(k)}$, $1\le j \le M_{1,\eps}$, which have support on an area of size $\calO(M_{2,\eps}\, \eps\, k^2)$ and satisfy 
$$
\Vvert u_{\eps,j}  \Vvert_{\eps, D \setminus \supp w_{\eps,j}^{(k)} }   
\le \Vvert u_{\eps,j} - w_{\eps,j}^{(k)} \Vvert_\eps 
\le C_{\mbox{\rm \tiny spec}, \eps} \tol. 
$$
\end{theorem}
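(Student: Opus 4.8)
The plan is to run a block version of the preconditioned inverse iteration (PINVIT) for the linear Schrödinger operator $\calA_\eps$ and to combine two ingredients already prepared in the excerpt: the locality of the preconditioner $\calP_\eps$ (Proposition~\ref{prop:local} and the subsequent remark), and the contraction estimate for $\id-\vartheta_\eps\calP_\eps$ (Proposition~\ref{prop:preconditioner}), whose rate $\gamma_{\calP_\eps}$ is controlled under assumption~\ref{A1} because $\Lbeps\simeq|\log\eps|/|\log(1-p_\beta)|$ keeps the constant~$\K$ polylogarithmic in $\eps^{-1}$. First I would set up the block iteration on the span $U_{1,\eps}:=\sspan\{u_{\eps,1},\dots,u_{\eps,M_{1,\eps}}\}$: start from a set of $M_{1,\eps}$ functions supported on a single $\eps$-cell (or a small number of cells) that together have a nontrivial projection onto $U_{1,\eps}$, apply $\tilde\calP_\eps\calA_\eps$ repeatedly with Rayleigh--Ritz reorthonormalization, and track both (i) how fast the iterates approach $U_{1,\eps}$ and (ii) how far their supports have spread.

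The convergence analysis splits the error into a \emph{spectral} part and a \emph{preconditioner} part. For the spectral part, standard block-inverse-iteration theory (the Kantorovich/Knyazev-type analysis underlying PINVIT, cf.~\cite{KorY16,KorPY18,BraKP95}) gives that after $k$ steps the component of the iterate orthogonal to $U_{1,\eps}$ is reduced by a factor $\calO\big((\lambda_{\eps,M_{1,\eps}}/\lambda_{\eps,M_{2,\eps}})^k+\gamma_{\calP_\eps}^k\big)$, because the block gap condition $\lambda_{\eps,M_{1,\eps}}/\lambda_{\eps,M_{2,\eps}}\le\tfrac12$ provides a genuine spectral separation and Proposition~\ref{prop:preconditioner} controls the perturbation introduced by replacing $\calA_\eps^{-1}$ with the preconditioned solve. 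Hence $k\gtrsim\log(\tol^{-1})+\plog(\eps^{-1})$ iterations suffice to bring the energy-norm error below $C_{\mbox{\tiny spec},\eps}\tol$, where $C_{\mbox{\tiny spec},\eps}$ absorbs the inverse of the initial overlap with $U_{1,\eps}$ and the condition number of the eigenbasis (both at most polynomial in $\eps^{-1}$ by the eigenvalue bounds from the inverse energy estimate~\eqref{estimate_L2norm}). For the support part, I would invoke Proposition~\ref{prop:local} and its remark: one application of $\tilde\calP_\eps\calA_\eps=\vartheta_\eps\calP_\eps$ enlarges the support of each component by at most one $\eps$-layer at each endpoint; but because the block iteration may require several inner applications of the preconditioner per outer step (to realize the effective rate), the support growth per outer step is proportional to the number of inner applications, which is itself $\calO(\log(\eps^{-1}))$ or controlled by the block size times $M_{2,\eps}$. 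Carefully bookkeeping this yields iterates $w_{\eps,j}^{(k)}$ with support of size $\calO(M_{2,\eps}\,\eps\,k^2)$ — the extra factor $k$ beyond the naive $\calO(\eps k)$ coming precisely from the inner applications accumulating over $k$ outer steps. The estimate $\Vvert u_{\eps,j}\Vvert_{\eps,D\setminus\supp w_{\eps,j}^{(k)}}\le\Vvert u_{\eps,j}-w_{\eps,j}^{(k)}\Vvert_\eps$ is then immediate since $w_{\eps,j}^{(k)}$ vanishes outside its support, and the right-hand side is the quantity just bounded.

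The main obstacle I expect is the block version of the bookkeeping: in the single-vector case the argument in~\eqref{support-growth} is transparent, but for a block of $M_{1,\eps}$ vectors the Rayleigh--Ritz step mixes the components, so one must argue that reorthonormalization within the block does not destroy locality (it does not, since it only forms linear combinations of functions already supported on the current set) and that the effective convergence rate of the \emph{worst} Ritz vector still obeys the gap-plus-preconditioner bound uniformly over the block. Quantifying how many inner preconditioner applications are genuinely needed per outer step — and hence pinning down the $M_{2,\eps}\,\eps\,k^2$ scaling rather than something worse — is the delicate point; this is exactly where one leans on the detailed block-PINVIT construction of~\cite{AltHP20} and transcribes its rate/locality trade-off to the present, slightly more general, class of potentials. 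Everything else (the eigenvalue lower bound, the Friedrichs-type inequality, the locality of $\calP_\eps$, the contraction factor) is already in place in the preceding subsections.
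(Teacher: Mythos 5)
Your proposal follows essentially the same route as the paper: the paper itself does not spell out a detailed proof but argues exactly as you do, via a block preconditioned inverse iteration whose convergence rate combines the spectral gap $\lambda_{\eps,M_{1,\eps}}/\lambda_{\eps,M_{2,\eps}}$ with the contraction factor $\gamma_{\calP_\eps}$ from Proposition~\ref{prop:preconditioner}, with the support growth tracked through the locality of $\calP_\eps$ (Proposition~\ref{prop:local}) and the precise block construction (including the multiple inner preconditioner applications responsible for the $M_{2,\eps}\,\eps\,k^2$ support bound) deferred to \cite{AltHP20}. Your identification of the block bookkeeping and the number of inner applications as the delicate point, resolved by transcribing the construction of \cite{AltHP20}, matches the paper's own treatment.
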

To better understand this abstract localization result let us consider the situation where $M_{1,\eps}=M_1$ and $M_{2,\eps}=M_2$ are independent of~$\eps$ with~$M_1<M_2=\calO(1)$ and a given tolerance~$\tol := e^{-k}$, $k>0$. Then Theorem~\ref{abstract-localization-result-eigenfunctions} states that for any eigenfunction $u_{\eps,j}$, $j\le M_1$, there exists a function $w_{\eps,j}^{(k)}$ with~$\Vvert u_{\eps,j} - w_{\eps,j}^{(k)} \Vvert_\eps \lesssim e^{-k}$. Moreover, the area of the support of $w_{\eps,j}^{(k)}$ is -- up to logarithmic terms -- bounded by~$\calO(\eps\, k^2)$. Hence, the exponential decay property of the first $M_1$ eigenfunctions follows from the fact that $u_{\eps,j}$ can be well approximated by a local function. An exponential decay in the sense of Theorem~\ref{abstract-localization-result-eigenfunctions} will be called {\em exponential localization}. 

As we can see from Theorem \ref{abstract-localization-result-eigenfunctions}, there are two important mechanisms that ensure locality of eigenstates. First, we ensure by~\ref{A1} that~$\Veps$ is sufficiently large in order to guarantee the decay of the Green's function, which leads to a small $\gamma_{\calP_\eps}$. Second, we require a sufficiently large spectral gap in the lower part of the spectrum, i.e., $M_{2,\eps}$ must not become too big so that the size of the localization region fulfills $M_{2,\eps}\, \eps\, \plog(1/\eps) \rightarrow 0$ for $\eps\rightarrow 0$. As already mentioned above, this second property is closely related to disorder as we will see in the next subsection.
%
%
\subsection{Localization of first $\calO(\eps^{-1})$ eigenstates}\label{subsection-loc-first-eigenstates}
With the help of the previously presented abstract localization result in Theorem~\ref{abstract-localization-result-eigenfunctions} we now show that the first~$\calO(\eps^{-1})$ eigenstates of the linear Schr\"odinger eigenvalue problem localize in disorder potentials. Note that there is no clear separation of localized and global eigenstates. One possible measure of localization is the $L^1$-norm of an eigenfunction. Since the eigenfunctions are normalized in~$L^2(D)$, it clearly holds $\| u_{\eps} \|_{L^1(D)} \le 1$. On the other hand, an eigenfunction that takes globally a small value cannot fulfill the condition $\| u_{\eps} \|_{L^2(D)} = 1$. Hence, a small~$L^1$-norm means that the function is concentrated in a small subdomain. 
A numerical investigation of localization is given in Figure~\ref{fig:localization}. Therein one can observe that a large amount of eigenstates have a small $L^1$-norm. As an example, for~$\eps=2^{-12}$ we observe more than~$\frac 12\, \eps^{-1}$ eigenstates with an $L^1$-norm smaller than~$0.1$. 
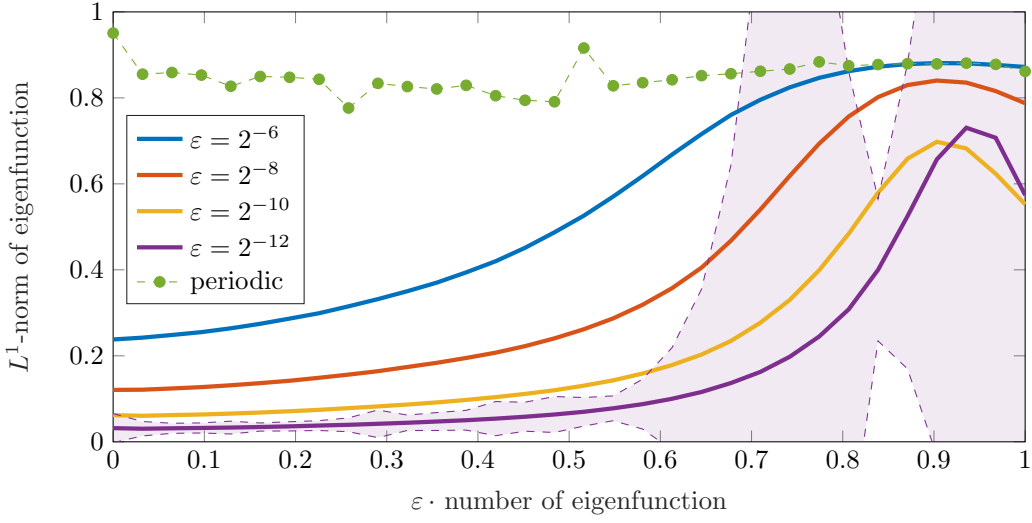
\begin{figure}
\centering
%
%
%
\usepgfplotslibrary{fillbetween}
\begin{tikzpicture}

\begin{axis}[%
width=0.8\textwidth,
height=0.38\textwidth,
at={(0.758in,0.481in)},
scale only axis,
xmin=0,
xmax=1,
xlabel style={font=\color{white!15!black}},
xlabel={$\eps\, \cdot $ number of eigenfunction},
ymin=0,
ymax=1.0,
ylabel style={font=\color{white!15!black}},
ylabel={$L^1$-norm of eigenfunction},
axis background/.style={fill=white},
title style={font=\bfseries},
legend style={at={(0.21,0.32)}, anchor=south east, legend cell align=left, align=left, draw=white!15!black}
]


\addplot [color=mycolor1, ultra thick]
  table[row sep=crcr]{%
     0    0.2380\\
0.0323    0.2423\\
0.0645    0.2485\\
0.0968    0.2551\\
0.1290    0.2640\\
0.1613    0.2745\\
0.1935    0.2866\\
0.2258    0.2991\\
0.2581    0.3151\\
0.2903    0.3319\\
0.3226    0.3503\\
0.3548    0.3702\\
0.3871    0.3941\\
0.4194    0.4200\\
0.4516    0.4512\\
0.4839    0.4873\\
0.5161    0.5259\\
0.5484    0.5706\\
0.5806    0.6182\\
0.6129    0.6682\\
0.6452    0.7159\\
0.6774    0.7603\\
0.7097    0.7956\\
0.7419    0.8244\\
0.7742    0.8466\\
0.8065    0.8621\\
0.8387    0.8725\\
0.8710    0.8786\\
0.9032    0.8809\\
0.9355    0.8804\\
0.9677    0.8767\\
1.0000    0.8719\\
};
\addlegendentry{$\eps = 2^{-6}$}


\addplot [color=mycolor2, ultra thick]
  table[row sep=crcr]{%
     0    0.1207\\
0.0323    0.1212\\
0.0645    0.1241\\
0.0968    0.1272\\
0.1290    0.1315\\
0.1613    0.1365\\
0.1935    0.1420\\
0.2258    0.1489\\
0.2581    0.1563\\
0.2903    0.1642\\
0.3226    0.1740\\
0.3548    0.1837\\
0.3871    0.1950\\
0.4194    0.2073\\
0.4516    0.2225\\
0.4839    0.2406\\
0.5161    0.2618\\
0.5484    0.2872\\
0.5806    0.3188\\
0.6129    0.3574\\
0.6452    0.4052\\
0.6774    0.4678\\
0.7097    0.5403\\
0.7419    0.6186\\
0.7742    0.6936\\
0.8065    0.7566\\
0.8387    0.8018\\
0.8710    0.8298\\
0.9032    0.8403\\
0.9355    0.8355\\
0.9677    0.8158\\
1.0000    0.7876\\
};
\addlegendentry{$\eps = 2^{-8}$}


\addplot [color=mycolor3, ultra thick]
  table[row sep=crcr]{%
     0    0.0619\\
0.0323    0.0605\\
0.0645    0.0622\\
0.0968    0.0635\\
0.1290    0.0656\\
0.1613    0.0681\\
0.1935    0.0711\\
0.2258    0.0743\\
0.2581    0.0783\\
0.2903    0.0823\\
0.3226    0.0867\\
0.3548    0.0916\\
0.3871    0.0974\\
0.4194    0.1039\\
0.4516    0.1114\\
0.4839    0.1199\\
0.5161    0.1306\\
0.5484    0.1430\\
0.5806    0.1589\\
0.6129    0.1786\\
0.6452    0.2030\\
0.6774    0.2346\\
0.7097    0.2771\\
0.7419    0.3300\\
0.7742    0.3993\\
0.8065    0.4839\\
0.8387    0.5794\\
0.8710    0.6585\\
0.9032    0.6979\\
0.9355    0.6823\\
0.9677    0.6234\\
1.0000    0.5524\\
};
\addlegendentry{$\eps = 2^{-10}$}


\addplot [color=mycolor4, ultra thick]
table[row sep=crcr]{%
     0    0.0318\\
0.0323    0.0304\\
0.0645    0.0314\\
0.0968    0.0322\\
0.1290    0.0331\\
0.1613    0.0344\\
0.1935    0.0359\\
0.2258    0.0378\\
0.2581    0.0395\\
0.2903    0.0419\\
0.3226    0.0443\\
0.3548    0.0472\\
0.3871    0.0502\\
0.4194    0.0539\\
0.4516    0.0583\\
0.4839    0.0635\\
0.5161    0.0698\\
0.5484    0.0779\\
0.5806    0.0876\\
0.6129    0.1001\\
0.6452    0.1159\\
0.6774    0.1367\\
0.7097    0.1626\\
0.7419    0.1978\\
0.7742    0.2450\\
0.8065    0.3081\\
0.8387    0.3999\\
0.8710    0.5247\\
0.9032    0.6570\\
0.9355    0.7305\\
0.9677    0.7072\\
1.0000    0.5740\\
};
\addlegendentry{$\eps = 2^{-12}$}

\addplot [color=mycolor5, mark=*, mark options={solid, mycolor5}, dashed]
  table[row sep=crcr]{%
     0    0.9507\\
0.0323    0.8551\\
0.0645    0.8590\\
0.0968    0.8530\\
0.1290    0.8270\\
0.1613    0.8498\\
0.1935    0.8479\\
0.2258    0.8433\\
0.2581    0.7764\\
0.2903    0.8338\\
0.3226    0.8263\\
0.3548    0.8207\\
0.3871    0.8292\\
0.4194    0.8051\\
0.4516    0.7944\\
0.4839    0.7909\\
0.5161    0.9160\\
0.5484    0.8281\\
0.5806    0.8353\\
0.6129    0.8421\\
0.6452    0.8517\\
0.6774    0.8561\\
0.7097    0.8619\\
0.7419    0.8671\\
0.7742    0.8838\\
0.8065    0.8748\\
0.8387    0.8775\\
0.8710    0.8796\\
0.9032    0.8790\\
0.9355    0.8808\\
0.9677    0.8779\\
1.0000    0.8616\\
};
\addlegendentry{periodic} 


\addplot [name path=conf1, color=mycolor4, dashed] 
table[row sep=crcr]{%
     0    0.0656\\
0.0323    0.0472\\
0.0645    0.0431\\
0.0968    0.0439\\
0.1290    0.0480\\
0.1613    0.0438\\
0.1935    0.0467\\
0.2258    0.0492\\
0.2581    0.0556\\
0.2903    0.0741\\
0.3226    0.0621\\
0.3548    0.0680\\
0.3871    0.0732\\
0.4194    0.0936\\
0.4516    0.0921\\
0.4839    0.1053\\
0.5161    0.1030\\
0.5484    0.1068\\
0.5806    0.1457\\
0.6129    0.2191\\
0.6452    0.3552\\
0.6774    0.6437\\
0.7097    1.2107\\
0.7419    1.5047\\
0.7742    1.2976\\
0.8065    0.8617\\
0.8387    0.5652\\
0.8710    0.8798\\
0.9032    1.3469\\
0.9355    1.6161\\
0.9677    1.5760\\
1.0000    1.1883\\
};

\addplot [name path=conf2, color=mycolor4, dashed] 
table[row sep=crcr]{%
     0   -0.0019\\
0.0323    0.0137\\
0.0645    0.0197\\
0.0968    0.0205\\
0.1290    0.0182\\
0.1613    0.0249\\
0.1935    0.0252\\
0.2258    0.0263\\
0.2581    0.0235\\
0.2903    0.0097\\
0.3226    0.0264\\
0.3548    0.0264\\
0.3871    0.0272\\
0.4194    0.0142\\
0.4516    0.0246\\
0.4839    0.0217\\
0.5161    0.0366\\
0.5484    0.0491\\
0.5806    0.0296\\
0.6129   -0.0190\\
0.6452   -0.1233\\
0.6774   -0.3704\\
0.7097   -0.8854\\
0.7419   -1.1091\\
0.7742   -0.8076\\
0.8065   -0.2455\\
0.8387    0.2346\\
0.8710    0.1696\\
0.9032   -0.0329\\
0.9355   -0.1552\\
0.9677   -0.1615\\
1.0000   -0.0403\\
};

\addplot[opacity=0.1, color=mycolor4] 
fill between[ 
of = conf1 and conf2
];

\end{axis}
\end{tikzpicture}%
\caption{Illustration of the localization of the first $\calO(\eps^{-1})$ eigenstates for a piecewise constant potential~$\Veps$ with uniformly distributed values in the range~$[0,V_\text{max}\eps^{-2}]$ with~$V_\text{max}=5$. The plot shows the~$L^1$-norm of the first $\eps^{-1}$ eigenstates, averaged over 10.000 draws and normalized in $L^2(D)$. A small $L^1$-norm indicates localization and can be observed for a large range of eigenfunctions. 
Further, the plot includes the periodic case (no localization) and the $95\%$ confidence interval, both for $\eps=2^{-12}$. }
\label{fig:localization}
\end{figure}

Recall that a {\em valley} denotes an interval that forms a (maximal) connectivity component of~$\Oaeps$ and that its diameter is always a multiple of~$\eps$. To be precise, a (connected) interval $I_{\alpha}^{\eps} \subseteq \Oaeps$ is a valley if $\sup\nolimits_{x\in I_{\alpha}^{\eps}} \Veps(x) \le \alpha\, \eps^{-2} |\log\eps|^{-2}$ and if there is no other connected subset of $\Oaeps$ that contains $I_{\alpha}^{\eps}$. By $N_{\alpha,\ell}^{\eps}$ we denote the number of valleys with width~$\ell\eps$, $\ell \le \Laeps$. As a statistical assumption, we expect that this number fulfills
\begin{enumerate}[label={(A\arabic*)},resume]
\item\label{A2} 
$N_{\alpha,\ell}^{\eps} \simeq p_\alpha^{\ell}\, \eps^{-1}$ 
for some $\eps$-independent parameter $0 < p_{\alpha} < 1$.
\end{enumerate}
In the following, we shall add this to our general set of assumptions and note that the parameters $p_{\beta}$ in \ref{A1} and $p_{\alpha}$ in \ref{A2} are typically related. 
Assumption~\ref{A2} corresponds to a Bernoulli distribution of the potential, where the probability of a subinterval~$\overline{T^\eps}\in \calTeps$ being an element of~$\calTeps_\alpha$ equals~$p_\alpha$, cf.~Example~\ref{exp:Bernoulli}. 

On the other hand, we may define the probability of an element being part of $\calTeps_\beta$ by~$p_\beta$. Note that such a probability $p_\beta$ would correspond to the same parameter appearing in assumption~\ref{A1}. Accordingly, we define {\em non-peaks} as (maximal) connectivity components of~$D\setminus\Obeps$. 
The number of such non-peaks with width~$\ell\eps$, $\ell \le \Lbeps$ is consequently assumed to satisfy 
\begin{enumerate}[label={(A\arabic*)},resume]
\item\label{A3} 
$N_{\neg\beta,\ell}^{\eps} \lesssim (1-p_\beta)^{\ell}\, \eps^{-1}$ for the parameter~$p_{\beta}$ from \ref{A1}.  
\end{enumerate}

For the localization of eigenstates we need sufficiently many peaks, which is equivalent to $\Lbeps$ being small enough and particularly not in the range of~$\eps^{-1}$. 
Recall that this was already an assumption in the abstract localization result in Theorem~\ref{abstract-localization-result-eigenfunctions}. In fact, for Bernoulli distributions we have 
\[
  \Lbeps 
  \simeq \log_{1/(1-p_\beta)} (\eps^{-1})
  = \log_{1-p_\beta} \eps
\]
which corresponds to our previous assumption~\ref{A1}.
Further, we need sufficiently large valleys of varying size (which excludes the periodic case) in order to ensure suitable spectral gaps. For that we assume 
\begin{enumerate}[label={(A\arabic*)},resume]
\item\label{A4} $\Laeps \gtrsim \log_{p_\alpha} \eps$ for the parameter $p_{\alpha}$ as in \ref{A2}.   
\end{enumerate}
\begin{remark}
If the potential only contains the two values, then we have $p_\alpha = 1 - p_\beta$ and $\Laeps=\Lbeps$. The assumptions~\ref{A1}-\ref{A4} correspond to the statistical properties of a Bernoulli distribution of potential values and, hence, are satisfied in this case. 
\end{remark}
In order to apply the abstract localization result, we need to investigate spectral gaps. 
For this, we need to find upper and lower bounds. 
\begin{lemma}[spectral bounds from above]
\label{lem:upperBounds}
Consider a family of rapidly oscillating potentials as introduced in Section~\ref{sect:linear:potential} as well as assumptions~\ref{A2} and~\ref{A4}. 
Consider a parameter~$0<q_1\le1$ and set~$\ell_{1,\eps} := \lceil \log_{p_\alpha}(\eps^{q_1}) \rceil$. Then there exist~$M_{1,\eps}\simeq p_\alpha^{\ell_{1,\eps}-1} \eps^{-1} \simeq \eps^{q_1-1}$ disjoint (and thus orthogonal) functions~$v\in\V$ with~$\Vvert v\Vvert_\eps^2 \lesssim (\ell_{1,\eps} \eps)^{-2} \|v\|^2$. As a consequence, we have the upper eigenvalue bound
\[
  \lambda_{\eps,1}
  \le \lambda_{\eps,M_{1,\eps}}
  \lesssim  \frac{1}{(\ell_{1,\eps}\eps)^2}.
\]	
\end{lemma}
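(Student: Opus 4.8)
The plan is to exhibit an explicit $M_{1,\eps}$-dimensional subspace $S\subseteq\V$ on which the Rayleigh quotient $\Vvert\cdot\Vvert_\eps^2/\|\cdot\|^2$ is uniformly bounded by $\lesssim(\ell_{1,\eps}\eps)^{-2}$, and then to read off the eigenvalue bound from the Courant--Fischer min-max characterization
\[
  \lambda_{\eps,M} \;=\; \min_{\substack{S\subseteq\V\\ \dim S=M}}\ \max_{v\in S\setminus\{0\}}\ \frac{\Vvert v\Vvert_\eps^2}{\|v\|^2}.
\]
Monotonicity of the sequence $\lambda_{\eps,j}$ then gives $\lambda_{\eps,1}\le\lambda_{\eps,M_{1,\eps}}$ for free, so the only real work is the construction of $S$ together with the Rayleigh-quotient estimate.

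For the construction I would use the large valleys provided by the statistical assumptions. Since $q_1\le1$, the width parameter $\ell_{1,\eps}=\lceil\log_{p_\alpha}(\eps^{q_1})\rceil$ behaves like $\tfrac{q_1}{|\log p_\alpha|}|\log\eps|$, so by \ref{A4} we have $\ell_{1,\eps}\le\Laeps$ for $\eps$ small enough; hence $\Oaeps$ contains valleys of width $\ell_{1,\eps}\eps$, and by \ref{A2} there are $N_{\alpha,\ell_{1,\eps}}^{\eps}\simeq p_\alpha^{\ell_{1,\eps}}\eps^{-1}\simeq p_\alpha^{\ell_{1,\eps}-1}\eps^{-1}\simeq\eps^{q_1-1}$ of them; this is the announced number $M_{1,\eps}$. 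On each such valley $(a,a+\ell_{1,\eps}\eps)$ I put the test function $v_J(x):=\sin\!\big(\pi(x-a)/(\ell_{1,\eps}\eps)\big)$, continued by zero outside. Each $v_J$ lies in $H^1_0$ of its valley, hence in $\V$, and since the valleys are pairwise disjoint the $v_J$ have disjoint supports and are therefore orthogonal both in $L^2(D)$ and with respect to $a_\eps(\cdot,\cdot)$; their span $S$ is $M_{1,\eps}$-dimensional.

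The core step is bounding $\Vvert v_J\Vvert_\eps^2$. The kinetic part is computed exactly, $\tfrac12\|v_J'\|^2=\tfrac{\pi^2}{2(\ell_{1,\eps}\eps)^2}\|v_J\|^2$. For the potential part, $v_J$ is supported inside a valley, so $\Veps\le\alpha\,\eps^{-2}|\log\eps|^{-2}$ there and $\|v_J\|_{\Veps}^2\le\alpha\,\eps^{-2}|\log\eps|^{-2}\|v_J\|^2$. Because $\ell_{1,\eps}\eps\simeq\eps|\log\eps|$, we have $(\ell_{1,\eps}\eps)^{-2}\simeq\eps^{-2}|\log\eps|^{-2}$, so the potential contribution is absorbed and $\Vvert v_J\Vvert_\eps^2\lesssim(\ell_{1,\eps}\eps)^{-2}\|v_J\|^2$ with an $\eps$-independent constant depending only on $\alpha$, $p_\alpha$, $q_1$. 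By orthogonality, for $v=\sum c_J v_J\in S$ one gets $\Vvert v\Vvert_\eps^2=\sum|c_J|^2\Vvert v_J\Vvert_\eps^2\le\big(\max_J\Vvert v_J\Vvert_\eps^2/\|v_J\|^2\big)\,\|v\|^2\lesssim(\ell_{1,\eps}\eps)^{-2}\|v\|^2$, so the min-max principle yields $\lambda_{\eps,M_{1,\eps}}\lesssim(\ell_{1,\eps}\eps)^{-2}$, as claimed.

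The main obstacle is the interplay of the logarithmic scales, which must be checked carefully: $\ell_{1,\eps}$ has to be simultaneously small enough — $\ell_{1,\eps}\lesssim|\log\eps|$ — so that the valley bound $\alpha\,\eps^{-2}|\log\eps|^{-2}$ on the potential is dominated by $(\ell_{1,\eps}\eps)^{-2}$ (this is precisely the purpose of the $|\log\eps|^{-2}$ factor built into the definition of $\calTeps_\alpha$), and not larger than $\Laeps$, so that valleys of the prescribed width exist and \ref{A2} applies to count them as $\simeq p_\alpha^{\ell_{1,\eps}-1}\eps^{-1}$. Everything else — the sine test function, the disjoint-support orthogonality, and the min-max step — is routine.
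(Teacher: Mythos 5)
Your proposal is correct and follows essentially the same route as the paper's proof in Appendix~A.1: Dirichlet--Laplace (sine) bump functions supported in large valleys, the Rayleigh-quotient bound combining the kinetic term $\simeq(\ell_{1,\eps}\eps)^{-2}$ with the valley potential bound $\alpha\,\eps^{-2}|\log\eps|^{-2}$, counting via \ref{A2}, and the min-max principle. The only (cosmetic) difference is that the paper counts all valleys of width at least $\ell_{1,\eps}\eps$ via a geometric series over widths $j\eps$, $\ell_{1,\eps}\le j\le \Laeps$, whereas you use only those of width exactly $\ell_{1,\eps}\eps$; both give $M_{1,\eps}\simeq p_\alpha^{\ell_{1,\eps}-1}\eps^{-1}$.
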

\begin{proof}
The idea of the proof is to construct shifted Laplace eigenfunctions in the largest valleys. Details are given in Appendix~\ref{app:EVbounds:upper}	
\end{proof}
\begin{lemma}[spectral bounds from below]
\label{lem:lowerBounds}
Consider a family of rapidly oscillating potentials as introduced in Section~\ref{sect:linear:potential} as well as assumptions~\ref{A1} and~\ref{A3}. 
Consider a parameter~$0<q_2<1$ and set~$\ell_{2,\eps} := \lceil \log_{1-p_\beta}(\eps^{q_2}) \rceil$. Then there exists~$M_{2,\eps}\lesssim q_2^{-1} \eps^{q_2-1}$ with 
\[
  \lambda_{\eps,M_{2,\eps}}
  \gtrsim \frac{1}{(\ell_{2,\eps}\eps)^2}.
\]	
\end{lemma}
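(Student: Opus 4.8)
\textbf{Proof strategy for Lemma~\ref{lem:lowerBounds}.}
The plan is to bound $\lambda_{\eps,M_{2,\eps}}$ from below by counting how many eigenvalues can lie below the threshold $\tfrac{1}{(\ell_{2,\eps}\eps)^2}$ and showing this count is at most $M_{2,\eps}\lesssim q_2^{-1}\eps^{q_2-1}$. By the min-max principle, the number of eigenvalues of $\calA_\eps$ strictly below a level $\Theta$ equals the maximal dimension of a subspace $W\subseteq\V$ on which $\Vvert v\Vvert_\eps^2<\Theta\,\|v\|^2$. So I would fix $\Theta := c\,(\ell_{2,\eps}\eps)^{-2}$ for a suitable small constant $c$ and argue that any such subspace has dimension $\lesssim q_2^{-1}\eps^{q_2-1}$; then the $(M_{2,\eps}+1)$-th eigenvalue — hence certainly $\lambda_{\eps,M_{2,\eps}+1}$, and we can absorb the index shift — exceeds $\Theta$, which is the claim up to renaming constants.

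The core mechanism is a localized version of the inverse energy estimate \eqref{estimate_L2norm}, but applied \emph{per non-peak}. Consider the (maximal) connectivity components of $D\setminus\Obeps$, i.e.\ the non-peaks; by assumption~\ref{A1} each has width at most $(\Lbeps+1)\eps$, and by \ref{A3} the number of non-peaks of width $\ell\eps$ is $\lesssim (1-p_\beta)^\ell\eps^{-1}$. First I would run exactly the cut-off argument preceding Conclusion~\ref{estimate_L2norm}, but localized to a single non-peak $Q$ together with its two bordering peak-cells: the cut-off $\eta_\eps$ vanishes in the interior of the adjacent $\beta$-cells, so $\eta_\eps v$ is supported in a set whose connected pieces have diameter $\le (\mathrm{width}(Q)/\eps + 1)\eps$, giving
\begin{align*}
  \|v\|^2_{L^2(\tilde Q)}
  \lesssim (\mathrm{width}(Q)/\eps+1)^2\,\eps^2\,\Vvert v\Vvert_{\eps,\tilde Q}^2
\end{align*}
on the enlarged patch $\tilde Q$. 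Summing over all non-peaks (with bounded overlap of the enlargements) and separating the sum by non-peak width $\ell$, the factor $(1-p_\beta)^\ell$ from \ref{A3} controls the growth in $\ell^2$, so the \emph{global} constant is of the shape $(\ell_{2,\eps}\eps)^2$ once we restrict attention to the eigenfunctions whose ``width of influence'' is $\le \ell_{2,\eps}$; here $\ell_{2,\eps}=\lceil\log_{1-p_\beta}(\eps^{q_2})\rceil$ is chosen precisely so that $(1-p_\beta)^{\ell_{2,\eps}}\simeq \eps^{q_2}$, which is what will convert the non-peak count into the eigenvalue count.

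\textbf{Counting.} For the dimension bound I would build, for each non-peak $Q$ of width $\ell\eps$ with $\ell\le\ell_{2,\eps}$, a small family of trial functions supported essentially on $Q$ (Dirichlet-Laplacian-type modes on the patch, the first $\calO(\ell)$ of them, whose Rayleigh quotient is $\simeq (\ell\eps)^{-2}\le(\ell_{2,\eps}\eps)^{-2}$ up to the small contribution of $\Veps$ inside the non-peak — recall on a non-peak the potential need not be below $\alpha\eps^{-2}|\log\eps|^{-2}$, but it is genuinely below $\beta\eps^{-2}$, and controlling this cross term is the one place where the precise constant $c$ in $\Theta$ has to be fixed carefully). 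These families are orthogonal across distinct non-peaks, so their total count lower-bounds the dimension of the low-energy subspace. Conversely — and this is the direction we actually need — the localized Friedrichs estimate above shows that \emph{any} $v$ with $\Vvert v\Vvert_\eps^2<\Theta\|v\|^2$ must have most of its mass on the union of non-peaks of width $\ge c'\ell_{2,\eps}$, and there are at most $\sum_{\ell\ge c'\ell_{2,\eps}}(1-p_\beta)^\ell\eps^{-1}\lesssim (1-p_\beta)^{\ell_{2,\eps}}\eps^{-1}\simeq \eps^{q_2-1}$ of those; the extra $q_2^{-1}$ comes from the number of admissible Laplace modes per such non-peak, i.e.\ from $\ell_{2,\eps}\simeq q_2^{-1}|\log\eps|$ divided by $|\log\eps|$ — more carefully, from summing a geometric-type series whose ratio degrades like $1-q_2$. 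Hence the low-energy subspace has dimension $\lesssim q_2^{-1}\eps^{q_2-1}=:M_{2,\eps}$, and $\lambda_{\eps,M_{2,\eps}+1}\ge\Theta\gtrsim (\ell_{2,\eps}\eps)^{-2}$, which (shifting the index by one, harmlessly) gives the asserted bound.

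\textbf{Main obstacle.} The delicate point is decoupling: the cut-off $\eta_\eps$ is global and the Friedrichs inequality \eqref{eq:cutoff} only sees $\Lbeps$, the \emph{largest} non-peak. To get the \emph{weighted} count with the factor $(1-p_\beta)^\ell$ I must refine the argument so the contribution of each non-peak enters with its own width, which forces a genuinely local cut-off/patching scheme with finite overlap, together with a careful bookkeeping of the $\eps^2\ell^2$ constants against the geometric decay from \ref{A3}. I expect this per-width decoupling — rather than any single inequality — to be where the real work lies; the rest is min-max plus the bounds already set up in Section~\ref{sect:linear} and the parallel construction used for Lemma~\ref{lem:upperBounds}.
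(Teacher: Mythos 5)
Your high-level plan (max-min counting of eigenvalues below a threshold $\Theta\simeq(\ell_{2,\eps}\eps)^{-2}$, with the count driven by \ref{A3} and with roughly $\Lbeps/\ell_{2,\eps}\simeq q_2^{-1}$ admissible modes per wide non-peak) is in the right spirit, and it is essentially the picture behind the paper's proof. But the step you actually need is missing: knowing that every $v$ with $\Vvert v\Vvert_\eps^2<\Theta\|v\|^2$ concentrates its mass on the union of non-peaks of width $\gtrsim\ell_{2,\eps}\eps$ does \emph{not} bound the dimension of the low-energy subspace — that union supports an infinite-dimensional space, and your localized Friedrichs estimate, applied to a general $v$, can only ever produce the constant given by the width of the widest non-peak, i.e.\ $\Lbeps\,\eps$, never $\ell_{2,\eps}\eps$. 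Converting "concentration on wide non-peaks plus $\lesssim j/\ell_{2,\eps}$ low modes per non-peak of width $j\eps$" into an actual dimension bound requires an explicit finite-dimensional space whose complement satisfies the improved inequality, and that construction is precisely what your proposal defers to the "main obstacle" paragraph without supplying it. (Your trial-function construction on wide non-peaks only gives the opposite inequality, a lower bound on the number of small eigenvalues, which is not what the lemma needs.)

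The paper closes exactly this gap with a quasi-interpolation device: on each non-peak of width $j\eps$ with $j\ge\ell_{2,\eps}$ it places a uniform grid of mesh size $\simeq\ell_{2,\eps}\eps$ and takes the nodal interpolation operator $\Pi$ (extended linearly to zero inside the adjacent $\beta$-cells). Its image has dimension $M_{2,\eps}\simeq\sum_j N^{\eps}_{\neg\beta,j}(\lfloor j/\ell_{2,\eps}\rfloor+2)\lesssim q_2^{-1}\eps^{q_2-1}$, and one proves $\|v-\Pi v\|\lesssim\ell_{2,\eps}\,\eps\,\Vvert v\Vvert_\eps$: for $v\in\ker\Pi$ the zeros enforced at the interpolation nodes (spacing $\ell_{2,\eps}\eps$ inside wide non-peaks) together with the cut-off zeros at peak midpoints, as in Lemma~\ref{lem:cutoff}, give the Wirtinger/Friedrichs constant $\ell_{2,\eps}\eps$ uniformly — narrow non-peaks are automatically fine because they are bracketed by peaks within distance $\le\ell_{2,\eps}\eps$. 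The max-min principle applied to $\ker\Pi$ then yields $\lambda_{\eps,M_{2,\eps}}\gtrsim(\ell_{2,\eps}\eps)^{-2}$. Two smaller inaccuracies in your write-up: $\ell_{2,\eps}\simeq q_2\,|\log\eps|/|\log(1-p_\beta)|$, not $q_2^{-1}|\log\eps|$, and the factor $q_2^{-1}$ does not come from a geometric series with ratio degrading like $1-q_2$; it is simply $\Lbeps/\ell_{2,\eps}\simeq q_2^{-1}$, the maximal number of degrees of freedom assigned to a single wide non-peak.
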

\begin{proof}
The proof is given in Appendix~\ref{app:EVbounds:lower}. 
\end{proof}
\begin{corollary}[spectral gap and localization]
\label{cor:linearLocalizaton}
Consider a family of rapidly oscillating potentials as introduced in Section~\ref{sect:linear:potential} satisfying assumptions~\ref{A1}-\ref{A4}. Then, asymptotically, the first $\calO(\eps^{-1})$ eigenstates localize exponentially. 
\end{corollary}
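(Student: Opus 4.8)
The plan is to feed the two spectral estimates of Lemmas~\ref{lem:upperBounds} and~\ref{lem:lowerBounds} into the spectral gap hypothesis of the abstract localization result, Theorem~\ref{abstract-localization-result-eigenfunctions}, and then read off the size of the localization region. I would introduce two free parameters $0<q_2<q_1\le 1$, to be calibrated at the very end, and set $\ell_{1,\eps}:=\lceil\log_{p_\alpha}(\eps^{q_1})\rceil$ and $\ell_{2,\eps}:=\lceil\log_{1-p_\beta}(\eps^{q_2})\rceil$ as in those lemmas. Invoking Lemma~\ref{lem:upperBounds} (which uses~\ref{A2} and~\ref{A4}) yields $M_{1,\eps}\simeq\eps^{q_1-1}$ eigenvalues with $\lambda_{\eps,M_{1,\eps}}\lesssim(\ell_{1,\eps}\eps)^{-2}$, while Lemma~\ref{lem:lowerBounds} (which uses~\ref{A1} and~\ref{A3}) yields an index $M_{2,\eps}\lesssim q_2^{-1}\eps^{q_2-1}$ with $\lambda_{\eps,M_{2,\eps}}\gtrsim(\ell_{2,\eps}\eps)^{-2}$.

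Next I would divide the two bounds. Since $\ell_{1,\eps}\simeq q_1\,|\log\eps|\,|\log p_\alpha|^{-1}$ and $\ell_{2,\eps}\simeq q_2\,|\log\eps|\,|\log(1-p_\beta)|^{-1}$, the quotient
\[
  \frac{\lambda_{\eps,M_{1,\eps}}}{\lambda_{\eps,M_{2,\eps}}}
  \ \lesssim\ \Big(\frac{\ell_{2,\eps}}{\ell_{1,\eps}}\Big)^{2}
  \ \simeq\ \Big(\frac{q_2}{q_1}\cdot\frac{|\log p_\alpha|}{|\log(1-p_\beta)|}\Big)^{2}
\]
is bounded by a quantity that does \emph{not} depend on $\eps$ --- all genuine $\eps$-dependence cancels, only the rounding in $\ell_{1,\eps},\ell_{2,\eps}$ being swallowed by $\simeq$. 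Since $p_\alpha$ and $p_\beta$ are fixed by the potential family, I can therefore choose the ratio $q_2/q_1$ small enough that $\lambda_{\eps,M_{1,\eps}}/\lambda_{\eps,M_{2,\eps}}\le\tfrac12$ for all sufficiently small $\eps$, which is precisely the spectral gap required in Theorem~\ref{abstract-localization-result-eigenfunctions}. The strict inequality $q_1>q_2$ moreover forces $M_{1,\eps}/M_{2,\eps}\gtrsim q_2\,\eps^{q_1-q_2}\to0$, so $M_{1,\eps}<M_{2,\eps}$ once $\eps$ is small (and if the index produced by Lemma~\ref{lem:lowerBounds} happened not to exceed $M_{1,\eps}$, comparing the two eigenvalue bounds at the index $M_{1,\eps}$ would force $\ell_{1,\eps}\lesssim\ell_{2,\eps}$, contradicting $q_2\ll q_1$).

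With the gap established I would simply apply Theorem~\ref{abstract-localization-result-eigenfunctions} to these $M_{1,\eps}<M_{2,\eps}$: for every $\tol>0$ and every $k\gtrsim\log(\tol^{-1})+\plog(\eps^{-1})$ it provides local functions $w_{\eps,j}^{(k)}$, $1\le j\le M_{1,\eps}$, supported on a set of size $\calO(M_{2,\eps}\,\eps\,k^2)$ with $\Vvert u_{\eps,j}-w_{\eps,j}^{(k)}\Vvert_\eps\le C_{\mbox{\rm \tiny spec},\eps}\,\tol$. Choosing $\tol$ polynomially small in $\eps$ --- admissible because $C_{\mbox{\rm \tiny spec},\eps}$ grows at most polynomially in $\eps^{-1}$ --- and $k\simeq\plog(\eps^{-1})$ keeps the approximation error exponentially small while the support shrinks like $M_{2,\eps}\,\eps\,\plog(\eps^{-1})\lesssim q_2^{-1}\eps^{q_2}\,\plog(\eps^{-1})\to0$. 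This is exponential localization in the sense introduced after Theorem~\ref{abstract-localization-result-eigenfunctions}, holding simultaneously for all $M_{1,\eps}\simeq\eps^{q_1-1}$ of these eigenfunctions. Since $q_1\in(0,1]$ was arbitrary (only $q_2/q_1$ had to lie below a fixed threshold) and $\eps^{q_1-1}=\calO(\eps^{-1})$ with $q_1$ pushable arbitrarily close to $0$, this establishes localization of the first $\calO(\eps^{-1})$ eigenstates.

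The step I expect to require genuine care, as opposed to routine estimation, is the \emph{joint calibration of $q_1$ and $q_2$}: one wants $q_1$ as small as possible so that $M_{1,\eps}\simeq\eps^{q_1-1}$ captures as large a portion of the lower spectrum as possible; $q_2$ strictly positive so that the localization region $\simeq\eps^{q_2}\,\plog(\eps^{-1})$ genuinely vanishes; and $q_2/q_1$ below the fixed threshold determined by $p_\alpha$, $p_\beta$, and the implicit constants of Lemmas~\ref{lem:upperBounds}--\ref{lem:lowerBounds} so that the factor-$2$ gap holds. These three demands are mutually compatible, and the structural reason the whole argument closes is exactly the displayed quotient above: both $\lambda_{\eps,M_{1,\eps}}$ and $\lambda_{\eps,M_{2,\eps}}$ are of order $\eps^{-2}|\log\eps|^{-2}$, differing only by $\eps$-independent prefactors, so a constant-order spectral gap can be manufactured without disturbing the $\eps$-scaling. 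Everything else is bookkeeping on top of the two lemmas and Theorem~\ref{abstract-localization-result-eigenfunctions}.
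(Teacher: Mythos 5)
Your proposal is correct and follows essentially the same route as the paper's proof: feed Lemmas~\ref{lem:upperBounds} and~\ref{lem:lowerBounds} into Theorem~\ref{abstract-localization-result-eigenfunctions}, with your pair $(q_1,\,q_2/q_1)$ playing exactly the role of the paper's $(\delta,\mu)$ and the same $\eps$-independent quotient $\bigl(\tfrac{q_2}{q_1}\,\tfrac{\log p_\alpha}{\log(1-p_\beta)}\bigr)^2$ producing the factor-$\tfrac12$ gap. Your parenthetical argument ensuring $M_{1,\eps}<M_{2,\eps}$ is a reasonable way of making explicit a point the paper leaves implicit, but it does not change the substance of the argument.
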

\begin{proof}
For a fixed $0 < \delta \le 1$ and a gap parameter $0 < \mu < 1$, we combine the estimates of Lemmata~\ref{lem:upperBounds} and~\ref{lem:lowerBounds} with~$q_1:= \delta \le 1$ and $q_2:= \mu \delta < q_1 < 1$, where we assume that $\mu$ is small enough such that 
\[
  \big\lceil \log_{1-p_\beta}(\eps^{\mu \delta}) \big\rceil
  = \ell_{2,\eps}
  < \ell_{1,\eps} 
  = \big\lceil \log_{p_\alpha}(\eps^{\delta}) \big\rceil.
\]
This then leads to the estimate
\[
  \frac{\lambda_{\eps,M_{1,\eps}}}{\lambda_{\eps,M_{2,\eps}}}
  \lesssim \frac{\ell_{2,\eps}^2\, \eps^2}{\ell_{1,\eps}^2\, \eps^2}
  \simeq \left( \frac{\log_{1-p_\beta}(\eps^{\mu\delta})}{\log_{p_\alpha}(\eps^{\delta})} \right)^2
  = \left( \mu\ \frac{\log (p_\alpha)}{\log(1-p_\beta) } \right)^2.
\]
Hence, there is a constant $C>0$ that is independent of $\eps$, $\delta$, and $\mu$ such that
\begin{align}
\label{cor:linearLocalizaton:proof:est1}
  \frac{\lambda_{\eps,M_{1,\eps}}}{\lambda_{\eps,M_{2,\eps}}} \le C \left( \mu\  \frac{\log (p_\alpha)}{\log(1-p_\beta) } \right)^2 
\end{align}
and we see that for a sufficiently small parameter~$\mu$ we obtain a spectral gap of order $\calO(1)$. Note that the factor~$\log (p_\alpha)/ \log(1-p_\beta)$ somehow measures the amount of elements~$\overline{T^{\eps}}\in\calTeps$, which are neither part of~$\calTeps_\alpha$ nor~$\calTeps_\beta$. This factor vanishes in the special case of~$p_\alpha=1-p_\beta$.  

The detected spectral gap appears between the first~$M_{1,\eps}\simeq \eps^{\delta-1}$ and $M_{2,\eps}\lesssim (\mu \delta)^{-1} \eps^{\mu \delta-1}$ eigenvalues. Thus, the abstract localization result of Theorem~\ref{abstract-localization-result-eigenfunctions} shows the first~$M_{1,\eps}$ eigenstates essentially localize in a region, which is determined by~$M_{2,\eps}$ and the preconditioner introduced in Section~\ref{sect:linear:precond}. 
As shown in~\cite{AltHP20}, the diameter of this region is bounded (up to logarithmic terms) by~$M_{2,\eps}\,\eps \lesssim (\mu \delta)^{-1} \eps^{\mu \delta}$. Thus, although~$M_{2,\eps} \rightarrow \infty$ for $\eps \rightarrow 0$, the fraction that is asymptotically covered is of size $\calO(\eps^{\mu\delta})$, which also goes to zero. Assume that~$\mu$ is sufficiently small such that $C\, \big( \mu\, \frac{\log (p_\alpha)}{\log(1-p_\beta) } \big)^2\le \frac{1}{2}$ in \eqref{cor:linearLocalizaton:proof:est1}. Then $M_{1,\eps}\simeq \eps^{-1 + \delta}$ eigenfunctions that are exponentially decaying outside of an area of size $ \calO(\mu^{-1} \delta^{-1} \, \eps^{\mu \delta} \hspace{2pt} \plog(\eps^{-1}))$, which tends to vanish for $\eps\rightarrow 0$. Since~$\mu$ is independent of $\eps$ and $\delta$, we see that there exists some $\eps_0(\delta)>0$ so that for each $\eps \le \eps_0(\delta)$ we have $\calO(\eps^{-1 + \delta})$ exponentially localized eigenfunctions. Since $\delta$ can be selected arbitrary small, the number of  localized eigenfunctions approaches asymptotically  $\calO(\eps^{-1})$, cf.~the numerical study in Figure~\ref{fig:localization} for an illustration. 
\end{proof}
In this section, we have seen that a disorder potential leads to localized states. For this, it was crucial that $\eps\ll 1$ (oscillatory potential), that $\Lbeps$ only depends logarithmically on~$\eps$ (high amplitude potential), and that there exist potential valleys of different sizes with~$\Laeps$ sufficiently large (disorder potential). 

The next section considers the GPEVP and intends to show that the here discussed localization carries over to the nonlinear case if the parameter~$\kappa$ is not too large (i.e. at most of order $\calO(\eps^{-1})$). 
%
%
\section{Localization of Gross-Pitaevskii Ground States}\label{sect:GPEVP}
In this main part of the paper, we analyze the localization of the first eigenstates of the GPEVP in disorder potentials as introduced in Section~\ref{sect:linear:potential}. 
In particular, we prove that the ground state of the GPEVP localizes if the interaction parameter satisfies~$\kappa \lesssim \eps^{-1}$, which is according to the natural scaling that we derived in Section \ref{subsec-dimless-1D}. Numerical experiments in Section~\ref{subsect:numerics:higherdim} will then indicate that this condition is also necessary, i.e., if $\kappa$ is becoming too large, then the ground state delocalizes. 
%
\subsection{Alternative characterization of the ground state}
We consider the ground state~$\ugs$ of the GPEVP~\eqref{eq:GPEVP} with repulsive particle interactions, i.e., for $\kappa \ge 0$. 
Mathematically, the ground state ~$\ugs \in \V$ with normalized mass, i.e., $\| \ugs\| =1$, is defined as an eigenfunction to the smallest eigenvalue $\lambdags>0$ with
\begin{align}
\label{eq:GPEVP-var}
- \tfrac 12\,\ugs^{\prime\prime}  + \Veps \hspace{2pt} \ugs + \kappa\, |\ugs|^2 \ugs 
= \lambdags  \ugs. 
\end{align}
Due to the symmetric structure of the GPEVP~\eqref{eq:GPEVP-var} it is well-known that the ground state~$\ugs$ is continuous, unique up to sign, and either strictly positive or strictly negative in~$D$. In the following, we make the silent convention that we always consider the (unique) positive ground state. The above mentioned properties are general and hold for large classes of nonlinear Schr\"odinger eigenvalue problems in $\R^d$, cf.~\cite{CCM10,HenP18ppt}. 

Equivalently, the ground state $\ugs$ may be also characterized as the minimizer of the total energy. This characterization will be important for our analysis. To make this statement precise, we define the energy of a function~$v\in\V$ (in the nonlinear setting of the GPEVP) as 
\[
  E(v) 
  := \int_D \frac 12\, | v'(x) |^2 + \Veps(x)\, |v(x)|^2 + \frac{\kappa}{2}\, |v(x)|^4 \dx. 
\]
With this, the ground state $\ugs \in \V$ is the (unique) positive minimizer of the above energy in the affine space~$\{ v\in\V\ |\ \|v\| = 1 \}$, cf.~\cite{CCM10}. Thus, any normalized function $v\in \V$ satisfies~$E(\ugs) \le E(v)$. 

Last but not least, we shall exploit yet another characterization of the ground state, which will be crucial in the following analysis, as it allows us to reduce the nonlinear problem to a linear problem as studied before. In fact,~$\ugs \in \V$ (with $\| \ugs \|=1$) is equal to the ground state of the following linearized eigenvalue problem (cf.~\cite[Lem.~2]{CCM10})
\begin{align}
\label{eq:linearizedGPEVP}
  - \tfrac 12\, \ulin'' + \Veps\, \ulin + \kappa\, |\ugs|^2 \ulin = \lambdalin \ulin.
\end{align}
Note that~\eqref{eq:linearizedGPEVP} equals a linear Schr\"odinger eigenvalue problem~\eqref{eq:linSchroedingerEVP} with perturbed potential~$V_{\eps,\text{gs}} := \Veps + \kappa\, |\ugs|^2$. In order to apply the localization results of the linear case of Section~\ref{sect:linear}, we show that~$V_{\eps,\text{gs}}$ satisfies the required properties of being oscillatory with high amplitudes and different valley sizes. To put it differently, we intend to show that~$\kappa\, |\ugs|^2$ only defines a small perturbation, which does not affect the overall structure of the potential. For that we first need to derive bounds for $\ugs$ in the maximum norm. As the crucial assumption for the localization in the nonlinear case, we need to ensure that $\kappa$ does not become too large. 
%
\subsection{Boundedness of the ground state}
In this subsection, we show that the normalized ground state of the GPEVP satisfies~$|\ugs| \lesssim \eps^{-1/2}$ pointwise in~$D$. 
The proof will make use of the following result on secants. 
\begin{lemma}
\label{lem:secant}
Consider a continuously differentiable function $f\colon [a,b] \rightarrow \R$ with $f(a)=f_a$ and $f(b)=f_b$. Then, it holds that 
\begin{align*}
  \|f'\|^2_{L^2(a,b)}	
  = \int_a^b |f'(x)|^2 \dx 
  \ge \frac{(f_b-f_a)^2}{b-a}.
\end{align*}
\end{lemma}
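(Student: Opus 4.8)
The plan is to prove a sharp lower bound on the $L^2$-norm of the derivative of a $C^1$ function in terms of its boundary values, essentially the statement that among all $C^1$ functions with prescribed values at the endpoints, the affine interpolant minimizes the Dirichlet energy. This is a very classical fact, and the cleanest route is a direct application of the Cauchy--Schwarz inequality combined with the fundamental theorem of calculus.

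First I would invoke the fundamental theorem of calculus to write $f_b - f_a = f(b) - f(a) = \int_a^b f'(x)\dx$. Then I would apply the Cauchy--Schwarz inequality to the right-hand side, pairing $f'$ against the constant function $1$ on the interval $(a,b)$:
\begin{align*}
  (f_b - f_a)^2
  = \left( \int_a^b f'(x)\dx \right)^2
  \le \left( \int_a^b 1^2 \dx \right) \left( \int_a^b |f'(x)|^2 \dx \right)
  = (b-a)\, \|f'\|_{L^2(a,b)}^2.
\end{align*}
Dividing through by $b-a > 0$ gives exactly the claimed inequality $\|f'\|_{L^2(a,b)}^2 \ge (f_b - f_a)^2 / (b-a)$.

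There is essentially no obstacle here; the only points to mention for completeness are that $b - a > 0$ so the division is legitimate, and that the continuous differentiability of $f$ guarantees both that $f'$ is integrable (indeed continuous) on $[a,b]$ so the fundamental theorem of calculus applies, and that $\|f'\|_{L^2(a,b)}$ is finite. One could alternatively observe that equality holds precisely when $f'$ is constant, i.e. when $f$ is the affine interpolant between $f_a$ and $f_b$, which explains the geometric meaning (the "secant" in the lemma's name), though this remark is not needed for the proof. The argument extends verbatim to $f \in H^1(a,b)$ if one ever needed the Sobolev-space version, but the $C^1$ hypothesis as stated makes the elementary proof above fully rigorous.
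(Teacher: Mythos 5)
Your proof is correct and complete, but it takes a different (and more direct) route than the paper. You combine the fundamental theorem of calculus with the Cauchy--Schwarz inequality, pairing $f'$ against the constant function $1$, which immediately yields $(f_b-f_a)^2 \le (b-a)\,\|f'\|^2_{L^2(a,b)}$. The paper instead argues variationally: it notes that for any constant $c$ the functionals $g \mapsto \int_a^b |g'|^2\dx$ and $g \mapsto \int_a^b |g'-c|^2\dx$ differ only by a constant on the class $\calG$ of $C^1$ functions with the prescribed boundary values, then chooses $c = c_s = \frac{f_b-f_a}{b-a}$ and observes that the secant $s$ makes $\int_a^b |s'-c_s|^2\dx = 0$, so the secant minimizes the Dirichlet energy over $\calG$ and hence $\int_a^b |f'|^2\dx \ge \int_a^b |s'|^2\dx = c_s^2(b-a)$. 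The two arguments are essentially the same algebra in different clothing (expanding $\int_a^b |f'-c_s|^2\dx \ge 0$ reproduces your Cauchy--Schwarz estimate with the optimal constant), but yours is shorter and needs no auxiliary minimization class, while the paper's phrasing makes the geometric content -- the secant as energy minimizer, which motivates the lemma's name -- explicit. Your closing remarks on the equality case and the extension to $H^1(a,b)$ are accurate.
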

\begin{proof}
First, we observe that for any constant $c\in\R$ we have
\[
  \int_{a}^b |f'(x) - c\,|^2 \dx 
  = \int_{a}^b |f'(x)|^2 \dx - 2c\, (f_b-f_a) + c^2(b-a).
\]
Hence, for~$\calG := \{ g \in C^{1}([a,b])\ |\ g(a)=f_a,\ g(b)=f_b\}$ we have 
\[
  \mbox{arg min}_{g\in\calG} \int_a^b |g'(x)|^2 \dx 
  =	\mbox{arg min}_{g\in\calG} \int_a^b |g'(x) - c|^2 \dx
\]
Now select $g\in\calG$ as the secant to $f$, i.e., $g(x) := s(x) = f_a + c_s (x - a)$ with~$c_s := \frac{f_b-f_a}{b-a}$. The property 
\[
  \int_{a}^b |s'(x) - c_s|^2 \dx 
  = \int_{a}^b |c_s - c_s|^2 \dx 
  = 0
\]
then implies that the integral is minimized for the secant and we have
\[
	\int_{a}^b |f'(x)|^2 \dx 
	\ge \min_{g\in\calG} \int_a^b |g'(x)|^2 \dx 
	= \int_{a}^b |s'(x)|^2 \dx 
	= c_s^2\, (b-a). \qedhere
\]
\end{proof}
Further, we will apply the well-known Sobolev embedding in one space dimension. 
\begin{lemma}[Sobolev embedding in 1D]
For $a < b$ and $u\in H^1(a,b)$ it holds that
\begin{align}
\label{eq:Sobolev}
	\| u \|_{L^{\infty}(a,b)} 
	\le \frac{\sqrt{2}}{\sqrt{b-a}} \| u \|_{H^1(a,b)}.
\end{align}
\end{lemma}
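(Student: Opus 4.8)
The plan is to prove the pointwise bound $\|u\|_{L^\infty(a,b)} \le \frac{\sqrt 2}{\sqrt{b-a}}\,\|u\|_{H^1(a,b)}$ by a standard argument: first establish the estimate at a conveniently chosen point of $(a,b)$ using the mean value of $|u|^2$, and then propagate it to an arbitrary point via the fundamental theorem of calculus. By density it suffices to treat $u \in C^1([a,b])$, and since only $|u|$ enters, I will work with the real-valued nonnegative function $|u|$ (using that $\big||u|'\big| \le |u'|$ a.e.).

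First I would pick $x_0 \in [a,b]$ minimizing $|u|$; by the mean value theorem for integrals there is in fact a point $x_0$ with $|u(x_0)|^2 = \frac{1}{b-a}\int_a^b |u(x)|^2\dx$, so in particular $|u(x_0)|^2 \le \frac{1}{b-a}\|u\|_{L^2(a,b)}^2$. Next, for an arbitrary $x \in [a,b]$ I would write
\[
  |u(x)|^2 - |u(x_0)|^2 = \int_{x_0}^{x} \big(|u|^2\big)'(t)\dt = \int_{x_0}^{x} 2\,|u(t)|\,|u|'(t)\dt,
\]
and bound the right-hand side in absolute value by $2\int_a^b |u(t)|\,|u'(t)|\dt$, which by the Cauchy--Schwarz inequality is at most $2\,\|u\|_{L^2(a,b)}\,\|u'\|_{L^2(a,b)}$. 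Combining the two displays gives
\[
  |u(x)|^2 \le \frac{1}{b-a}\|u\|_{L^2(a,b)}^2 + 2\,\|u\|_{L^2(a,b)}\,\|u'\|_{L^2(a,b)}.
\]
Using $2AB \le A^2 + B^2$ with $A = \|u\|_{L^2}$ and $B = \|u'\|_{L^2}$, and noting $\frac{1}{b-a} \le 1$ is not quite what I want — instead I would keep the $\frac{1}{b-a}$ factor and argue more carefully to land exactly the constant $\sqrt 2/\sqrt{b-a}$: since $b-a$ may be large or small, I would estimate $2\|u\|_{L^2}\|u'\|_{L^2} \le \frac{1}{b-a}\|u\|_{L^2}^2 + (b-a)\|u'\|_{L^2}^2$, so that
\[
  |u(x)|^2 \le \frac{2}{b-a}\|u\|_{L^2(a,b)}^2 + (b-a)\|u'\|_{L^2(a,b)}^2 \le \frac{2}{b-a}\,\|u\|_{H^1(a,b)}^2,
\]
where the last step uses $b-a \le \frac{2}{b-a}(b-a)^2$... which again needs $b-a$ controlled. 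The cleanest route that produces precisely the stated constant is to rescale to the unit interval first.

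Therefore the main step I would actually carry out is a \emph{change of variables}: set $v(s) := u\big(a + (b-a)s\big)$ for $s \in [0,1]$, so that $\|v\|_{L^2(0,1)}^2 = \frac{1}{b-a}\|u\|_{L^2(a,b)}^2$ and $\|v'\|_{L^2(0,1)}^2 = (b-a)\|u'\|_{L^2(a,b)}^2$, while $\|v\|_{L^\infty(0,1)} = \|u\|_{L^\infty(a,b)}$. On the unit interval the argument above gives $|v(s)|^2 \le \|v\|_{L^2(0,1)}^2 + 2\|v\|_{L^2(0,1)}\|v'\|_{L^2(0,1)} \le \|v\|_{L^2(0,1)}^2 + \|v\|_{L^2(0,1)}^2 + \|v'\|_{L^2(0,1)}^2 \le 2\,\|v\|_{H^1(0,1)}^2$, hence $\|v\|_{L^\infty(0,1)} \le \sqrt 2\,\|v\|_{H^1(0,1)}$. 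Undoing the scaling, $\|u\|_{L^\infty(a,b)} \le \sqrt 2\,\big(\tfrac{1}{b-a}\|u\|_{L^2(a,b)}^2 + (b-a)\|u'\|_{L^2(a,b)}^2\big)^{1/2}$; bounding $(b-a) \le \tfrac{1}{b-a}$ is still wrong in general, so I would simply state the sharp scaled inequality and note that when $b-a \le 1$ (the relevant case here, since all subintervals have length $\le (\Lbeps+1)\eps \ll 1$) one has $(b-a) \le \tfrac{1}{b-a}$ and the claimed form $\|u\|_{L^\infty(a,b)} \le \frac{\sqrt 2}{\sqrt{b-a}}\|u\|_{H^1(a,b)}$ follows immediately.

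The only genuine subtlety — and the step I would flag as the main obstacle — is tracking the constant and the $(b-a)$-dependence correctly: the inequality as literally stated is the form valid for $b - a \le 1$, and the honest argument is the rescaling together with this mild length restriction (which holds throughout the paper). Everything else — the density reduction from $H^1$ to $C^1$, the chain rule bound $\big||u|'\big|\le|u'|$, and the Cauchy--Schwarz step — is entirely routine.
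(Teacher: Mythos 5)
Your proof is correct, and in fact the paper offers no proof of this lemma at all (it is quoted as a standard embedding), so there is nothing to diverge from: your route --- pick $x_0$ with $|u(x_0)|^2=\frac{1}{b-a}\int_a^b|u|^2\dx$, integrate $(|u|^2)'$ and apply Cauchy--Schwarz, then fix the constant by rescaling to $(0,1)$ --- is the standard argument. Your caveat is also genuinely right and worth stating explicitly: the inequality as literally written for arbitrary $a<b$ is false for long intervals. For instance, a hat function of height $1$ supported on a unit-length subinterval of $(0,L)$ has $\|u\|_{H^1(0,L)}^2=\tfrac{8}{3}$ independently of $L$, so the right-hand side $\sqrt{2/L}\,\|u\|_{H^1}$ drops below $\|u\|_{L^\infty}=1$ once $L>16/3$. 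Your scaled form $\|u\|_{L^\infty(a,b)}^2\le 2\bigl(\tfrac{1}{b-a}\|u\|_{L^2}^2+(b-a)\|u'\|_{L^2}^2\bigr)$ is the correct general statement, and the stated constant $\sqrt 2/\sqrt{b-a}$ follows whenever $b-a\le\sqrt 2$ (in particular for $b-a\le 1$), which covers every use in the paper since the lemma is only invoked on the unit interval $D=(0,1)$. The remaining ingredients you list (density of $C^1$ in $H^1$, $\bigl||u|'\bigr|\le|u'|$ a.e.) are indeed routine.
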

In order to obtain estimates on the~$L^\infty$-norm of the ground state, we first derive a rough estimate on the derivative of~$\ugs$. 
For this, we consider a polynomial bubble of low degree on $D$, which is normalized in $L^2(D)$, e.g., $p(x) := \sqrt{30}\, x(1-x)$. Obviously, $p\in\V$ and thus, assuming~$\max \Veps \lesssim \eps^{-2}$,   
\begin{align}
\label{eq:boundDerivativeGroundstate}
  \|\ugs'\|^2 
  \le 2\, E(\ugs) 
  \le 2\, E(p) 
  = \int_D |p'|^2 + 2\, \Veps\, |p|^2 + \kappa\, |p|^4 \dx
  \lesssim \eps^{-2} + \kappa.
\end{align}
On the other hand, we know that~$\|\ugs\| = 1$ such that the direct application of the Sobolev embedding~\eqref{eq:Sobolev} implies 
\[
  \|\ugs \|^2_{L^{\infty}(D)} 
  \le 2\, \|\ugs \|^2_{H^1(D)} 
  = 2 + 2\, \|\ugs'\|^2 
  \lesssim \eps^{-2} + \kappa.
\]
Thus, under the assumption that~$\kappa \lesssim \eps^{-2}$ we obtain the upper bound~$\|\ugs \|_{L^{\infty}(D)} \lesssim \eps^{-1}$. The following result shows that this bound can be improved significantly. 
\begin{theorem}
\label{thm:upperBoundGroundstate}
Assume~$\eps\ll 1$, $\max \Veps \lesssim \eps^{-2}$, and~$0 \le \kappa \lesssim \eps^{-2}$. Then there exists a constant $C>0$, independent of~$\eps$, such that the ground state $\ugs\in\V$ of the GPEVP can be bounded by
\begin{align*}
\|\ugs \|_{L^{\infty}(D)} \le C \, \eps^{-1/2}.
\end{align*}
\end{theorem}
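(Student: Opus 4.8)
The plan is to exploit the derivative bound already recorded in \eqref{eq:boundDerivativeGroundstate}, namely $\|\ugs'\|^2 \lesssim \eps^{-2}+\kappa \lesssim \eps^{-2}$ (here the hypothesis $\kappa \lesssim \eps^{-2}$ enters), together with the mass normalization $\|\ugs\|=1$ and a \emph{spreading argument}: a large pointwise maximum forces $\ugs$ to stay large on an interval whose length grows \emph{quadratically} in that maximum, and this plateau cannot carry more mass than $1$. It is precisely this quadratic (rather than linear) dependence that upgrades the naive bound $\|\ugs\|_{L^\infty(D)} \lesssim \eps^{-1}$ to $\eps^{-1/2}$.

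Concretely, set $m := \|\ugs\|_{L^{\infty}(D)}$. Since $\ugs$ is continuous, strictly positive in the interior of $D=(0,1)$, and vanishes on $\partial D$, there is a point $x_0\in(0,1)$ with $\ugs(x_0)=m$. As $\ugs(x_0)=m>m/2$ while $\ugs$ vanishes at both endpoints of $D$, the intermediate value theorem yields a point $x_1\in(0,1)$ with $\ugs(x_1)=m/2$; choosing $x_1$ to be the nearest such point to $x_0$, we obtain a closed interval $\overline J$ with endpoints $x_0,x_1$, length $d:=|x_1-x_0|$, on which $\ugs\ge m/2$. (Note that this construction works whether $x_0$ lies near $0$ or near $1$: the homogeneous Dirichlet condition forces $\ugs$ back to zero on both sides of $x_0$, so such an $x_1$ always exists on at least one side.)

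Next I would apply Lemma~\ref{lem:secant} on $\overline J$ — using that $\ugs\in C^1(\overline D)$, which follows from $\ugs''\in L^2(D)$ via the equation \eqref{eq:GPEVP-var} — to get
\[
  \|\ugs'\|_{L^2(\overline J)}^2 \ge \frac{(m-m/2)^2}{d} = \frac{m^2}{4d}.
\]
Combining this with $\|\ugs'\|_{L^2(\overline J)}^2 \le \|\ugs'\|_{L^2(D)}^2 \lesssim \eps^{-2}$ gives the plateau-width estimate $d \gtrsim m^2\eps^2$. Finally, using $\ugs\ge m/2$ on $\overline J$ and the normalization $\|\ugs\|=1$,
\[
  1 = \|\ugs\|^2 \ge \int_{\overline J} |\ugs|^2 \dx \ge \frac{m^2}{4}\, d \gtrsim m^4\,\eps^2,
\]
whence $m^4 \lesssim \eps^{-2}$, i.e.\ $\|\ugs\|_{L^{\infty}(D)} = m \lesssim \eps^{-1/2}$, as claimed.

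The only places requiring care are the quantitative spreading estimate $d\gtrsim m^2\eps^2$ (the main idea of the proof) and the two routine ingredients it rests on: the regularity $\ugs\in C^1(\overline D)$ and the energy-comparison bound \eqref{eq:boundDerivativeGroundstate} that controls $\|\ugs'\|^2$. I do not expect any genuine obstacle beyond correctly tracking constants so that the final inequality reads $m^4\eps^2\lesssim 1$.
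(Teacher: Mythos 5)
Your proof is correct, and it rests on exactly the same three ingredients as the paper's proof -- the energy-comparison bound \eqref{eq:boundDerivativeGroundstate} (which under $\kappa\lesssim\eps^{-2}$ gives $\|\ugs'\|^2\lesssim\eps^{-2}$), the secant estimate of Lemma~\ref{lem:secant}, and the mass normalization $\|\ugs\|=1$ -- but you arrange them differently. The paper argues by contradiction: it fixes the target threshold $\eps^{-1/2}$, introduces a logarithmic correction $\delta\simeq|\log\eps|^{-1}$ and the level set $D_\delta=\{\eps^{-1/2}\le\ugs\le\eps^{-1/2-\delta}\}$, bounds $|D_\delta|\le\eps$ from the mass constraint, and then shows via the secant lemma that crossing from level $\eps^{-1/2}$ to $\eps^{-1/2-\delta}$ inside such a thin set would force $\|\ugs'\|^2\gtrsim\eps^{-2-2\delta}$, contradicting \eqref{eq:boundDerivativeGroundstate}; the factor $\eps^{-\delta}=\tilde C_0$ is precisely what beats the constants. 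You instead argue directly about $m=\|\ugs\|_{L^\infty(D)}$: the secant lemma applied between the maximizer and the nearest point at level $m/2$ shows the plateau $\{\ugs\ge m/2\}$ has width $d\gtrsim m^2\eps^2$, and the mass constraint over that plateau gives $m^2 d\lesssim 1$, hence $m^4\eps^2\lesssim1$. Your version avoids the $\delta$-bookkeeping, produces an explicit constant, and is arguably cleaner; the paper's level-set formulation has the minor advantage that the same "thin set versus gradient" mechanism is reused almost verbatim in Lemma~\ref{lem:onlyO1subdomains}. Two small points you handled correctly and should keep: the nearest level-$m/2$ point guarantees $\ugs\ge m/2$ on the whole intermediate interval, and the $C^1$ regularity needed for Lemma~\ref{lem:secant} follows from $\ugs\in H^2(D)$ via the equation (alternatively, the secant bound holds for $H^1$ functions directly by Cauchy--Schwarz, so this is not a genuine obstruction).
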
 
\begin{proof}
Let $C_{0}>0$ denote the hidden constant in~\eqref{eq:boundDerivativeGroundstate} so that~$\|\ugs'\|^2 \le C_{0} \hspace{2pt} \eps^{-2}$. With this, we define
\begin{align}
\label{assumption-on-delta}
\delta:=\delta(\eps)=\log(\tilde{C}_0) \hspace{2pt} |\log(\eps)|^{-1}, \qquad \mbox{where } \tilde{C}_0:= 2 \max\{2 , \sqrt{C_0}\}.
\end{align}
Note that $\delta \rightarrow 0$ for $\eps \rightarrow 0$ and
\begin{align}
\label{eps-to-delta}
\eps^{-\delta} = \tilde{C}_0.
\end{align}
We show the result by contradiction and assume that there exists a point~$x_\delta\in D$ with~$\ugs(x_\delta) \ge \eps^{-1/2-\delta}$. In this case, we define the (non-empty) set 
\[
  D_\delta 
  := \big\{ x \in D\ |\ \eps^{-1/2} \le \ugs(x) \le \eps^{-1/2-\delta} \big\}.
\]
We now show that this is an empty set and hence, by the continuity of~$\ugs$, we have a contradiction to the assumption that  $\ugs(x_\delta) \ge \eps^{-1/2-\delta}$.  

For $D_\delta $, we first observe that
\begin{align*}
  1 
  = \|\ugs\|^2
  \ge \| \ugs\|^2_{L^2(D_\delta)} 
  \ge |D_{\delta}|\, \eps^{-1}, 
\end{align*}
which implies $|D_{\delta}| \le \eps$. 
Due to the assumption~$\ugs(x_\delta) \ge \eps^{-1/2-\delta}$ there exists a subinterval of~$D_\delta$ on which~$\ugs$ actually takes the values~$\eps^{-1/2}$ and $\eps^{-1/2-\delta}$. 
Thus, by Lemma~\ref{lem:secant} we conclude that 
\[
  \|\ugs'\|^2_{L^2(D_\delta)}
  \ge \frac{(\eps^{-1/2-\delta} - \eps^{-1/2})^2}{|D_\delta|}
  = \frac{1}{\eps}\, \frac{(\eps^{-\delta} - 1)^2}{|D_\delta|}.
\]
Since $\delta>0$, we know that~$(\eps^{-\delta} - 1)^2 \gtrsim \eps^{-2\delta}$ asymptotically. More precisely, we have~$(\eps^{-\delta} - 1)^2 \ge \eps^{-2\delta} - 2\eps^{-\delta} \ge \frac 12 \eps^{-2\delta}$ for~$\delta\ge \log 4 / \log(1/\eps)$, which is guaranteed by our definition of $\delta$ in \eqref{assumption-on-delta}. 
With the upper bound on the size of~$D_\delta$, we obtain  
\[
  \|\ugs'\|^2_{L^2(D_\delta)}
  \ge \frac{\eps^{-1-2\delta}}{2\, |D_\delta|}
  \ge \frac 12\, \eps^{-2-2\delta}.
\]
Together with~$\|\ugs'\|^2 \le C_0 \hspace{2pt} \eps^{-2}$ from~\eqref{eq:boundDerivativeGroundstate}, we conclude that 
\begin{eqnarray*}
  C_0 \hspace{2pt} \eps^{-2} 
  \ge \|\ugs'\|^2
  \ge \|\ugs'\|^2_{L^2(D_\delta)}
  \ge \tfrac 12\, \eps^{-2-2\delta} \overset{\eqref{eps-to-delta}}{=}  \frac{\tilde{C}_0^2}{2} \hspace{2pt}\eps^{-2}
  \overset{\eqref{assumption-on-delta}}{\ge} 2\, C_0 \hspace{2pt} \eps^{-2}.
\end{eqnarray*}
This provides a contradiction. Thus, the set~$D_\delta$ is a null set for $\eps\ll 1$, which implies that there is no point with $\ugs(x_\delta) \ge \eps^{-1/2-\delta}$ and we have consequently for all $x\in D$
\[
\ugs(x) \le \eps^{-1/2-\delta} = \tilde{C}_0 \hspace{2pt} \eps^{-1/2}.
\qedhere
\]
\end{proof}
%
%
\subsection{Main result and proof of localization}
The previous result indicates that~$V_{\eps,\text{gs}} - \Veps = \kappa\, |\ugs|^2$ is a perturbation of order at most~$\calO(\eps^{-2})$, if $\kappa\lesssim \eps^{-1}$. Since this is the same order as the maximal values of the original potential~$\Veps$, we need to ensure that such large perturbations only occur rarely. 
\begin{lemma}
\label{lem:onlyO1subdomains}
Consider a family of rapidly oscillating potentials as introduced in Section~\ref{sect:linear:potential} with~$\max \Veps \lesssim \eps^{-2}$ and~$0\le \kappa \lesssim \eps^{-2}$. Then large values in the sense of~$|\ugs(x)| \gtrsim \eps^{-1/2}$ occur in at most~$\calO(1)$ subintervals~$\overline{T^{\eps}}\in\calTeps$ of the $\eps$-partition. 
\end{lemma}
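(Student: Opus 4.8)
The plan is to exploit three facts about the (positive) ground state: the normalization $\|\ugs\|=1$, the pointwise bound $\|\ugs\|_{L^{\infty}(D)}\lesssim\eps^{-1/2}$ from Theorem~\ref{thm:upperBoundGroundstate}, and the crude gradient bound $\|\ugs'\|^2\le C_0\,\eps^{-2}$ that follows from \eqref{eq:boundDerivativeGroundstate}; the latter is admissible here because $\kappa\lesssim\eps^{-2}$, so the term $\kappa$ in \eqref{eq:boundDerivativeGroundstate} is itself of order $\eps^{-2}$. To give the statement a precise meaning I would fix an arbitrary constant $c>0$ and show that the set $S_c:=\{x\in\overline D:\ \ugs(x)\ge c\,\eps^{-1/2}\}$ meets at most $N(c)$ of the subintervals $\overline{T^{\eps}}\in\calTeps$, with $N(c)$ independent of $\eps$.

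The core of the argument is a persistence estimate: if $\ugs(x_0)\ge c\,\eps^{-1/2}$ at some point, then by the fundamental theorem of calculus and Cauchy--Schwarz,
$$
  \ugs(x)\ \ge\ \ugs(x_0)-|x-x_0|^{1/2}\,\|\ugs'\|_{L^2(D)}\ \ge\ c\,\eps^{-1/2}-\big(C_0\,|x-x_0|\big)^{1/2}\,\eps^{-1},
$$
so $\ugs(x)\ge\tfrac c2\,\eps^{-1/2}$ whenever $|x-x_0|\le r\eps$ with $r:=\tfrac{c^2}{4C_0}$. Thus every point of $S_c$ carries an $r\eps$-neighbourhood (intersected with $\overline D$) contained in the slightly enlarged set $\widetilde S:=\{x\in\overline D:\ \ugs(x)\ge\tfrac c2\,\eps^{-1/2}\}$, and the mass constraint gives $1\ge\int_{\widetilde S}\ugs^2\ge\tfrac{c^2}{4}\,\eps^{-1}\,|\widetilde S|$, i.e.\ $|\widetilde S|\le\tfrac{4}{c^2}\,\eps$.

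The count is then a packing argument. Assume $S_c$ meets the subintervals indexed $k_1<\dots<k_m$ and pick $x_i\in\overline{T^{\eps}_{k_i}}$ with $\ugs(x_i)\ge c\,\eps^{-1/2}$. Since the $x_i$ lie in distinct $\eps$-cells, $|x_i-x_j|\ge(|k_i-k_j|-1)\,\eps$; keeping only every $p$-th index for a fixed integer $p:=\lceil 2r\rceil+2$ (any $p$ with $p-1>2r$ works) then produces a sub-collection of size $\ge m/p$ for which the intervals $[x_i-r\eps,x_i+r\eps]\cap\overline D$ are pairwise disjoint, each of length $\ge r\eps$, and all contained in $\widetilde S$. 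Comparing lengths with $|\widetilde S|\le\tfrac{4}{c^2}\eps$ yields $\tfrac{m}{p}\,r\eps\le\tfrac{4}{c^2}\eps$, hence $m\le\tfrac{4p}{c^2 r}$, a constant depending only on $c$ and on the hidden constant $C_0$ in \eqref{eq:boundDerivativeGroundstate}, and in particular independent of $\eps$. The finitely many cells adjacent to $\partial D$, where the neighbourhoods get truncated, only change the value of the constant.

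The one genuine subtlety I expect is that a single large value of $\ugs$ is only guaranteed to persist over a window of width $\sim c^2\eps$, which may be much smaller than $\eps$, so two touched cells whose representative points happen to be closer than $\eps$ must not be counted as contributing disjoint mass; the "every $p$-th cell" selection is exactly what decouples them, and because $r=r(c)$ is a fixed number the spacing $p$ is fixed as well, which is what keeps the bound $\eps$-independent. The remaining points—that \eqref{eq:boundDerivativeGroundstate} supplies an $\eps$-independent $C_0$ once $\kappa\lesssim\eps^{-2}$, and that $S_c$ and $\widetilde S$ are genuine (closed) sets by continuity of $\ugs$—are routine.
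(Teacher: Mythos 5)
Your proof is correct, and it takes a genuinely different route from the paper. The paper argues via the secant estimate of Lemma~\ref{lem:secant} in a two-case analysis: a cell on which $\ugs$ runs the full range from $\simeq\eps^{-1/2}$ down to $\lesssim\eps^{-1/2+\rho}$ costs $\gtrsim\eps^{-2}$ in $\|\ugs'\|^2$, so by \eqref{eq:boundDerivativeGroundstate} there are only $\calO(1)$ such cells, while a cell on which $\ugs\simeq\eps^{-1/2}$ throughout carries $L^2$-mass $\simeq 1$, so by $\|\ugs\|=1$ there are only $\calO(1)$ of those. You instead use the one-dimensional H\"older-$1/2$ bound $|\ugs(x)-\ugs(x_0)|\le |x-x_0|^{1/2}\,\|\ugs'\|$ (the same elementary ingredient that underlies Lemma~\ref{lem:secant}) to show that a value $\ge c\,\eps^{-1/2}$ persists at level $\ge\tfrac c2\,\eps^{-1/2}$ over a window of fixed width $r\eps$ with $r=c^2/(4C_0)$, so that after decimating to every $p$-th touched cell each surviving cell contributes a fixed amount of $L^2$-mass, and the normalization caps the count. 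This yields the literal statement in one unified stroke, for any superlevel set $\{\ugs\ge c\,\eps^{-1/2}\}$, with an explicit $\eps$-independent bound of order $C_0/c^4$, and without having to distinguish cells where $\ugs$ dips from cells where it stays uniformly high; note also that the boundary caveat you raise is vacuous, since $\ugs\in H^1_0(D)$ vanishes on $\partial D$ and the same persistence estimate therefore forces every point of $S_c$ to keep distance at least $r\eps$ from the boundary, so no neighbourhood is ever truncated. What the paper's version buys in exchange is that it reuses the already established secant lemma and phrases the conclusion in exactly the form consumed by the proof of Theorem~\ref{thm:localizedGroundstate} (separating the rare ``significant'' perturbations of size $\simeq\eps^{-2}$ from the milder ones); both arguments ultimately rest on the same two a priori facts, namely $\|\ugs'\|^2\lesssim\eps^{-2}$ and $\|\ugs\|=1$.
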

\begin{proof}
From Theorem~\ref{thm:upperBoundGroundstate} we know that~$|\ugs| \lesssim \eps^{-1/2}$ but up to now we cannot shoot out the possibility the~$\ugs$ oscillates with high amplitudes.
We fix~
$\rho>0$
and consider a subdomain~$\overline{T^{\eps}_\rho} \in \calTeps$ on which~$\ugs$ reaches values in the full range of $\eps^{-1/2+\rho}$ and~$\eps^{-1/2}$, i.e., there exist $x_1, x_2\in T^{\eps}_\rho$ with $\ugs(x_1) \ge C\, \eps^{-1/2}$ and~$\ugs(x_2) \le C\, \eps^{-1/2+\rho}$. 
In this case, Lemma~\ref{lem:secant} implies 
\[
  \|\ugs'\|^2_{L^2(T^{\eps}_\rho)}
  \ge C^2\, \frac{(\eps^{-1/2} - \eps^{-1/2+\rho})^2}{\eps}
  \gtrsim \eps^{-2}.
\]
On the other hand, we know from~\eqref{eq:boundDerivativeGroundstate} that~$\|\ugs'\|^2 \lesssim \eps^{-2}$ such that this situation can only occur~$\calO(1)$ times. 
Further, if~$\ugs \simeq \eps^{-1/2}$ on an entire subdomain~$\overline{T^{\eps}} \in \calTeps$, then we have~$\|\ugs\|_{L^2(T^\eps)} \simeq 1$. Due to the normalization of the ground state, i.e., $\|\ugs\|=1$, also this can happen at most~$\calO(1)$ times. 
\end{proof}
We have seen that $\ugs$ can take at most values of order $\eps^{-1/2}$, and even if it takes such large values, this cannot happen too often. Knowing that large perturbations only affect a small amount of subdomains and small perturbations do not affect the statistical properties of the potential, we can now conclude that the ground state localizes in disorder potentials with high amplitudes. 
\begin{theorem}[localization of the ground state]
\label{thm:localizedGroundstate}
Consider a family of rapidly oscillating potentials as introduced in Section~\ref{sect:linear:potential} with~$\max \Veps \lesssim \eps^{-2}$. Further assume~\ref{A1}-\ref{A4} and~$\kappa \lesssim \eps^{-1}$. 
Then the ground state of the GPEVP is exponentially localized in the sense of Theorem~\ref{abstract-localization-result-eigenfunctions}. 
\end{theorem}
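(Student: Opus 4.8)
The plan is to reduce everything to the linear theory of Section~\ref{sect:linear} through the linearization~\eqref{eq:linearizedGPEVP}: the ground state $\ugs$ coincides with the ground state of the linear Schr\"odinger eigenvalue problem with effective potential $V_{\eps,\text{gs}} := \Veps + \kappa\,|\ugs|^2$, so it suffices to check that the family $(V_{\eps,\text{gs}})_{\eps>0}$ fits the framework of Section~\ref{sect:linear:potential} with $\eps$-independent characteristic values $\alpha' := 2\alpha$ and $\beta' := \beta$, and still satisfies \ref{A1}--\ref{A4}; then Corollary~\ref{cor:linearLocalizaton} applies and, in particular, gives the exponential localization of $\ugs$. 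First I would record the structural properties of $V_{\eps,\text{gs}}$: it is nonnegative, it is piecewise continuous with respect to $\calTeps$ (since $\Veps$ is and $\ugs\in C(D)$), and $\max V_{\eps,\text{gs}} \le \max\Veps + \kappa\,\|\ugs\|_{L^\infty(D)}^2 \lesssim \eps^{-2} + \eps^{-1}\cdot\eps^{-1} \lesssim \eps^{-2}$ by Theorem~\ref{thm:upperBoundGroundstate} and the assumption $\kappa\lesssim\eps^{-1}$; in particular $V_{\eps,\text{gs}}\in L^\infty(D)$.

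The $\beta$-structure is the easy direction: since $\kappa\,|\ugs|^2\ge 0$ we have $\inf_{T^\eps}(\Veps+\kappa|\ugs|^2)\ge\inf_{T^\eps}\Veps$, so every $\beta$-cell of $\Veps$ is a $\beta'$-cell of $V_{\eps,\text{gs}}$; hence $\Lbeps(V_{\eps,\text{gs}})\le\Lbeps(\Veps)$ and the non-peak counts $N_{\neg\beta,\ell}^\eps$ can only decrease, so \ref{A1} and \ref{A3} transfer with the same $p_\beta$. The $\alpha$-structure is the delicate part, because $\kappa|\ugs|^2$ can be of order $\eps^{-2}$ and could in principle destroy valleys; the key is to show that it does so only rarely. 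Set $c_0 := \sqrt{\alpha/C_\kappa}$, where $\kappa\le C_\kappa\eps^{-1}$, and call $\overline{T^\eps}\in\calTeps$ \emph{bad} if $\sup_{T^\eps}|\ugs|>c_0\,\eps^{-1/2}\,|\log\eps|^{-1}$. Arguing as in the proof of Lemma~\ref{lem:onlyO1subdomains}: for a bad cell either $\ugs$ also dips below $\tfrac{c_0}{2}\eps^{-1/2}|\log\eps|^{-1}$ inside it, so Lemma~\ref{lem:secant} gives $\|\ugs'\|_{L^2(T^\eps)}^2\gtrsim \eps^{-2}|\log\eps|^{-2}$ and $\|\ugs'\|^2\lesssim\eps^{-2}$ from~\eqref{eq:boundDerivativeGroundstate} bounds the number of such cells by $\calO(|\log\eps|^2)$; or $\ugs\ge\tfrac{c_0}{2}\eps^{-1/2}|\log\eps|^{-1}$ throughout the cell, so $\|\ugs\|_{L^2(T^\eps)}^2\gtrsim|\log\eps|^{-2}$ and the normalization $\|\ugs\|^2=1$ again bounds the number by $\calO(|\log\eps|^2)$. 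Thus there are at most $\calO(|\log\eps|^2)=\plog(\eps^{-1})$ bad cells.

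Now every $\alpha$-cell of $\Veps$ that is not bad satisfies $\sup_{T^\eps}(\Veps+\kappa|\ugs|^2)\le \alpha\eps^{-2}|\log\eps|^{-2} + C_\kappa\eps^{-1}\,c_0^2\eps^{-1}|\log\eps|^{-2} = 2\alpha\,\eps^{-2}|\log\eps|^{-2}$, i.e.\ it is an $\alpha'$-cell of $V_{\eps,\text{gs}}$; so $\calTeps_{\alpha'}(V_{\eps,\text{gs}})$ contains the $\alpha$-cells of $\Veps$ up to the $\plog(\eps^{-1})$ bad ones. Since \ref{A2} for $\Veps$ gives $\simeq p_\alpha^{\ell}\eps^{-1}$ valleys of width $\ge\ell\eps$ --- a genuine negative power of $\eps$, hence $\gg\plog(\eps^{-1})$, for the relevant window $\ell\simeq\log_{p_\alpha}(\eps^{q_1})$ --- deleting $\plog(\eps^{-1})$ cells destroys at most that many valleys and therefore leaves $\simeq p_\alpha^{\ell}\eps^{-1}$ valleys of width $\ge\ell\eps$ for $V_{\eps,\text{gs}}$, and in particular $\Laeps(V_{\eps,\text{gs}})\gtrsim\log_{p_\alpha}\eps$; so \ref{A2} and \ref{A4} hold (with at most adjusted hidden constants). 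With all four assumptions verified for $V_{\eps,\text{gs}}$, Corollary~\ref{cor:linearLocalizaton} yields that the first $\calO(\eps^{-1})$ eigenstates of~\eqref{eq:linearizedGPEVP} --- among them $\ugs$ --- localize exponentially in the sense of Theorem~\ref{abstract-localization-result-eigenfunctions}. The main obstacle is precisely this last transfer, namely showing that an $\calO(\eps^{-2})$ perturbation cannot ruin the valley statistics, and it is handled by the bad-cell count above; a technically cleaner variant, avoiding any re-derivation of \ref{A2}/\ref{A4}, is to feed the bad-cell count directly into the trial-function construction of Lemma~\ref{lem:upperBounds} --- build the $\simeq\eps^{q_1-1}$ shifted Laplace bubbles in the surviving valleys of $V_{\eps,\text{gs}}$, on which $V_{\eps,\text{gs}}\lesssim(\ell_{1,\eps}\eps)^{-2}$ --- and combine this with the lower bound of Lemma~\ref{lem:lowerBounds}, which only uses \ref{A1}/\ref{A3} and is thus already available.
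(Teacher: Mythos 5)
Your proposal is correct and follows the same overall route as the paper: linearize via~\eqref{eq:linearizedGPEVP}, regard $\kappa\,|\ugs|^2$ as a perturbation of $\Veps$ controlled through Theorem~\ref{thm:upperBoundGroundstate} and~\eqref{eq:boundDerivativeGroundstate}, check that the assumptions \ref{A1}--\ref{A4} survive (the $\beta$-side trivially, since the perturbation is nonnegative), and invoke Corollary~\ref{cor:linearLocalizaton}. Where you genuinely deviate is in the bookkeeping for the $\alpha$-side. The paper uses Lemma~\ref{lem:onlyO1subdomains} ($\calO(1)$ cells where $|\ugs|\gtrsim\eps^{-1/2}$, hence where the perturbation reaches order $\eps^{-2}$) and then absorbs perturbations of size at most $\eps^{-2+\delta}$ by nudging the characteristic parameters ($\alpha_{\mathrm{gs}}=\alpha c_{\mathrm{gs}}^{-1}$, $\beta_{\mathrm{gs}}=\beta c_{\mathrm{gs}}$ with $c_{\mathrm{gs}}$ close to one). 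You instead choose the cutoff $c_0\,\eps^{-1/2}|\log\eps|^{-1}$, commensurate with the $\alpha$-threshold, so that the dichotomy good/bad is exhaustive: good cells are absorbed by enlarging $\alpha$ to $2\alpha$, and bad cells are counted, via Lemma~\ref{lem:secant}, the gradient bound and the normalization, as $\plog(\eps^{-1})$ many, which is then negligible against the $\simeq\eps^{q_1-1}$ valleys entering Lemma~\ref{lem:upperBounds}. This buys a cleaner argument: the paper's split between ``$\gtrsim\eps^{-2}$'' and ``$\lesssim\eps^{-2+\delta}$'' perturbations, read literally, leaves intermediate magnitudes (e.g.\ $\eps^{-2}|\log\eps|^{-1}$) and the statistical impact of the exceptional cells somewhat implicit, whereas your $\plog$-versus-$\eps^{q_1-1}$ comparison (or, as you note, feeding the surviving valleys directly into the trial functions of Lemma~\ref{lem:upperBounds} and combining with Lemma~\ref{lem:lowerBounds}, which only needs \ref{A1}/\ref{A3}) makes the survival of the spectral-gap argument explicit; the price is a weaker bad-cell count ($\plog(\eps^{-1})$ instead of $\calO(1)$), which is harmless here. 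Two cosmetic points: if $2\alpha\ge\beta$ the pair $(\alpha',\beta')=(2\alpha,\beta)$ formally violates $\alpha'<\beta'$ from Section~\ref{sect:linear:potential}, which you can fix by taking $c_0=\sqrt{\theta\alpha/C_\kappa}$ with $\theta$ small (only the constant in the $\plog$ count changes); and your ``$\eps^{q_1-1}\gg\plog(\eps^{-1})$'' requires $q_1<1$, which is all you need since the ground state is covered for any fixed $\delta<1$ in Corollary~\ref{cor:linearLocalizaton}.
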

\begin{proof} 
We have seen that~$\ugs$ equals the ground state of the linear Schr\"odinger eigenvalue problem with the perturbed potential~$V_{\eps,\text{gs}} = \Veps + \kappa\, |\ugs|^2$, cf.~\cite[Lem.~2]{CCM10}. The idea of the proof is to show that~$V_{\eps,\text{gs}}$ still satisfies the assumptions~\ref{A1}-\ref{A4} with adjusted parameters such that the first eigenfunctions (and thus~$\ugs$) localize by Corollary~\ref{cor:linearLocalizaton}. 

By Lemma~\ref{lem:onlyO1subdomains} we know that~$\kappa\, |\ugs|^2 \gtrsim \eps^{-2}$ in at most~$\calO(1)$ subintervals of~$\calTeps$. Note that such a perturbation is significant in the sense that an element~$\overline{T^{\eps}}\in\calTeps_\alpha$ may switch to an element of~$\calTeps_\beta$. Due to the small number of such cases and~$\eps\ll 1$, this does not affect the statistical properties of the potential. 
%
On the other hand, perturbations of the order at most~$\eps^{-2+\delta}$, $\delta>0$, can be hold off by the adjustment of the characteristic parameters. Note that, in theory, even such \quotes{small} perturbations may lead to an empty set~$\calTeps_\alpha$ (for example if the original~$\Veps$ is a two-valued potential). To prevent this from happening, we define~$\beta_\text{gs} := \beta\, c_\text{gs}$ and~$\alpha_\text{gs} := \alpha\, c_\text{gs}^{-1}$ for some~$c_\text{gs} < 1$. With this, all elements in~$\calTeps_\alpha$ remain an element of~$\calTeps_{\alpha_\text{gs}}$ after a perturbation of order~$\eps^{-2+\delta}$. Since the potential is bounded by~$\alpha\, \eps^{-2}\, |\log\eps|^{-2}$ for regions corresponding to~$\calTeps_\alpha$, we can choose~$c_\text{gs}$ arbitrarily close to one such that the statistical assumptions~\ref{A1}-\ref{A4} remain valid.  
\end{proof}
\begin{remark}
The assumption~$\max \Veps \lesssim \eps^{-2}$ in Theorem~\ref{thm:localizedGroundstate} has been made for technical reasons only. For stronger potential we even expect a more significant localization behaviour of the ground state. 
\end{remark}
\begin{remark}[localization of excited states]
One can act on the assumption that the localization of eigenfunctions of the GPEVP is not restricted to the ground state alone.  
Let $\ues\in\V$ be an excited state. Then~$\ues$ is an eigenfunction of the linear eigenvalue problem 
\begin{align*}
  -\tfrac 12\, \ulin '' + \Veps\, \ulin + \kappa\, |\ues|^2 \ulin = \lambdalin \ulin.
\end{align*}
In contrast to the ground state~$\ugs$, however, we do not know whether~$\ues$ is a low-energy state or even the ground state of the linearized problem.  
Nevertheless, we can follow the argumentation from above if the excited state satisfies two properties. 
First, we need~$E(\ues) \lesssim \eps^{-2}$. 
Second, $\ues$ has to be one of the first $\calO(\eps^{-1})$ eigenstates of the linearized eigenvalue problem. 
If this is the case, then the arguments applied in Theorem~\ref{thm:upperBoundGroundstate}, Lemma~\ref{lem:onlyO1subdomains}, and Corollary~\ref{thm:localizedGroundstate} also apply for excited states. 
\end{remark}
%
%
\subsection{Conjecture in higher space dimensions}
\label{subsect:numerics:higherdim}
For dimensions~$d>1$ certain arguments such as the Sobolev embedding or Lemma~\ref{lem:secant} are not valid in the present form. Nevertheless, following the strategy of the proof of Theorem~\ref{thm:upperBoundGroundstate}, we come up with the following conjecture.
\begin{conjecture}
The ground state of the GPEVP satisfies~$\|\ugs\|_{L^\infty} \lesssim \eps^{-d/2}$.
\end{conjecture}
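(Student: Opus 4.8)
The plan is to follow the strategy of the proof of Theorem~\ref{thm:upperBoundGroundstate}, keeping its two-step structure. Step~1 (a rough energy bound) carries over to $\R^d$ verbatim: testing the Gross--Pitaevskii energy with a low-degree polynomial bubble $p$ on the unit cube, normalized so that $\|p\|_{L^2}=1$, gives $\|\nabla\ugs\|_{L^2}^2\le 2\,E(\ugs)\le 2\,E(p)\lesssim\eps^{-2}+\kappa$, hence $\|\nabla\ugs\|_{L^2}^2\lesssim\eps^{-2}$ under the assumptions $\max\Veps\lesssim\eps^{-2}$ and the natural $d$-dimensional scaling $\kappa\lesssim\eps^{d-2}$ identified in Section~\ref{subsec-dimless-1D}. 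The contradiction setup of Step~2 also survives unchanged: assuming a point $x_\delta$ with $\ugs(x_\delta)\ge\eps^{-d/2-\delta}$ and choosing $\delta=\delta(\eps)\to0$ so that $\eps^{-\delta}$ equals a fixed constant, the mass constraint forces $|D_\delta|\le\eps^d$ for the level band $D_\delta:=\{\,\eps^{-d/2}\le\ugs\le\eps^{-d/2-\delta}\,\}$, because $\ugs\ge\eps^{-d/2}$ throughout $D_\delta$.

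The genuinely new difficulty, which I expect to be the main obstacle, is Step~3, i.e.\ replacing the secant lemma (Lemma~\ref{lem:secant}). In $d\ge2$ no pointwise-gradient argument can work, since $H^1(\R^d)\not\hookrightarrow L^\infty(\R^d)$ and a priori $\ugs$ could carry a point spike of unbounded height at finite Dirichlet energy. The natural substitute is the coarea formula combined with the isoperimetric inequality. Writing $\Omega$ for the connected component of $\{\ugs>\eps^{-d/2}\}$ that contains $x_\delta$ (a set compactly contained in $D$, since $\ugs=0$ on $\partial D$), the topological boundary of every superlevel set $\{\ugs>t\}\cap\Omega$, $t\ge\eps^{-d/2}$, lies in $\{\ugs=t\}$; coarea, Cauchy--Schwarz on the level sets, and $\mathrm{Per}(\{\ugs>t\}\cap\Omega)\gtrsim|\{\ugs>t\}\cap\Omega|^{(d-1)/d}$ then give a lower bound for $\int_\Omega|\nabla\ugs|^2$ in terms of the distribution function $m(t):=|\{\ugs>t\}\cap\Omega|$, and one wants this bound to exceed the constant times $\eps^{-2}$ coming from Step~1. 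In $d=1$ this reproduces Lemma~\ref{lem:secant} exactly, but for $d\ge2$ it loses a factor (logarithmic when $d=2$, a negative power of $\min_t m(t)$ when $d\ge3$) controlled by how thin the top superlevel set around $x_\delta$ may be; closing the estimate will therefore require injecting extra regularity of $\ugs$ — e.g.\ an $\eps$-explicit modulus-of-continuity or gradient bound obtained from the equation by elliptic regularity — to exclude an arbitrarily thin spike.

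A cleaner alternative, which works at least for $d\le4$, bypasses Lemma~\ref{lem:secant} entirely. Since $\ugs>0$ solves $-\tfrac12\Delta\ugs+W\ugs=\lambdags\ugs$ with $W:=\Veps+\kappa|\ugs|^2\ge0$, it is a positive subsolution of $\tfrac12\Delta+\lambdags$ with zero Dirichlet data, so parabolic comparison with the Dirichlet heat semigroup $(S_t)_{t>0}$ of $\tfrac12\Delta$ on $D$ yields $\ugs\le e^{\lambdags t}\,S_t\ugs$; optimizing the bound $\|S_t\|_{L^2\to L^\infty}\lesssim t^{-d/4}$ over $t\simeq\lambdags^{-1}$ then gives $\|\ugs\|_{L^\infty}\lesssim\lambdags^{d/4}$. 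It remains to show $\lambdags\lesssim\eps^{-2}$: from $\lambdags=E(\ugs)+\tfrac\kappa2\|\ugs\|_{L^4}^4\le E(p)+\tfrac\kappa2\|\ugs\|_{L^4}^4$ together with the Gagliardo--Nirenberg interpolation $\|\ugs\|_{L^4}^4\lesssim\|\ugs\|_{L^2}^{4-d}\|\nabla\ugs\|_{L^2}^{d}\lesssim\eps^{-d}$ (valid precisely for $d\le4$), one obtains $\lambdags\lesssim\eps^{-2}+\kappa\,\eps^{-d}$, which is $\lesssim\eps^{-2}$ exactly when $\kappa\lesssim\eps^{d-2}$. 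This reproduces the sharp exponent $\eps^{-1/2}$ for $d=1$ and gives the conjectured $\|\ugs\|_{L^\infty}\lesssim\eps^{-d/2}$ for $d=2,3,4$.

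What then remains, and where the real work lies, is twofold. For $d\ge5$ the Gagliardo--Nirenberg step fails ($H^1\not\hookrightarrow L^4$), so the integrability of $\ugs$ must first be bootstrapped through the equation by a Moser/Stampacchia iteration on $-\tfrac12\Delta\ugs\le\lambdags\ugs$, with the $\eps$-dependence of all constants tracked explicitly; this should go through but is technically the heaviest part. Conversely, if one keeps only the weaker hypothesis $\kappa\lesssim\eps^{-2}$ used in Theorem~\ref{thm:upperBoundGroundstate} instead of the natural scaling $\kappa\lesssim\eps^{d-2}$, the heat-kernel estimate becomes too lossy to reach the exponent $\eps^{-d/2}$, and one is forced back onto the coarea/isoperimetric argument of the second paragraph, whose sharp $d$-dimensional form I regard as the genuine obstruction behind the conjecture.
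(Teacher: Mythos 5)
First, note that the paper itself contains no proof of this statement: it is stated (and intended) as a conjecture, and the surrounding text explicitly concedes that the two 1D ingredients of Theorem~\ref{thm:upperBoundGroundstate} -- the Sobolev embedding into $L^\infty$ and the secant bound of Lemma~\ref{lem:secant} -- do not survive in $d>1$; the only support offered is numerical. So there is no ``same approach'' to match, and your proposal should be judged on its own merits. Your first route (coarea plus isoperimetry) is, as you yourself say, not closed: the loss on thin superlevel sets is exactly the obstruction, and invoking unspecified ``$\eps$-explicit elliptic regularity'' does not remove it. Your second route, however, is a genuinely different and essentially sound argument, and it actually does more than the paper claims rigorously: since $\ugs\ge 0$ and $-\tfrac12\Delta\ugs\le\lambdags\,\ugs$ (because $\Veps+\kappa|\ugs|^2\ge 0$), parabolic comparison with the Dirichlet heat semigroup plus the ultracontractivity bound $\|S_t\|_{L^2\to L^\infty}\lesssim t^{-d/4}$ gives $\|\ugs\|_{L^\infty}\lesssim\lambdags^{d/4}$, and the identity $\lambdags=E(\ugs)+\tfrac{\kappa}{2}\|\ugs\|_{L^4}^4$ together with $E(\ugs)\le E(p)\lesssim\eps^{-2}$ and Gagliardo--Nirenberg ($d\le 4$) yields $\lambdags\lesssim\eps^{-2}+\kappa\,\eps^{-d}\lesssim\eps^{-2}$ precisely when $\kappa\lesssim\eps^{d-2}$. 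This proves the conjectured bound in the physically relevant dimensions $d\le 3$ (indeed $d\le 4$) in the natural scaling regime of Section~\ref{subsec-dimless-2D}--\ref{subsec-dimless-1D}, which is also exactly the regime used in the subsequent localization conjecture. Two small points should be made explicit if you write this up: the comparison $\ugs\le e^{\lambdags t}S_t\ugs$ for an $H^1_0$ subsolution needs the weak parabolic maximum principle with the zero-order term absorbed (e.g.\ test with the negative part of $e^{-\lambdags t}(e^{\lambdags t}S_t\ugs-\ugs)$), and the choice $t\simeq\lambdags^{-1}$ is legitimate with domain-independent constants since $\lambdags\ge\lambda_1(-\tfrac12\Delta)>0$ and the Dirichlet kernel is dominated by the free Gaussian.

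The genuine gaps are the ones you already name, and they are real: in 1D Theorem~\ref{thm:upperBoundGroundstate} obtains the $L^\infty$ bound under the weaker hypothesis $\kappa\lesssim\eps^{-2}$, and your heat-kernel bookkeeping degrades to $\|\ugs\|_{L^\infty}\lesssim\eps^{-d(d+2)/4}$ in that regime, so the conjecture in that generality remains open even for $d\le 4$; and for $d\ge 5$ the $L^4$ interpolation fails, so the proposed Moser/Stampacchia bootstrap on $-\tfrac12\Delta\ugs\le\lambdags\ugs$ with $\eps$-explicit constants is still to be carried out. In short: you have not proved the conjecture in full, but you have supplied a correct proof of it in the regime the paper actually uses ($d\le 4$, $\kappa\lesssim\eps^{d-2}$), which is strictly more than the paper provides.
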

Assuming such an upper bound on the function values of the ground state, we conclude that~$V_{\eps,\text{gs}} - \Veps = \kappa\, |\ugs|^2$ is a perturbation of order at most~$\calO(\eps^{-2})$ as long as $\kappa \lesssim \eps^{d-2}$. 
An argumentation that such perturbations do not happen \quotes{too often} we end up with the following conjecture.
\begin{conjecture}
The ground state of the GPEVP with a disorder potential satisfying~\ref{A1}-\ref{A4} and~$\eps\ll 1$ is localized if~$\kappa\lesssim \eps^{d-2}$.
\end{conjecture}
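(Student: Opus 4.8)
The plan is to mimic the proof of Theorem~\ref{thm:localizedGroundstate} step by step, replacing each one-dimensional tool by a $d$-dimensional analogue and using the first conjecture as a technical input. As in the $1D$ case, $\ugs$ (with $\|\ugs\|=1$) coincides with the ground state of the linear Schr\"odinger eigenvalue problem with effective potential $V_{\eps,\text{gs}}:=\Veps+\kappa\,|\ugs|^2$, so it suffices to show that $V_{\eps,\text{gs}}$ still satisfies~\ref{A1}--\ref{A4} with slightly adjusted characteristic constants and then to invoke a $d$-dimensional version of Corollary~\ref{cor:linearLocalizaton}.

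First I would record the energy bounds. Comparing $E(\ugs)\le E(p)$ for a fixed $L^2$-normalized polynomial bubble $p$ on the unit cube as in~\eqref{eq:boundDerivativeGroundstate}, and using $\max\Veps\lesssim\eps^{-2}$ together with $\kappa\lesssim\eps^{d-2}\le\eps^{-2}$, gives $\|\nabla\ugs\|_{L^2(D)}^2\lesssim\eps^{-2}$ and, since $\lambdags\le 2E(\ugs)\le 2E(p)$, also $\lambdags\lesssim\eps^{-2}$. Next comes the pointwise bound $\|\ugs\|_{L^\infty(D)}\lesssim\eps^{-d/2}$ of the first conjecture, which replaces Theorem~\ref{thm:upperBoundGroundstate} (in $d>1$ neither the secant Lemma~\ref{lem:secant} nor the embedding $H^1\hookrightarrow L^\infty$ is available). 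I would obtain it from a De Giorgi--Nash--Moser local boundedness estimate: since $\Veps\ge0$ and $\kappa|\ugs|^2\ge0$, the nonnegative function $\ugs$ is a weak subsolution of $-\tfrac12\Delta v=\lambdags v$, and on a ball of radius $r\simeq\eps$ the relevant zeroth-order quantity $\lambdags\,r^2\lesssim\lambdags\,\eps^2\lesssim1$ is $\eps$-uniform, so the De Giorgi constant is too and $\sup_{B_\eps}\ugs\lesssim\eps^{-d/2}\|\ugs\|_{L^2(B_{2\eps})}\le\eps^{-d/2}$; the reason this is currently phrased as a conjecture is presumably that it relies on machinery heavier than the elementary estimates used elsewhere in the paper.

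Granting the pointwise bound, one has $\kappa|\ugs|^2\lesssim\kappa\,\eps^{-d}\lesssim\eps^{-2}$ everywhere, so $V_{\eps,\text{gs}}-\Veps$ is at most of the order of the peaks of $\Veps$. I would then prove a $d$-dimensional analogue of Lemma~\ref{lem:onlyO1subdomains}, i.e.\ that $|\ugs|\gtrsim\eps^{-d/2}$ occurs on only $\calO(1)$ of the scale-$\eps$ cubes. The case $\ugs\simeq\eps^{-d/2}$ throughout a cube again forces $\|\ugs\|_{L^2(\text{cube})}^2\simeq1$, hence occurs $\calO(1)$ times by $\|\ugs\|=1$; the case of a sharp spike within a cube is where the $1D$ proof used Lemma~\ref{lem:secant}, which has no $L^2$-gradient analogue in higher dimensions, so here one would instead invoke interior elliptic regularity ($\ugs\in C^{1,\alpha}_{\mathrm{loc}}$ with $\eps$-quantified seminorms, using $\lambdags,\Veps,\kappa|\ugs|^2\lesssim\eps^{-2}$) together with a local capacity/Poincar\'e argument on the cube, each spike costing $\gtrsim\eps^{-2}$ of Dirichlet energy and hence occurring $\calO(1)$ times. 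Everywhere else the perturbation is of order at most $\eps^{-2+\delta}$ — and in fact, in the valley where $\ugs$ concentrates it is only of order $\eps^{-2}|\log\eps|^{-d}$, which for $d\ge2$ stays below the $\alpha$-cell threshold $\eps^{-2}|\log\eps|^{-2}$ — so, exactly as in the proof of Theorem~\ref{thm:localizedGroundstate}, it is absorbed by passing to $\beta_\text{gs}:=c_\text{gs}\beta$ and $\alpha_\text{gs}:=c_\text{gs}^{-1}\alpha$ with $c_\text{gs}<1$ close to one, and the finitely many exceptional cubes leave the statistical assumptions~\ref{A1}--\ref{A4} intact. Thus $V_{\eps,\text{gs}}$ obeys~\ref{A1}--\ref{A4} with adjusted parameters.

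The remaining input, and in my view the main obstacle, is a $d$-dimensional counterpart of the linear theory of Section~\ref{sect:linear}: the inverse energy estimate, the locality of the domain-decomposition preconditioner, and the spectral-gap bounds that feed Theorem~\ref{abstract-localization-result-eigenfunctions} and Corollary~\ref{cor:linearLocalizaton} are developed here only for $d=1$. Extending them — in particular the combinatorics of valleys and peaks and the associated spectral counting in $\R^d$ — is a substantial undertaking in its own right, and together with the first conjecture this is why the statement is offered as a conjecture rather than a theorem.
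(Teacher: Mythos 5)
The statement you were asked to prove is presented in the paper as a conjecture: the authors give no proof of it. Their entire support is the heuristic you reproduce --- linearize around $\ugs$ with effective potential $\Veps+\kappa\,|\ugs|^2$, assume the pointwise bound $\|\ugs\|_{L^\infty}\lesssim\eps^{-d/2}$ of the preceding conjecture, note that then $\kappa|\ugs|^2\lesssim\eps^{-2}$ whenever $\kappa\lesssim\eps^{d-2}$, and argue that such perturbations occur ``too rarely'' to destroy \ref{A1}--\ref{A4} --- together with the numerical evidence of Figures~\ref{fig:conjecture:variantDim}--\ref{fig:conjecture:variantKap} (the $L^1$-plateau at $\kappa\simeq\eps^{d-2}$ with level $\simeq 2^d\eps^{d/2}$). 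So there is no proof in the paper to compare against; your sketch follows exactly the route the authors intend, mirroring Theorem~\ref{thm:localizedGroundstate}, and you correctly name the reasons the statement is offered only as a conjecture.

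As a proof, however, the proposal has genuine gaps, which you yourself flag, and it is worth locating them precisely. (i) It consumes the first conjecture, itself unproven; your De Giorgi--Nash--Moser argument is plausible but the constants must be tracked in $\eps$ (the zeroth-order coefficient $\Veps+\kappa|\ugs|^2$ is of size $\eps^{-2}$, so the sign argument discarding it and the rescaling to unit balls are exactly the steps that need writing out). (ii) The $d$-dimensional substitute for Lemma~\ref{lem:onlyO1subdomains} is more delicate than ``each spike costs $\gtrsim\eps^{-2}$'': in $d\ge 2$ a value of size $\eps^{-d/2}$ attained on a small subset of an $\eps$-cube carries little $H^1$-capacity and little $L^2$-mass, so neither energy counting (the analogue of Lemma~\ref{lem:secant} fails, as you note) nor normalization alone excludes many small exceptional regions; quantitative, $\eps$-explicit Harnack or $C^{1,\alpha}$ control is really needed there. (iii) Most importantly, the linear machinery your reduction is meant to invoke --- the inverse energy estimate \eqref{estimate_L2norm}, the locality and contraction properties of Propositions~\ref{prop:local} and~\ref{prop:preconditioner}, the spectral counting of Lemmas~\ref{lem:upperBounds}--\ref{lem:lowerBounds} and Corollary~\ref{cor:linearLocalizaton}, and hence Theorem~\ref{abstract-localization-result-eigenfunctions} --- is developed in the paper only on an interval, so the final step of your argument currently has nothing to cite. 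Your proposal is therefore best read as a correct reduction of the conjecture to these three open ingredients, not as a proof, which is consistent with the status the paper itself assigns to the statement.
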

This conjecture can be attested numerically, cf.~Figure~\ref{fig:conjecture:variantDim}. 
We consider once more the $L^1$-norm as an indication for localization. 
It can be observed that a certain plateau is reached in the predicted range of~$\kappa$, namely $\eps^{d-2}$. 
All numerical experiments are based on an equidistant mesh into intervals/squares/cubes with mesh size~$h$. We apply conforming~$Q_1$-finite elements~\cite[Sect.~3.5]{BrS08} for the spatial discretization in combination with the nonlinear eigenvalue solvers presented in~\cite{HenP18ppt,AltHP19ppt}. 
%
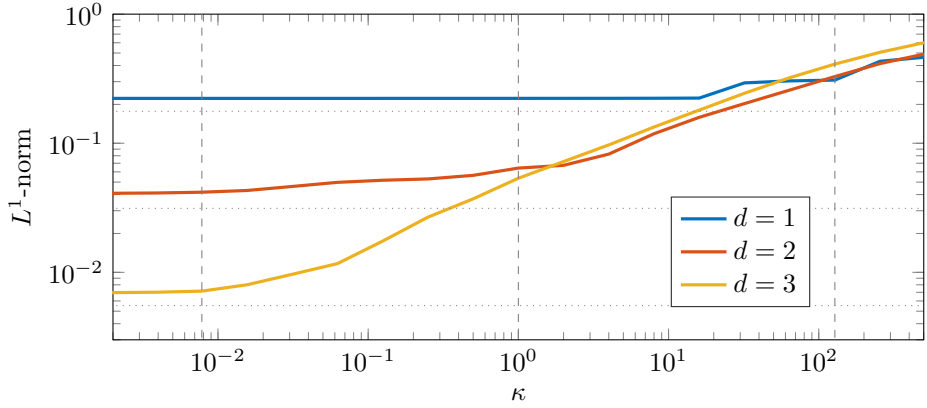
\begin{figure}
%
%
%
\begin{tikzpicture}

\begin{axis}[%
width=4.2in,
height=1.7in,
scale only axis,
xmode=log,
xmin=0.002,
xmax=500,
xminorticks=true,
xlabel={$\kappa$},
ymode=log,
ymin=3e-3,
ymax=1,
ylabel={$L^1$-norm},
axis background/.style={fill=white},
legend style={at={(0.86,0.44)},legend cell align=left, align=left, draw=white!15!black}
]

\addplot [color=mycolor1, very thick]
  table[row sep=crcr]{%
0.0009765625	0.222588591094026\\
0.001953125	0.222588663748425\\
0.00390625	0.222588809071331\\
0.0078125	0.222589099773648\\
0.015625	0.222589681404861\\
0.03125	0.222590845578004\\
0.0625	0.222593177602777\\
0.125	0.222593096862282\\
0.25	0.222602217703719\\
0.5	0.222620675403263\\
1	0.222653024514419\\
2	0.222718345010396\\
4	0.222847380279926\\
8	0.22310480970155\\
16	0.223723053513942\\
32	0.293323695100309\\
64	0.304051487351453\\
128	0.308373120630515\\
256	0.431615507489752\\
512	0.464118527078518\\
1024	0.537836760015258\\
};
\addlegendentry{$d=1$}


\addplot [color=mycolor2, very thick] 
table[row sep=crcr]{%
	0.0009765625	0.0408281741247965\\
	0.001953125	0.0409574452793695\\
	0.00390625	0.0411719140338841\\
	0.0078125	0.0417531211769939\\
	0.015625	0.0430569890561614\\
	0.03125	0.0462501082507296\\
	0.0625	0.0497783570805933\\
	0.125	0.0516681825983036\\
	0.25	0.0528890926657542\\
	0.5	0.0563889889325521\\
	1	0.0641974021420421\\
	2	0.06743161799967\\
	4	0.0823719000985463\\
	8	0.11845235111267\\
	16	0.158744369883681\\
	32	0.202775401991801\\
	64	0.257995666758781\\
	128	0.328549897395589\\
	256	0.414005013963527\\
	512	0.492719325846861\\
	1024	0.58575408223647\\
};
\addlegendentry{$d=2$}


\addplot [color=mycolor3, very thick]
table[row sep=crcr]{%
0.0009765625	0.00694034090955334\\
0.001953125	0.00696431333216911\\
0.00390625	0.00701179770608896\\
0.0078125	0.00715136858990861\\
0.015625	0.00800905172881247\\
0.03125	0.00965450257234612\\
0.0625	0.0116837306586576\\
0.125	0.017460312977961\\
0.25	0.0267518899359518\\
0.5	0.0370706747805829\\
1	0.0535601353999174\\
2	0.0721243301255077\\
4	0.0970244619362574\\
8	0.13353035269092\\
16	0.180568811936034\\
32	0.243726157500803\\
64	0.321744468378011\\
128	0.410211005581435\\
256	0.506476419503453\\
512	0.603677859391261\\
1024	0.695500526780848\\
2048	0.77668613844392\\
4096	0.845024426732138\\
};
\addlegendentry{$d=3$}


\addplot [gray, dashed]
table[row sep=crcr]{%
	0.0078125	1e-3\\
	0.0078125	1\\
};

\addplot [gray, dashed]
table[row sep=crcr]{%
	1	1e-3\\
	1	1\\
};

\addplot [gray, dashed]
table[row sep=crcr]{%
	128	1e-3\\	
	128	1\\
};


\addplot [gray, dotted]
table[row sep=crcr]{%
	0.0001	0.176776696\\
	1000	0.176776696\\
};

\addplot [gray, dotted]
table[row sep=crcr]{%
	0.0001	0.0313\\
	1000	0.0313\\
};

\addplot [gray, dotted]
table[row sep=crcr]{%
	0.0001	0.005524272\\
	1000	0.005524272\\
};

\end{axis}
\end{tikzpicture}%
\caption{$L^1$-norms of the ground states for piecewise constant random potentials with~$\eps=2^{-7}$ and~$V_\text{max}=5$. The experiments are performed on a uniform partition with mesh size~$h=2^{-12}$ ($d=1$), $h=2^{-10}$ ($d=2$), and~$h=2^{-7}$ ($d=3$). The vertical dashed lines indicate~$\kappa = \eps^{d-2}$, whereas the horizontal dotted lines equal~$2^d\cdot \eps^{d/2}$.}
\label{fig:conjecture:variantDim}
\end{figure}

To explain the different levels of the plateau in different dimensions, we make the following consideration: For $d=1$ we have shown that the maximal value of~$\ugs$ is in the range~$\eps^{-1/2}$. Now we can construct a piecewise constant function, which is equal to~$\eps^{-1/2}$ in a small region and zero otherwise such that the~$L^2$-norm is equal to one. The corresponding $L^1$-norm then equals~$\eps^{1/2}$. 
Similar thoughts in higher dimension construct $L^2$-normalized and piecewise constant functions with an~$L^1$-norm equal to~$\eps^{d/2}$. 
This dependency of the $L^1$-norm on the dimension and on~$\eps$ can also be observed in Figure~\ref{fig:conjecture:variantEps}, which shows similar results from another perspective. 
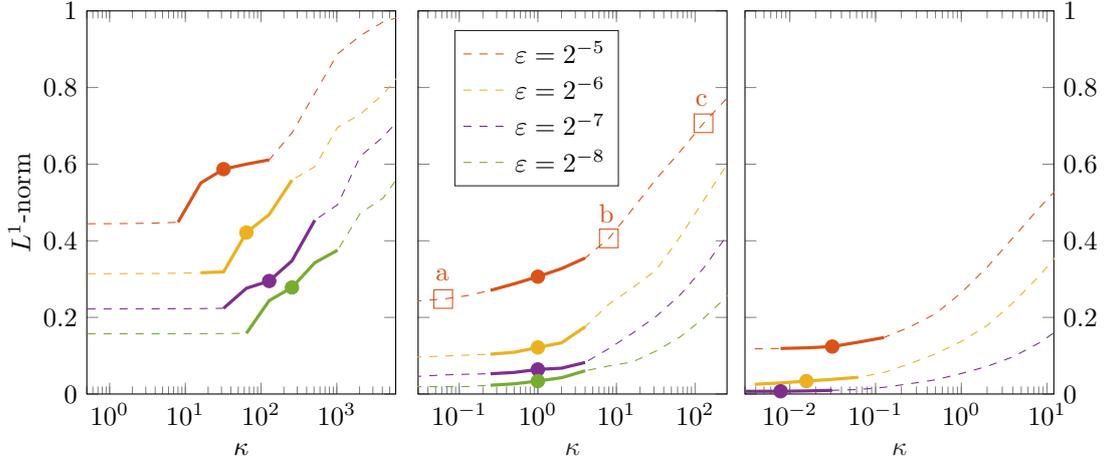
\begin{figure}
%
%
%
\begin{tikzpicture}

\begin{axis}[%
width=1.6in,
height=2.0in,
scale only axis,
xmode=log,
xmin=0.5,
xmax=6000,
xminorticks=true,
xlabel={$\kappa$},
ymin=0,
ymax=1,
ylabel={$L^1$-norm},
ylabel style={below=0.03in},
axis background/.style={fill=white},
legend style={at={(0.25,0.94)},legend cell align=left, align=left, draw=white!15!black}
]

\addplot [color=mycolor2, dashed]
  table[row sep=crcr]{%
0.00390625	0.444126777968124\\
0.0078125	0.444129105230081\\
0.015625	0.444133762904426\\
0.03125	0.444143090944928\\
0.0625	0.444161798516957\\
0.125	0.444199425515199\\
0.25	0.444275575668121\\
0.5	0.444431877433018\\
1	0.44464836921724\\
2	0.445188983001309\\
4	0.446145800795596\\
8	0.448280941106303\\
16	0.550817552604907\\
32	0.587052117853949\\
64	0.599988134782899\\
128	0.611237220981565\\
256	0.681597948926358\\
512	0.784683004288552\\
1024	0.887633427557532\\
2048	0.935463831454436\\
4096	0.971285986747348\\
8192	0.989220723051418\\
};

\addplot [color=mycolor3, dashed]
  table[row sep=crcr]{%
0.00390625	0.31397485481121\\
0.0078125	0.313975556752139\\
0.015625	0.313976960623319\\
0.03125	0.313979768323712\\
0.0625	0.313985383562355\\
0.125	0.313996613436509\\
0.25	0.314019071140586\\
0.5	0.314063981381496\\
1	0.314138087613479\\
2	0.314312943254933\\
4	0.314661487853716\\
8	0.315306338478171\\
16	0.316550526489147\\
32	0.318765367979245\\
64	0.421479062533549\\
128	0.468444770211407\\
256	0.558431892978755\\
512	0.593309265998994\\
1024	0.696106421208684\\
2048	0.730081609378348\\
4096	0.782449622736561\\
8192	0.852132262167548\\
};

\addplot [color=mycolor4, dashed]
  table[row sep=crcr]{%
0.00390625	0.222039586249761\\
0.0078125	0.222039833398917\\
0.015625	0.222040327700673\\
0.03125	0.222041316318124\\
0.0625	0.222043293610119\\
0.125	0.222047248433167\\
0.25	0.222055159122061\\
0.5	0.222070985380946\\
1	0.222099369402242\\
2	0.222161953286925\\
4	0.222283534220099\\
8	0.222523733306941\\
16	0.222988183289711\\
32	0.223872419293258\\
64	0.275969945709941\\
128	0.295022042593267\\
256	0.347732666965117\\
512	0.453613637759153\\
1024	0.494018414899505\\
2048	0.620784434935672\\
4096	0.67104824852821\\
8192	0.736185843671375\\
};

\addplot [color=mycolor5, dashed]
  table[row sep=crcr]{%
0.00390625	0.157157549378742\\
0.0078125	0.157157636639139\\
0.015625	0.157157811159431\\
0.03125	0.15715816019802\\
0.0625	0.157158858267305\\
0.125	0.157160254375056\\
0.25	0.157163046473417\\
0.5	0.157168630251205\\
1	0.157178782052326\\
2	0.157200949424007\\
4	0.157245233792954\\
8	0.157332187926618\\
16	0.157502624296351\\
32	0.157833985033928\\
64	0.158460239874375\\
128	0.243836932434611\\
256	0.278192496983051\\
512	0.342615414854185\\
1024	0.375325783346177\\
2048	0.47210270655533\\
4096	0.511535588705188\\
8192	0.59433516573923\\
};


\addplot [color=mycolor2, very thick]
table[row sep=crcr]{%
8	0.448280941106303\\
16	0.550817552604907\\
32	0.587052117853949\\
64	0.599988134782899\\
128	0.611237220981565\\
};

\addplot [color=mycolor3, very thick]
table[row sep=crcr]{%
16	0.316550526489147\\
32	0.318765367979245\\
64	0.421479062533549\\
128	0.468444770211407\\
256	0.558431892978755\\
};

\addplot [color=mycolor4, very thick]
table[row sep=crcr]{%
32	0.223872419293258\\
64	0.275969945709941\\
128	0.295022042593267\\
256	0.347732666965117\\
512	0.453613637759153\\
};

\addplot [color=mycolor5, very thick]
table[row sep=crcr]{%
64	0.158460239874375\\
128	0.243836932434611\\
256	0.278192496983051\\
512	0.342615414854185\\
1024	0.375325783346177\\
};


\addplot [color=mycolor2, mark=*, mark size=2.5pt]
table[row sep=crcr]{%
32	0.587052117853949\\
};

\addplot [color=mycolor3, mark=*, mark size=2.5pt]
table[row sep=crcr]{%
64	0.421479062533549\\
};

\addplot [color=mycolor4, mark=*, mark size=2.5pt]
table[row sep=crcr]{%
128	0.295022042593267\\
};

\addplot [color=mycolor5, mark=*, mark size=2.5pt]
table[row sep=crcr]{%
256	0.278192496983051\\
};

%
%
%
%

\end{axis}
\end{tikzpicture}
%
%
%
\begin{tikzpicture}

\begin{axis}[%
width=1.6in,
height=2.0in,
scale only axis,
xmode=log,
xmin=0.03,
xmax=255,
xminorticks=true,
xlabel style={font=\color{white!15!black}},
xlabel={$\kappa$},
ymin=0,
ymax=1,
yticklabels = {},
ylabel style={font=\color{white!15!black}},
axis background/.style={fill=white},
legend style={at={(.65,0.95)},legend cell align=left, align=left, draw=white!15!black}
]

\addplot [color=mycolor2, dashed]
  table[row sep=crcr]{%
	0.0009765625	0.238530083308983\\
	0.001953125	0.238602634991944\\
	0.00390625	0.238747474431654\\
	0.0078125	0.239036099402366\\
	0.015625	0.24219352924545\\
	0.03125	0.243270039155871\\
	0.0625	0.247892403241705\\
	0.125	0.256685855776303\\
	0.25	0.270522695476369\\
	0.5	0.287746460050304\\
	1	0.306261406740995\\
	2	0.327337786298677\\
	4	0.354919250649841\\
	8	0.406152864995499\\
	16	0.482395413543121\\
	32	0.561069517024603\\
	64	0.632102824529787\\
	128	0.706613660283701\\
	256	0.772757585393674\\
	512	0.832094602587523\\
	1024	0.886903554139841\\
};
\addlegendentry{$\eps=2^{-5}$}

\addplot [color=mycolor3, dashed]
  table[row sep=crcr]{%
	0.0009765625	0.091824490822977\\
	0.001953125	0.0921051296468588\\
	0.00390625	0.0924436265646691\\
	0.0078125	0.0933164464878822\\
	0.015625	0.0945720375656281\\
	0.03125	0.0967780498769261\\
	0.0625	0.0989755595725956\\
	0.125	0.101610212960691\\
	0.25	0.10400341577929\\
	0.5	0.108987063732336\\
	1	0.121557058219855\\
	2	0.133687011815912\\
	4	0.175445388146077\\
	8	0.234417719951503\\
	16	0.27706320852101\\
	32	0.323356562429689\\
	64	0.409310696436868\\
	128	0.504609771443874\\
	256	0.600207857897595\\
	512	0.686213235396395\\
	1024	0.764977999136142\\
};
\addlegendentry{$\eps=2^{-6}$}

\addplot [color=mycolor4, dashed]
  table[row sep=crcr]{%
	0.0009765625	0.0408281741247965\\
	0.001953125	0.0409574452793695\\
	0.00390625	0.0411719140338841\\
	0.0078125	0.0417531211769939\\
	0.015625	0.0430569890561614\\
	0.03125	0.0462501082507296\\
	0.0625	0.0497783570805933\\
	0.125	0.0516681825983036\\
	0.25	0.0528890926657542\\
	0.5	0.0563889889325521\\
	1	0.0641974021420421\\
	2	0.06743161799967\\
	4	0.0823719000985463\\
	8	0.11845235111267\\
	16	0.158744369883681\\
	32	0.202775401991801\\
	64	0.257995666758781\\
	128	0.328549897395589\\
	256	0.414005013963527\\
	512	0.492719325846861\\
	1024	0.58575408223647\\
};
\addlegendentry{$\eps=2^{-7}$}

\addplot [color=mycolor5, dashed]
  table[row sep=crcr]{%
	0.0009765625	0.0191095081881223\\
	0.001953125	0.0191130103070752\\
	0.00390625	0.0191119043642156\\
	0.0078125	0.0191180873640937\\
	0.015625	0.0191315202289676\\
	0.03125	0.0191549403176395\\
	0.0625	0.0192092086714283\\
	0.125	0.0193781969535472\\
	0.25	0.0230229425078705\\
	0.5	0.026610704306073\\
	1	0.0337434516431123\\
	2	0.0425744118572329\\
	4	0.0606720629956803\\
	8	0.073117908011248\\
	16	0.0836536752048233\\
	32	0.112346214563422\\
	64	0.148479139077326\\
	128	0.197259632880821\\
	256	0.255189954521249\\
	512	0.32603057829341\\
	1024	0.407875149346344\\
};
\addlegendentry{$\eps=2^{-8}$}


\addplot [color=mycolor2, very thick]
table[row sep=crcr]{%
	0.25	0.270522695476369\\
0.5	0.287746460050304\\
1	0.306261406740995\\
2	0.327337786298677\\
4	0.354919250649841\\
};

\addplot [color=mycolor3, very thick]
table[row sep=crcr]{%
	0.25	0.10400341577929\\
0.5	0.108987063732336\\
1	0.121557058219855\\
2	0.133687011815912\\
4	0.175445388146077\\
};

\addplot [color=mycolor4, very thick]
table[row sep=crcr]{%
	0.25	0.0528890926657542\\
0.5	0.0563889889325521\\
1	0.0641974021420421\\
2	0.06743161799967\\
4	0.0823719000985463\\
};

\addplot [color=mycolor5, very thick]
table[row sep=crcr]{%
	0.25	0.0230229425078705\\
0.5	0.026610704306073\\
1	0.0337434516431123\\
2	0.0425744118572329\\
4	0.0606720629956803\\
};


\addplot [color=mycolor2, mark=*, mark size=2.5pt]
table[row sep=crcr]{%
1	0.306261406740995\\
};

\addplot [color=mycolor2, mark=square, mark size=3.5pt, only marks]
table[row sep=crcr]{%
0.0625	0.247892403241705\\
8	0.406152864995499\\
128	0.706613660283701\\
};

\addplot [color=mycolor3, mark=*, mark size=2.5pt]
table[row sep=crcr]{%
1	0.121557058219855\\
};

\addplot [color=mycolor4, mark=*, mark size=2.5pt]
table[row sep=crcr]{%
1	0.0641974021420421\\
};

\addplot [color=mycolor5, mark=*, mark size=2.5pt]
table[row sep=crcr]{%
1	0.0337434516431123\\
};

\end{axis}

\node[mycolor2] at (0.316, 1.58) {a};	
\node[mycolor2] at (2.48, 2.4) {b};	
\node[mycolor2] at (3.73, 3.9) {c};	
\end{tikzpicture}
%
%
%
\begin{tikzpicture}

\begin{axis}[%
width=1.6in,
height=2.0in,
at={(0.758in,0.481in)},
scale only axis,
xmode=log,
xmin=0.003,
xmax=12,
xminorticks=true,
xlabel style={font=\color{white!15!black}},
xlabel={$\kappa$},
ymin=0,
ymax=1,
yticklabel pos=right,
ylabel style={font=\color{white!15!black}},
axis background/.style={fill=white},
legend style={at={(0.25,0.94)},legend cell align=left, align=left, draw=white!15!black}
]


\addplot [color=mycolor2, dashed]
  table[row sep=crcr]{%
0.00390625	0.118064538253251\\
0.0078125	0.118911953601086\\
0.015625	0.120592686265185\\
0.03125	0.123902640405407\\
0.0625	0.135190572982736\\
0.125	0.147458005052191\\
0.25	0.174974335147918\\
0.5	0.209490893626072\\
1	0.264495859068696\\
2	0.331081890846303\\
4	0.407571250864064\\
8	0.484516880668689\\
16	0.555796913748613\\
32	0.633350341177488\\
64	0.707596491914789\\
128	0.770304468532542\\
256	0.827235624327313\\
};

\addplot [color=mycolor3, dashed]
  table[row sep=crcr]{%
0.00390625	0.0254543635866193\\
0.0078125	0.0288922944296693\\
0.015625	0.0338009486726643\\
0.03125	0.0383811892585833\\
0.0625	0.0436032322163382\\
0.125	0.056256719041736\\
0.25	0.0790032871854309\\
0.5	0.104759867773795\\
1	0.137324046997375\\
2	0.178639570655686\\
4	0.237857575571794\\
8	0.306701170491741\\
16	0.384637165958594\\
32	0.465689880333161\\
64	0.551632967622099\\
128	0.635679911715181\\
256	0.714065095363752\\
};

\addplot [color=mycolor4, dashed]
table[row sep=crcr]{%
0.0009765625	0.00694034090955334\\
0.001953125	0.00696431333216911\\
0.00390625	0.00701179770608896\\
0.0078125	0.00715136858990861\\
0.015625	0.00800905172881247\\
0.03125	0.00965450257234612\\
0.0625	0.0116837306586576\\
0.125	0.017460312977961\\
0.25	0.0267518899359518\\
0.5	0.0370706747805829\\
1	0.0535601353999174\\
2	0.0721243301255077\\
4	0.0970244619362574\\
8	0.132774026406686\\
16	0.178528931000715\\
};



\addplot [color=mycolor2, very thick]
table[row sep=crcr]{%
0.0078125	0.118911953601086\\
0.015625	0.120592686265185\\
0.03125	0.123902640405407\\
0.0625	0.135190572982736\\
0.125	0.147458005052191\\
};

\addplot [color=mycolor3, very thick]
table[row sep=crcr]{%
0.00390625	0.0254543635866193\\
0.0078125	0.0288922944296693\\
0.015625	0.0338009486726643\\
0.03125	0.0383811892585833\\
0.0625	0.0436032322163382\\
};

\addplot [color=mycolor4, very thick]
table[row sep=crcr]{%
	0.001953125	0.00696431333216911\\
	0.00390625	0.00701179770608896\\
	0.0078125	0.00715136858990861\\
	0.015625	0.00800905172881247\\
	0.03125	0.00965450257234612\\
};



\addplot [color=mycolor2, mark=*, mark size=2.5pt]
table[row sep=crcr]{%
0.03125	0.123902640405407\\
};

\addplot [color=mycolor3, mark=*, mark size=2.5pt]
table[row sep=crcr]{%
0.015625	0.0338009486726643\\
};

\addplot [color=mycolor4, mark=*, mark size=2.5pt]
table[row sep=crcr]{%
	0.0078125	0.00715136858990861\\
};

\end{axis}
\end{tikzpicture}%
\caption{$L^1$-norms of the ground states for piecewise constant random potentials with~$V_\text{max}=5$ and various values for~$\eps$. The dot indicates~$\kappa=\eps^{d-2}$ and the solid line the region~$[0.25\cdot \eps^{d-2}, 4\cdot \eps^{d-2}]$. The particular ground states marked with a square in the 2D case are shown in Figure~\ref{fig:conjecture:variantKap}.}
\label{fig:conjecture:variantEps}
\end{figure}
Since ground states are smooth, we cannot ask for such a perfectly localized function. Thus we expect slightly larger $L^1$ norms. In Figure~\ref{fig:conjecture:variantDim} one can see that the reached plateau is in the range~$2^d\cdot \eps^{d/2}$. 

Finally, we present a visualization of two-dimensional ground states in Figure~\ref{fig:conjecture:variantKap} showing the outcome for a single realization of a random potential with $\eps=2^{-5}$. 
\begin{figure}
\includegraphics[width=.23\textwidth]{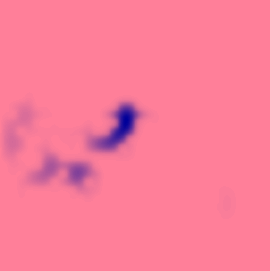}\hspace{0.2em}
\includegraphics[width=.23\textwidth]{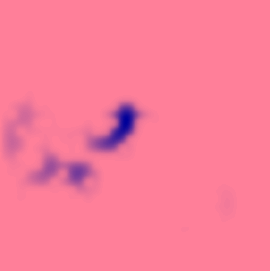}\hspace{0.2em}
\includegraphics[width=.23\textwidth]{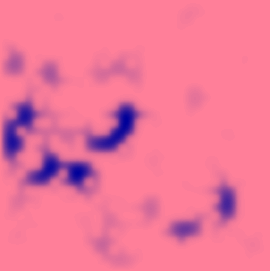}\hspace{0.2em}
\includegraphics[width=.23\textwidth]{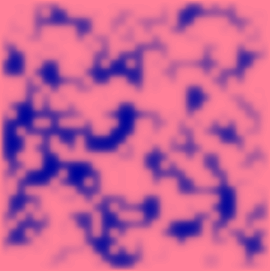}\\
{\color{mycolor2} \ \hspace{3.5cm}a\hspace{3.5cm}b\hspace{3.5cm}c }
\caption{Visualization of the ground states for $\eps=2^{-5}$, $V_\text{max}=5$, illustrating the level of localization for different interaction parameters~$\kappa = 0,\, 2^{-4},\, 2^{3},\, 2^{7}$ (from left to right), cf.~Figure~\ref{fig:conjecture:variantEps}.}
\label{fig:conjecture:variantKap}
\end{figure}
%
\section{Application to Physically Relevant Data and Mathematical Prediction}
\label{section:application-to-physical-values}

As mentioned in the introduction, a direct physical observation of Anderson-type localization for the ground states of BEC has not yet been achieved. In~\cite{SDK06} it is argued that this is because the localized density variations might not be visible in experimental measurements, due to the limitations of imaging optics. In this case, the localization has to be detected indirectly through other observables, such as a time-of-flight measurement for the velocity distribution.

In this section, we apply our localization result in Theorem~\ref{thm:localizedGroundstate} to realistic physical data and make predictions about when the localization of ground states can be expected. This might help to set up suitable practical experiments. 
We follow the descriptions of the experimental setup presented in~\cite{physics-paper-localization-BEC}, where dynamical Anderson-type localization could be observed (i.e.~the localization of an expanding BEC in a disorder potential), but where the localization of ground states is questionable, cf.~a corresponding discussion in~\cite[Sect.~2.2]{SPC08}. We will reconfirm this prediction numerically and  make own predictions for which particle number and for which strength of the potential localization is likely to happen (when keeping the remaining parameters of the experiment unchanged).

%
%
\subsection{Experimental setup and delocalization}

The setting and notation is as introduced in Section \ref{subsec-dimless-1D}. With the data provided in~\cite{physics-paper-localization-BEC} we can identify practical values for the $1D$ GPEVP~\eqref{GPE-1D} with disorder. 
First of all, the authors consider a weakly interacting Bose gas consisting of $N=1.7 \cdot 10^4$ atoms of \isotope[87]{Rb}. 
Consequently, we can specify the mass of a particle with $m = 1.441 \cdot 10^{-25} \hspace{2pt}\texttt{kg}$ and the scattering length with $a = 5.1 \cdot 10^{-9} \hspace{2pt}\texttt{m}$. The condensate is confined in $x$-direction by using a strong harmonic trapping potential in $y$- and $z$-direction with trapping frequency $\omega_y = \omega_z = 2\pi \cdot 70\hspace{2pt} \texttt{Hz}$. 
In $x$-direction, a disorder potential is imposed. The disorder is generated by passing a laser beam (with a wave length of $0.514 \hspace{2pt}\mu \texttt{m}$) through a diffusive plate. The beam is diffracted and forms a speckle pattern behind the plate. These types of potentials are called speckle potentials (or optical speckles) and have a sufficiently small correlation length, cf.~\cite{SPC08,physics-paper-localization-BEC}. In the setup of ~\cite{physics-paper-localization-BEC}, the speckle grain size of the plate (which is proportional to the correlation length $\sigma_R$) is given by $\pi \sigma_R = 0.82 \texttt{$\mu$m} = 8.2 \cdot 10^{-7} \texttt{m}$. 
Since this is an indicator for the speed of variations in the potential, we can use it as a reference value to define $\eps$. More precisely, we rescale $\pi \sigma_R$ using the default unit $\sqrt{\frac{\hbar}{m } }$ (cf.~\eqref{def-sx}) and then set $\rho$ so that we obtain a certain target oscillation length $\eps$ for our rescaled potential (on the unit interval). Recall here that $\rho$ is used to rescale the oscillation length of $V_{\mbox{\tiny\rm 1D}}$ to $1$. Hence, we set 
$$
\rho 
:= \pi \sigma_R\,  \sqrt{\frac{m}{\hbar } } 
= 8.2 \cdot 10^{-7} \cdot \sqrt{ \frac{ 1.441 \cdot 10^{-25}} { 1.05 \cdot 10^{-34} } } \approx 3.037 \cdot 10^{-2}.
$$
Note that $V_{\mbox{\tiny\rm 1D}}(x) = \frac{\rho^2}{\hbar} \hspace{2pt} W_{\mbox{\tiny\rm 1D}}\big( x \hspace{2pt}\rho \hspace{2pt}\sqrt{\frac{\hbar}{m}} \big)$, which means that if $W_{\mbox{\tiny\rm 1D}}$ is oscillating with frequency $(\pi \sigma_R)^{-1}$, then $V_{\mbox{\tiny\rm 1D}}$ is oscillating with frequency $1$. Next, we have to set the target oscillation length $\eps$ for the effective nondimensional potential $V_{\mbox{\tiny\rm 1D}}(x/\eps)$. A reasonable choice (to work on the unit interval and also for computational purposes) is $\eps := 2^{-11}$. Practically this means that if the ground state localizes, then we would observe an exponential decay in units of $\eps$. For $s_\x$ we have
\begin{align*}
s_\x = \sqrt{ \frac{\hbar}{m } }\ \frac{\rho}{\eps}  
\approx 1.678 \cdot 10^{-3}, 
\end{align*}
which means that our results on the unit interval $D=(0,1)$ will correspond to a spatial extension of $1.678 \hspace{2pt}\texttt{mm}$. It is realistic to assume that the condensate is fully confined in that region and we can prescribe zero boundary conditions on $D$ for the mathematical problem.
Finally, the maximum disorder amplitude is specified in  \cite{physics-paper-localization-BEC} with 
$$
V_R := \max_{x} W_{\mbox{\tiny\rm 1D}}(x) = 0.12 \cdot \hbar \cdot2\pi \cdot  219\hspace{2pt} \texttt{Hz}.
$$
Rescaling these numbers to the nondimensional setting, we obtain a peak amplitude of
$$
\max V_{1D} = 0.1523, \qquad
(\max V_{1D})\, \eps^{-2} \approx 6.39 \cdot 10^5. 
$$
For the interaction constant and the effective $\kappa$ we obtain 
$$
\kappa_{\mbox{\tiny\rm 1D}} 
= \rho\, \frac{a N}{2 \pi}\, \sqrt{\frac{m \omega_y \omega_z}{\hbar}}
\approx 6.824, \qquad 
\kappa \approx 6.824\, \eps^{-1} \approx 1.40 \cdot 10^4.
$$
To summarize, we can study the ground states of BEC in an experimental setup that corresponds to the data from \cite{physics-paper-localization-BEC} by considering the $1D$ GPEVP~\eqref{GPE-1D} with a potential~$\frac{1}{\eps^2} \hspace{2pt} V_{\mbox{\tiny\rm 1D}}\hspace{-2pt}\left( \frac{x}{\eps}\right)$ that oscillates on the $\eps$-scale and which takes a maximum value of about~$0.1523\, \eps^{-2}$. The particle interaction constant is $\kappa \approx 6.824\, \eps^{-1}$. Considering the relative size of $\eps$, the value of $\kappa$ appears to be at least one order of magnitude too large compared to the strength of the potential. 
Note that the given parameters mark the border of the considered scaling regime. Thus, the observed delocalization does not contradict Theorem~\ref{thm:localizedGroundstate}, which provides a scaling regime (for sufficiently small $\eps$) rather than explicit constants. 
The following numerical experiments show that relatively small adjustments of the parameters lead to localized ground states. 
%
%
\subsection{Mathematical prediction of localization}
We explore the influence of the parameters~$\eps$, $V_{\max}$, and $\kappa$ on the localization of the ground state. 
For the numerical experiments we apply again the eigenvalue solvers presented in~\cite{HenP18ppt,AltHP19ppt}. 
 
In the first experiment, we observe that the physical setup is indeed too weak to generate localization for the given particle number of $N=1.7 \cdot 10^4$. In Figure~\ref{fig:sect6:eps} it is shown that even smaller values of $\eps$ do not lead to a significant improvement of the effect. Thus, localization would only appear on rather unrealistic length scales, i.e., $\eps\le 2^{-15}$, which would correspond to a BEC that has an extension that is significantly larger than $2.68\hspace{2pt}\texttt{cm}$. 
\begin{figure}
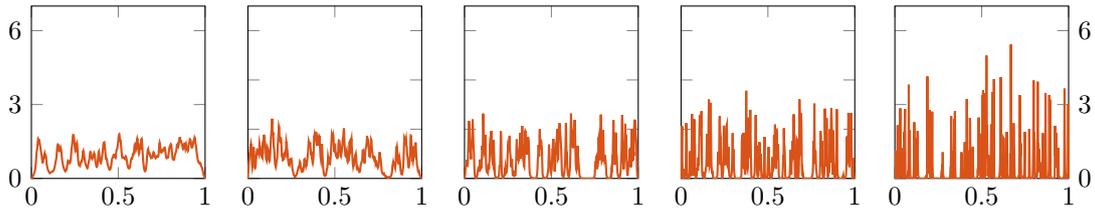

\centering
\input{pics/gs_eps/gs_eps_1_1_1}
\input{pics/gs_eps/gs_eps_2_1_1}
\input{pics/gs_eps/gs_eps_3_1_1}
\input{pics/gs_eps/gs_eps_4_1_1}
\input{pics/gs_eps/gs_eps_5_1_1}
\caption{Illustration of ground states for sample disorder potentials with parameters~$\max V_{1D} = 0.1523$, $\kappa_{1D} = 6.824$, and variable~$\eps=2^{-9},\dots,2^{-13}$ (from left to right). Underlying mesh size $h=2^{-18}$.}
\label{fig:sect6:eps}
\end{figure}

In the second experiment, we analyze the influence of~$\kappa$ and $\max V_{1D}$, i.e., the number of particles and the strength of the potential. 
Here, the results in Figure~\ref{fig:sect6:kappaVmax} indicate that both parameters intensify localization but in a different manner. The strength of the potential makes peaks steeper whereas the number of particles is more related to the number of peaks. 
Already for a disorder potential that is $4$ times stronger than the current potential combined with only an eighth of the particles, i.e., when~$\max V_{1D}$ and~$\kappa_{1D}$ are almost the same, localization gets visible. 
In this sense, the given physical values of $V_{\max}$ and $\kappa$ mark the transition phase between local and global ground states, which is in accordance with Theorem~\ref{thm:localizedGroundstate}. 
Note, however, that there is no sharp phase transition between localization and delocalization. 
\begin{figure}
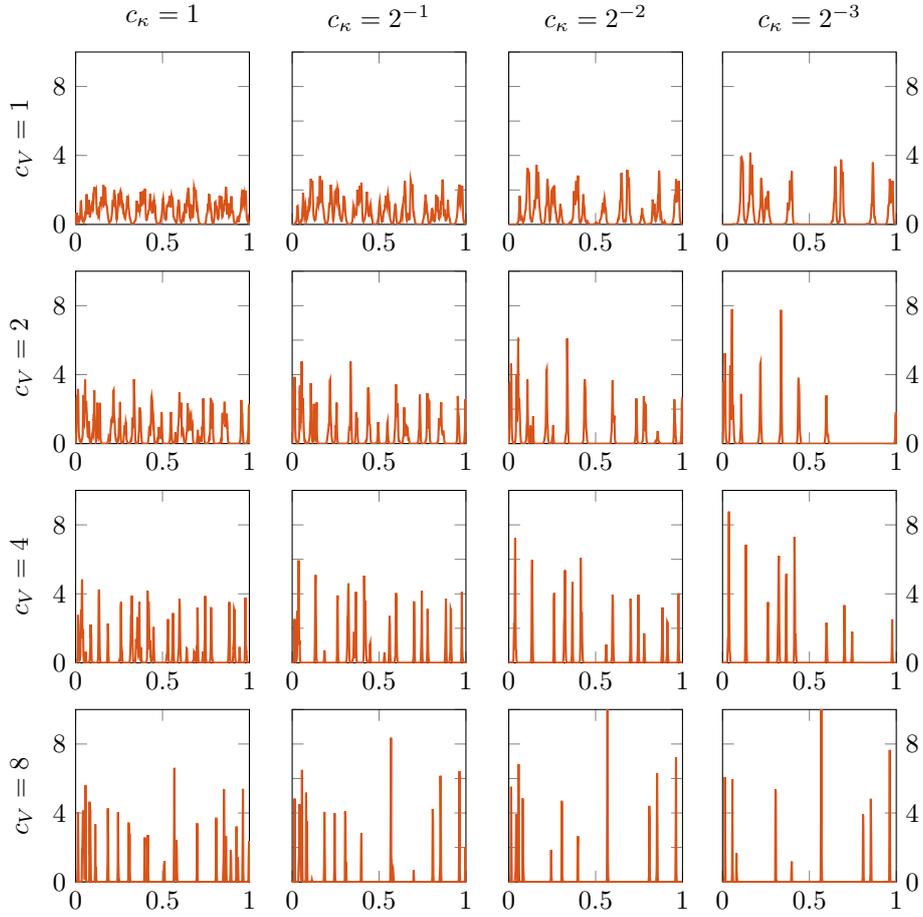

\centering
\input{pics/square/GroundState_1_1_1}
\input{pics/square/GroundState_1_1_2}
\input{pics/square/GroundState_1_1_3}
\input{pics/square/GroundState_1_1_4}\\
\input{pics/square/GroundState_1_2_1}
\input{pics/square/GroundState_1_2_2}
\input{pics/square/GroundState_1_2_3}
\input{pics/square/GroundState_1_2_4}\\
\input{pics/square/GroundState_1_3_1}
\input{pics/square/GroundState_1_3_2}
\input{pics/square/GroundState_1_3_3}
\input{pics/square/GroundState_1_3_4}\\
\input{pics/square/GroundState_1_4_1}
\input{pics/square/GroundState_1_4_2}
\input{pics/square/GroundState_1_4_3}
\input{pics/square/GroundState_1_4_4}
\caption{Ground states for sample disorder potentials with parameters~$\max V_{1D} = 0.1523\cdot c_V$, $\kappa_{1D} = 6.824\cdot c_\kappa$, and fixed~$\eps=2^{-11}$. Underlying mesh size $h=2^{-18}$.} 
\label{fig:sect6:kappaVmax}
\end{figure}
We emphasize that these parameters would not lead to localization for a periodic potential. To see this, we refer to Figure~\ref{fig:sect6:L1periodic} where we consider once more the $L^1$-norm of the ground state as an indicator for localization.  
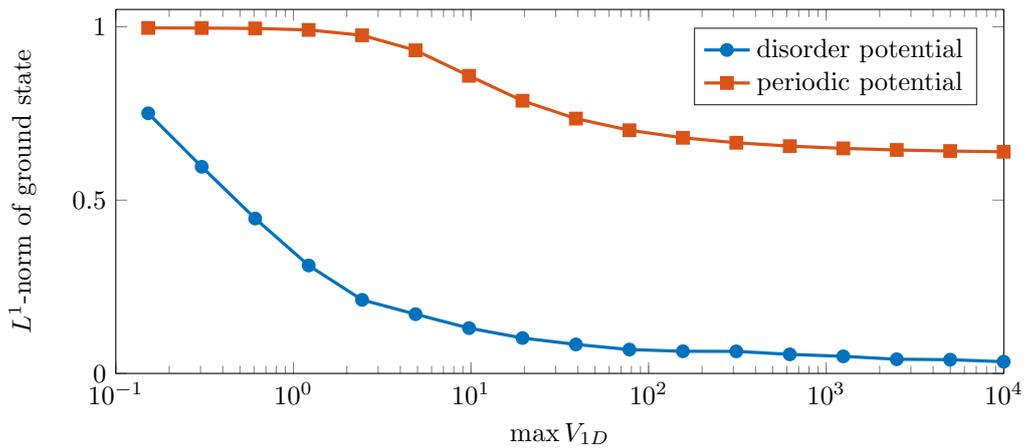
\begin{figure}
\centering
%
%
%
\begin{tikzpicture}

\begin{axis}[%
width=4.6in,
height=1.9in,
at={(1.29in,0.481in)},
scale only axis,
xmode=log,
xmin=0.1,
xmax=10000,
xminorticks=true,
xlabel={$\max V_{1D}$},
ymin=0.,
ymax=1.05,
yminorticks=true,
ylabel={$L^1$-norm of ground state},
axis background/.style={fill=white},
title style={font=\bfseries},
legend style={at={(0.97,0.95)}, legend cell align=left, align=left, draw=white!15!black}
]

\addplot [color=mycolor1, mark=*, very thick]
  table[row sep=crcr]{%
0.1523	0.750459005875456\\
0.3046	0.596191423685869\\
0.6092	0.446909268874178\\
1.2184	0.311511853651896\\
2.4368	0.212286858575143\\
4.8736	0.170920259597358\\
9.7472	0.130774981102662\\
19.4944	0.102215688979688\\
38.9888	0.0837007647819049\\
77.9776	0.0688973968715804\\
155.9552	0.0639681393408589\\
311.9104	0.0636633801091856\\
623.8208	0.0549978360369636\\
1247.6416	0.0493357991611269\\
2495.2832	0.0409414676607647\\
4990.5664	0.0396764257700482\\
9981.1328	0.0338176959133359\\
};
\addlegendentry{disorder potential} 

\addplot [color=mycolor2, mark=square*, very thick]
table[row sep=crcr]{%
0.1523	0.99667485901409\\
0.3046	0.996394886000826\\
0.6092	0.995284285659181\\
1.2184	0.990968927001776\\
2.4368	0.975494197699726\\
4.8736	0.932165092454443\\
9.7472	0.858584142002615\\
19.4944	0.786587196865502\\
38.9888	0.735134924579325\\
77.9776	0.701579261307285\\
155.9552	0.679769044474597\\
311.9104	0.665410266252318\\
623.8208	0.655793323179808\\
1247.6416	0.649222769847125\\
2495.2832	0.644657018414904\\
4990.5664	0.641590479810963\\
9981.1328	0.639484501971978\\
};
\addlegendentry{periodic potential} 

\end{axis}
\end{tikzpicture}%
\caption{$L^1$-norm of ground states for sample disorder and random potential for fixed~$\eps=2^{-11}$, $\kappa_{1D}=6.824$, and varying $\max V_{1D}$. Underlying mesh size $h=2^{-16}$.} 
\label{fig:sect6:L1periodic}
\end{figure}

Finally, in Figure~\ref{fig:sect6:L1kappaVmax} we study the influence of $\max V_{1D}$ and $\kappa_{1D}$ on the localization for fixed~$\eps=2^{-7}$. In the left graph of Figure~\ref{fig:sect6:L1kappaVmax} we fix the strength of the potential with three different values and decrease the size of $\kappa_{1D}$ step by step. We observe that this leads to a slow but visibly increasing localization effect for the ground state (measured in the $L^1$-norm). When $\kappa_{1D}$ drops below a critical value, the localization stagnates and reaches a plateau. It is not possible to go beyond that localization level without touching the parameters of the potential. Hence, we can reconfirm that the weaker the particle interactions, the more pronounced is the localization. This observation is in accordance with previous findings.

In the right graph of Figure~\ref{fig:sect6:L1kappaVmax} the converse situation is depicted. Here we fix the parameter $\kappa_{1D}$ with three different values (where the largest is according to physical setup) and we increase the strength of the potential. We observe that increasing the strength of the potential, seems to have a stronger effect on the localization than decreasing the strength of particle interactions $\kappa_{1D}$. In our graphs it appears that doubling the strength of the potential leads to a reduction of the $L^1$-norm by a factor of around $1/4$. Hence, the critical amplitude of the potential is inverse proportional to the degree of location expressed through the $L^1$-norm.

These findings show that in order to expect localization of ground states in the setting of \cite{physics-paper-localization-BEC}, the strength of the disorder potential needs to be increased at least by a factor $4$ and the number of particles needs to be reduced by a factor $8$. This finding is again inline with the predictions of Theorem~\ref{thm:localizedGroundstate}. Potentials with an even smaller oscillation length are expected to further pronounce the localization effect. In terms of physical realizability, we expect that a suitable experimental setup can be realized, even though it is certainly challenging. Here, 
superlattices 
are a good alternative for corresponding physical experiments. 
Superlattices are pseudorandom potentials that are generated by the superposition of several optical lattices of different wavelengths and which can be additionally adjusted at different angles to increase the disorder, cf.~\cite{DZS03,RoB03,SDK06}.

In conclusion, our numerical experiments confirm the theoretical predictions about the localization and delocalization of ground states of BECs. Furthermore, by applying our identified physical data, we expect that an experimental observation of the localization may be possible. 

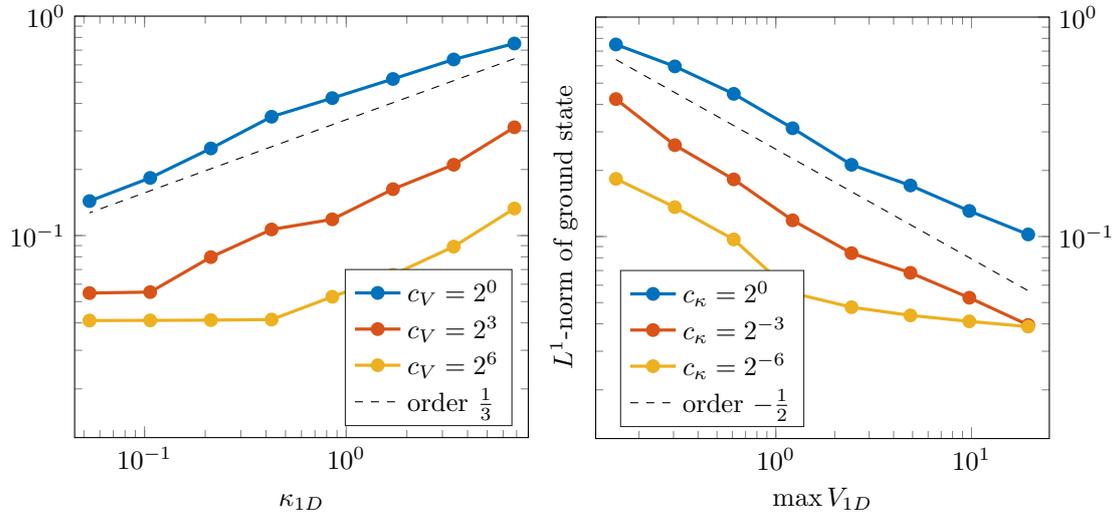
\begin{figure}
\centering
%
%
%
\begin{tikzpicture}

\begin{axis}[%
width=2.35in,
height=2.2in,
scale only axis,
xmode=log,
xmin=0.045,
xmax=8,
xminorticks=true,
xlabel={$\kappa_{1D}$},
xlabel style={below=0.04in},
ymode=log,
ymin=0.012,
ymax=1,
yminorticks=true,
ylabel={$L^1$-norm of ground state},
ylabel style={below=2.9in},
axis background/.style={fill=white},
legend style={at={(0.965,0.40)}, legend cell align=left, align=left, draw=white!15!black}
]

\addplot [color=mycolor1, very thick, mark=*]
  table[row sep=crcr]{%
6.824	0.750460635468644\\
3.412	0.634768204348943\\
1.706	0.517188228959751\\
0.853	0.423139471532081\\
0.4265	0.348013065215615\\
0.21325	0.249473370804324\\
0.106625	0.183171747350098\\
0.0533125	0.143583511297947\\
};
\addlegendentry{$c_V=2^0$}

\addplot [color=mycolor2, very thick, mark=*]
  table[row sep=crcr]{%
6.824	0.311512736329639\\
3.412	0.210214549788687\\
1.706	0.162756775204815\\
0.853	0.118549855627395\\
0.4265	0.106609672426493\\
0.21325	0.0797559310927967\\
0.106625	0.0552529727587049\\
0.0533125	0.0547035649434822\\
};
\addlegendentry{$c_V=2^3$}

\addplot [color=mycolor3, very thick, mark=*]
  table[row sep=crcr]{%
6.824	0.132860227477827\\
3.412	0.0890200877479873\\
1.706	0.0663156090001146\\
0.853	0.0526072629432014\\
0.4265	0.0414149095688623\\
0.21325	0.041174837232508\\
0.106625	0.0410499812561942\\
0.0533125	0.0409862828979995\\
};
\addlegendentry{$c_V=2^6$}


\addplot [color=black, dashed]
table[row sep=crcr]{%
	6.824	0.640021164758475\\
	3.412	0.508089995560753\\
	1.706	0.403188068928719\\
	0.853	0.320010582379238\\
	0.4265	0.253992567554879\\
	0.21325	0.201594034464360\\
	0.106625	0.160005291189619\\
	0.0533125	0.126996283777440\\
};
\addlegendentry{order~$\frac 13$}


\end{axis}
\end{tikzpicture}
%
%
%
\begin{tikzpicture}

\begin{axis}[%
width=2.35in,
height=2.2in,
scale only axis,
xmode=log,
xmin=0.12,
xmax=25,
xminorticks=true,
xlabel={$\max V_{1D}$},
ymode=log,
ymin=0.012,
ymax=1,
yminorticks=true,
yticklabel pos=right,
axis background/.style={fill=white},
legend style={at={(0.46,0.40)}, legend cell align=left, align=left, draw=white!15!black}
]
\addplot [color=mycolor1, very thick, mark=*]
  table[row sep=crcr]{%
  	0.1523	0.750460635468644\\
  	0.3046	0.596191996569447\\
  	0.6092	0.446911902914291\\
  	1.2184	0.311512737614114\\
  	2.4368	0.212287795537656\\
  	4.8736	0.170920135519283\\
  	9.7472	0.130775055585033\\
  	19.4944	0.102215701974682\\
};
\addlegendentry{$c_\kappa = 2^0$}

\addplot [color=mycolor2, very thick, mark=*]
  table[row sep=crcr]{%
0.1523	0.423139447590847\\
0.3046	0.260870740753995\\
0.6092	0.182125313933377\\
1.2184	0.118549855572084\\
2.4368	0.0839475129171451\\
4.8736	0.0683380021301337\\
9.7472	0.0526072620877142\\
19.4944	0.0395343920758554\\
};
\addlegendentry{$c_\kappa = 2^{-3}$}

\addplot [color=mycolor3, very thick, mark=*]
  table[row sep=crcr]{%
0.1523	0.183171462881745\\
0.3046	0.136095765303331\\
0.6092	0.0969894622392084\\
1.2184	0.0552529708475029\\
2.4368	0.0476206364128641\\
4.8736	0.0437072717414933\\
9.7472	0.0410499811910665\\
19.4944	0.0388503742018907\\
};
\addlegendentry{$c_\kappa = 2^{-6}$}



\addplot [color=black, dashed]
table[row sep=crcr]{%
	0.1523	0.640604606076423\\
	0.3046	0.452975861015976\\
	0.6092	0.320302303038212\\
	1.2184	0.226487930507988\\
	2.4368	0.160151151519106\\
	4.8736	0.113243965253994\\
	9.7472	0.080075575759553\\
	19.4944	0.056621982626997\\
};
\addlegendentry{order~$-\frac 12$}

\end{axis}
\end{tikzpicture}%
\caption{$L^1$-norm of ground states for sample disorder potentials for~$\eps=2^{-7}$. 
Left: over $\kappa_{1D}$ for various~$\max V_{1D}=0.1523\cdot c_V$. 
Right: over $\max V_{1D}$ for various $\kappa_{1D}=6.824\cdot c_\kappa$. Underlying mesh size $h=2^{-17}$.}  
\label{fig:sect6:L1kappaVmax}
\end{figure}
%
%
\section*{Acknowledgment}
We would like to thank the anonymous reviewers for their constructive and helpful suggestions, which considerably improved the readability of the paper. 
%
%
\newcommand{\etalchar}[1]{$^{#1}$}
\def\cprime{$'$}

%
%
\appendix
\section{Eigenvalue Bounds}\label{app:EVbounds}
This appendix collects the technical proofs of upper and lower eigenvalue bounds. 
%
\subsection{Proof of Lemma~\ref{lem:upperBounds} (upper bound)}\label{app:EVbounds:upper}
In each valley of length~$j\eps$ with $\ell_{1,\eps} \le j \le \Laeps$ we consider the first eigenfunction of the Laplacian with homogeneous Dirichlet boundary conditions. It is well-known that the corresponding eigenvalue equals~$\pi^2/(j\eps)^2$. Extending this function by zero, we obtain a function~$v\in\V$, which satisfies
\[
  \Vvert v\Vvert_\eps^2
  = \int_{\Oaeps} |v'|^2 + \Veps\, |v|^2 \dx
  \le \frac{\pi^2}{(j\eps)^2} \|v\|^2 + \frac{\alpha}{ |\log\eps|^{2}\, \eps^2 } \|v\|^2.
\] 
Hence, we conclude that~$\Vvert v\Vvert_\eps^2 \lesssim \max\{ \ell_{1,\eps}^{-2}, |\log\eps|^{-2} \}\, \eps^{-2} \|v\|^2 \simeq (\ell_{1,\eps} \eps)^{-2} \|v\|^2$. Since all these functions have a support in different valleys, they are all disjoint and thus, orthogonal. It remains to count the number of valleys. With $N_{\alpha,j}^{\eps} \simeq p_\alpha^{j}\eps^{-1}$ we obtain in total 
\[
  M_{1,\eps}
  := \sum_{j=\ell_{1,\eps}}^{\Laeps} N_{\alpha,j}^{\eps}
  \simeq \frac{1}{\eps} \sum_{j=\ell_{1,\eps}}^{\Laeps} p_\alpha^{j}
  \simeq \frac{p_\alpha^{\ell_{1,\eps}-1}}{\eps}
\]
valleys, which are larger or equal to~$\ell_{1,\eps}\eps$. Thus, we have constructed~$M_{1,\eps}$ orthogonal functions with an energy bounded by~$\calO((\ell_{1,\eps}\eps)^{-2})$, which implies the corresponding upper bound for~$\lambda_{\eps,M_{1,\eps}}$. 
%
\subsection{Proof of Lemma~\ref{lem:lowerBounds} (lower bound)}\label{app:EVbounds:lower}
The proof of the stated lower bound is based on the max-min principle for eigenvalues in Hilbert spaces~\cite[Ch.~1]{WeiS72}. Thus, we need to construct an $M_{2,\eps}$-dimensional subspace of~$\V$ such that any function in its complement satisfies an energy bound of the form~$\|v\|^2 \lesssim (\ell_{2,\eps}\eps)^2 \Vvert v\Vvert_\eps^2$.  

In contrast to the upper bound shown in the previous subsection, the construction here is based on non-peaks rather than valleys. This is crucial as the proof rests upon the fact that these intervals are surrounded either by peaks or the boundary. 
We consider the non-peaks of width~$j\eps$ with $\ell_{2,\eps} \le j \le \Lbeps$ and define a uniform partition of these intervals with a mesh size~$h \simeq \ell_{2,\eps} \eps$. On this partition we consider the nodal interpolation operator~$\Pi$, which interpolates in the nodes and linearly goes to zero outside of the non-peaks within the width~$\eps/2$. We emphasize that this surrounding strip is by construction part of~$\Obeps$. The image of this interpolation operator equals a finite element space of dimension 
\[
  M_{2,\eps}
  \simeq \sum_{j=\ell_{2,\eps}}^{\Lbeps} N_{\neg\beta,j}^{\eps} \Big( \big\lfloor j / \ell_{2,\eps} \big\rfloor +2\Big)
  \lesssim \frac{1}{\eps} \frac{\Lbeps}{\ell_{2,\eps}} \sum_{j=\ell_{2,\eps}}^{\Lbeps} (1-p_\beta)^{j}
  \simeq \frac{1}{\eps} \frac{\Lbeps}{\ell_{2,\eps}} (1-p_\beta)^{\ell_{2,\eps}-1}.
\]
Using $\Lbeps \simeq \log_{1-p_\beta} \eps$ from assumption~\ref{A1} and~$\ell_{2,\eps} \simeq \log_{1-p_\beta}(\eps^{q_2})$, this yields~$\Lb/\ell_{2,\eps} \simeq q_2^{-1}$, which is independent of $\eps$. Hence, we have $M_{2,\eps} \lesssim q_2^{-1} \eps^{q_2-1}$. Further, one can show similarly as in~\cite{AltHP20} that the constructed nodal interpolation operator satisfies~$\Vert v - \Pi v\Vert \lesssim \ell_{2,\eps}\,\eps\, \Vvert v \Vvert_\eps$ for all $v\in\V$. 
Thus, all functions~$v\in\ker\Pi\subseteq \V$, which is the complement of an~$M_{2,\eps}$-dimensional subspace, satisfy 
\[
  \| v \|
  = \| v - \Pi v\| 
  \lesssim \ell_{2,\eps}\, \eps\, \Vvert v \Vvert_\eps.
\]
The max-min principle then implies~$\lambda_{\eps,M_{2,\eps}} \ge \min_{v\in\ker\Pi} \Vvert v \Vvert_\eps^2 / \| v\|^2 \gtrsim (\ell_{2,\eps}\eps)^{-2}$. 
\end{document}